\newcommand{\acdef}[1]{\define{\acl{#1}} \textup{(\acs{#1})}\acused{#1}}	
\newcommand{\acdefp}[1]{\define{\aclp{#1}} \textup{(\acsp{#1})}\acused{#1}}	
\colorlet{MyRed}{Crimson!60!DarkRed}
\colorlet{MyBlue}{DodgerBlue!75!black}
\colorlet{MyGreen}{DarkGreen}
\colorlet{MyViolet}{DarkMagenta!80!MyBlue}
\colorlet{MyLightBlue}{DodgerBlue!20}
\colorlet{MyLightGreen}{MyGreen!20}
\colorlet{PrimalColor}{MyBlue}
\colorlet{PrimalFill}{MyLightBlue}
\colorlet{DualColor}{MyRed}
\colorlet{AlertColor}{MyRed}	
\colorlet{BadColor}{MyRed}	
\colorlet{GoodColor}{MyGreen}	
\colorlet{LinkColor}{MediumBlue}	
\colorlet{MacroColor}{MyViolet}
\colorlet{RevColor}{MediumBlue}	
\newcommand{\afterhead}{.\;}	
\newcommand{\para}[1]{\smallskip\paragraph{\textbf{#1\afterhead}}}
\setlist[1]{topsep=\smallskipamount,itemsep=\smallskipamount,left=\parindent}
\setlist[2]{left=0pt}
\newcommand{\EMAIL}[1]{\email{\href{mailto:#1}{#1}}}
\crefname{equation}{Eq.}{Eqs.}
\crefname{algo}{Algorithm}{Algorithms}
\crefname{assumption}{Assumption}{Assumptions}
\crefname{case}{Case}{Cases}
\crefname{figure}{Fig.}{Figs.}
\crefname{proofstep}{Step}{Steps}
\theoremstyle{plain}
\newtheorem{theorem}{Theorem}	
\newtheorem{lemma}{Lemma}	
\newtheorem{proposition}{Proposition}	
\newtheorem{propositionApp}{Proposition}[section]	
\newtheorem{lemmaApp}[propositionApp]{Lemma}	
\newtheorem{corollaryApp}[propositionApp]{Corollary}	
\newtheorem*{theorem*}{Theorem}	
\newtheorem*{corollary*}{Corollary}	
\theoremstyle{definition}
\newtheorem{definition}{Definition}	
\newtheorem{definitionApp}[propositionApp]{Definition}	
\newtheorem{example}{Example}	
\newtheorem*{definition*}{Definition}	
\newtheorem*{assumption*}{Assumptions}	
\newtheorem*{example*}{Example}	
\theoremstyle{remark}
\newtheorem*{remark*}{Remark}	
\newtheorem*{notation*}{Notation}	
\def\endenv{\hfill{\small$\blacklozenge$}}	
\newcounter{proofstep}
\newenvironment{proofstep}[1]
{\vspace{3pt}
\refstepcounter{proofstep}%
\par\textit{Step~\arabic{proofstep}:~#1}.\,}
{\smallskip}
\numberwithin{example}{section}	
\newcommand{\draft}[1]{#1}	
\newcommand{\define}[1]{\emph{\draft{#1}}}	
\newcommand{\explain}[1]{\tag*{\small\texttt{\%}\;#1}}
\newcommand{\newmacro}[2]{\newcommand{#1}{\draft{#2}}}	
\newcommand{\newop}[2]{\DeclareMathOperator{#1}{\draft{#2}}}	
\newcommand{\newopstar}[2]{\DeclareMathOperator*{#1}{\draft{#2}}}	
\newcommand{\eps}{\varepsilon}	
\newcommand{\pd}{\partial}	
\DeclarePairedDelimiter{\braces}{\{}{\}}	
\DeclarePairedDelimiter{\bracks}{[}{]}	
\DeclarePairedDelimiter{\parens}{(}{)}	
\DeclarePairedDelimiter{\of}{(}{)}	
\DeclarePairedDelimiter{\abs}{\lvert}{\rvert}	
\DeclarePairedDelimiter{\setof}{\{}{\}}	
\DeclarePairedDelimiterX{\setdef}[2]{\{}{\}}{#1:#2}	
\DeclarePairedDelimiterXPP{\exclude}[1]{\mathopen{}\setminus}{\{}{\}}{}{#1}
\DeclarePairedDelimiterX{\braket}[2]{\langle}{\rangle}{#1,#2}	
\DeclarePairedDelimiterX{\inner}[2]{\langle}{\rangle}{#1,#2}	
\DeclarePairedDelimiterX{\dualp}[2]{\langle}{\rangle}{#1,#2}	
\DeclarePairedDelimiter{\norm}{\lVert}{\rVert}	
\DeclarePairedDelimiterXPP{\dnorm}[1]{}{\lVert}{\rVert}{_{\ast}}{#1}	
\DeclarePairedDelimiterXPP{\onenorm}[1]{}{\lVert}{\rVert}{_{1}}{#1}	
\DeclarePairedDelimiterXPP{\twonorm}[1]{}{\lVert}{\rVert}{_{2}}{#1}	
\DeclarePairedDelimiterXPP{\supnorm}[1]{}{\lVert}{\rVert}{_{\infty}}{#1}	
\newcommand{\defeq}{\coloneqq}	
\newcommand{\eqdef}{\eqqcolon}	
\newcommand{\from}{\colon}	
\newmacro{\F}{\mathbb{F}}	
\newmacro{\N}{\mathbb{N}}	
\newmacro{\Z}{\mathbb{Z}}	
\newmacro{\Q}{\mathbb{Q}}	
\newmacro{\real}{x}	
\newmacro{\reals}{\mathbb{R}}	
\newmacro{\R}{\reals}	
\newmacro{\complex}{z}	
\newmacro{\complexes}{\mathbb{C}}	
\newmacro{\C}{\complexes}	
\newopstar{\argmax}{arg\,max}	
\newopstar{\argmin}{arg\,min}	
\newopstar{\intersect}{\bigcap}	
\newopstar{\union}{\bigcup}	
\newop{\aff}{aff}	
\newop{\bd}{bd}	
\newop{\bigoh}{\mathcal{O}}	
\newop{\card}{card}	
\newop{\cl}{cl}	
\newop{\conv}{conv}	
\newop{\crit}{crit}	
\newop{\curl}{curl}	
\newop{\diag}{diag}	
\newop{\diam}{diam}	
\newop{\dist}{dist}	
\newop{\diver}{div}	
\newop{\dom}{dom}	
\newop{\eig}{eig}	
\newop{\ess}{ess}	
\newop{\grad}{grad}	
\newop{\Hess}{Hess}	
\newop{\ind}{ind}	
\newop{\im}{im}	
\newop{\intr}{int}	
\newop{\Jac}{Jac}	
\newop{\one}{\mathds{1}}	
\newop{\proj}{pr}	
\newop{\prox}{prox}	
\newop{\rank}{rank}	
\newop{\relint}{ri}	
\newop{\sign}{sgn}	
\newop{\supp}{supp}	
\newop{\Sym}{Sym}	
\newop{\tr}{tr}	
\newop{\unif}{unif}	
\newop{\vol}{vol}	
\newcommand{\cf}{cf.\xspace}	
\newcommand{\eg}{e.g.,\xspace}	
\newcommand{\ie}{i.e.,\xspace}	
\newcommand{\vs}{vs.\xspace}	
\newcommand{\viz}{viz.\xspace}	
\newcommand{\textpar}[1]{\textup(#1\textup)}	
\newcommand{\eqdot}{\,.}	
\newcommand{\alt}[1]{#1'}		
\newcommand{\altalt}[1]{#1''}		
\newmacro{\ball}{\mathbb{B}}	
\newmacro{\sphere}{\mathbb{S}}	
\newmacro{\argdot}{\cdot}	
\newmacro{\dd}{\:d}	
\newmacro{\ddt}{\frac{d}{dt}}	
\newmacro{\del}{\partial}	
\newcommand{\insum}{\sum\nolimits}	
\newmacro{\const}{c}	
\newmacro{\Const}{C}	
\newmacro{\param}{\theta}	
\newmacro{\params}{\Theta}	
\newmacro{\coef}{\lambda}	
\newmacro{\fn}{f} 
\newmacro{\pexp}{p}	
\newmacro{\qexp}{q}	
\newmacro{\rexp}{r}	
\newmacro{\idx}{i}
\newmacro{\idxalt}{j}
\newmacro{\idxaltalt}{k}
\newmacro{\nIndices}{I}
\newmacro{\indices}{\mathcal{I}}
\newmacro{\point}{x}	
\newmacro{\pointalt}{\alt\point}	
\newmacro{\pointaltalt}{\altalt\point}	
\newmacro{\points}{\mathcal{X}}	
\newmacro{\intpoints}{\relint\points}	
\newmacro{\base}{p}	
\newmacro{\basealt}{q}	
\newmacro{\basealtalt}{u}	
\newmacro{\set}{\mathcal{S}}	
\newmacro{\borel}{\mathcal{B}}	
\newmacro{\closed}{\mathcal{C}}	
\newmacro{\cpt}{\mathcal{K}}	
\newmacro{\nhd}{\mathcal{U}}	
\newmacro{\open}{\mathcal{U}}	
\newmacro{\domain}{\mathcal{D}}	
\newmacro{\region}{\mathcal{R}}	
\newmacro{\interval}{\mathcal{I}}	
\newmacro{\rectangle}{\mathcal{R}}	
\newmacro{\vecspace}{\mathcal{V}}	
\newmacro{\subspace}{\mathcal{Z}}	
\newmacro{\dualspace}{\pspace^{\ast}}	
\newmacro{\vdim}{d}	
\newmacro{\unitvec}{u}	
\newmacro{\bvec}{e}	
\newmacro{\bvecs}{\mathcal{E}}	
\newmacro{\pvec}{z}	
\newmacro{\pvecalt}{\alt\pvec}	
\newmacro{\pvecaltalt}{\altalt\pvec}	
\newmacro{\pvecs}{\vecspace}	
\newmacro{\pspace}{\pvecs}	
\newmacro{\dvec}{w}	
\newmacro{\dvecalt}{\alt\pvec}	
\newmacro{\dvecaltalt}{\altalt\pvec}	
\newmacro{\dvecs}{\dualspace}	
\newmacro{\dspace}{\dvecs}	
\newmacro{\coord}{i}	
\newmacro{\coordalt}{j}	
\newmacro{\coordaltalt}{k}	
\newmacro{\nCoords}{d}	
\newmacro{\vecfield}{v}	
\newmacro{\vbound}{V}	
\newmacro{\cvx}{\mathcal{C}}	
\newmacro{\subd}{\partial}	
\newmacro{\subsel}{\nabla}	
\newmacro{\dir}{\kern0pt\subd\mkern-1mu{}}	
\newop{\tcone}{TC}	
\newop{\dcone}{\tcone^{\ast}}	
\newop{\ncone}{NC}	
\newop{\pcone}{PC}	
\newop{\hull}{\Delta}	
\newop{\Opt}{\mathsf{Opt}}	
\newop{\Sol}{\mathsf{Sol}}	
\newop{\gap}{\mathsf{Gap}}	
\newop{\orcl}{\mathsf{G}}	
\newmacro{\obj}{f}	
\newmacro{\objalt}{g}	
\newmacro{\sobj}{F}	
\newcommand{\sol}[1][\point]{#1^{\ast}}	
\newmacro{\gvec}{g}	
\newmacro{\oper}{A}	
\newmacro{\lips}{G}	
\newmacro{\strong}{\alpha}	
\newmacro{\smooth}{\beta}	
\newmacro{\radius}{r}
\newmacro{\Radius}{R}
\newmacro{\mfld}{\mathcal{M}}	
\newmacro{\gmat}{g}	
\newmacro{\gdist}{\dist_{\gmat}}	
\newmacro{\tanvec}{z}	
\newmacro{\form}{\omega}	
\newop{\ex}{\mathbb{E}}	
\newop{\prob}{\mathbb{P}}	
\newop{\Var}{\mathbb{V}}	
\newop{\simplex}{\Delta}	
\DeclarePairedDelimiterXPP{\exof}[1]{\ex}{[}{]}{}{
 #1}
\DeclarePairedDelimiterXPP{\exwrt}[2]{\ex_{#1}}{[}{]}{}{
 #2}
\DeclarePairedDelimiterXPP{\probof}[1]{\prob}{(}{)}{}{
 #1}
\DeclarePairedDelimiterXPP{\probwrt}[2]{\prob_{#1}}{[}{]}{}{
 #2}
\DeclarePairedDelimiterXPP{\oneof}[1]{\one}{\{}{\}}{}{#1}	
\newmacro{\rv}{X}	
\newmacro{\rvalt}{Y}	
\newmacro{\event}{E}	
\newmacro{\eventalt}{H}	
\newmacro{\seed}{\theta}	
\newmacro{\seeds}{\Theta}	
\newmacro{\pdist}{P}	
\newmacro{\history}{\mathcal{H}}	
\newmacro{\sample}{\omega}	
\newmacro{\samples}{\Omega}	
\newmacro{\filter}{\mathcal{F}}	
\newmacro{\probspace}{(\samples,\filter,\prob)}	
\newmacro{\mean}{\mu}	
\newmacro{\sdev}{\sigma}	
\newmacro{\variance}{\sdev^{2}}	
\newmacro{\covmat}{\Sigma}	
\newmacro{\seq}{a}	
\newmacro{\seqalt}{b}	
\newmacro{\seqaltalt}{c}	
\newmacro{\beforestart}{0}	
\newmacro{\start}{1}	
\newmacro{\afterstart}{2}	
\newmacro{\running}{\start,\afterstart,\dotsc}	
\newmacro{\halfrunning}{1/2,1,3/2,\dotsc}	
\newmacro{\run}{n}	
\newmacro{\runalt}{k}	
\newmacro{\runaltalt}{m}	
\newmacro{\offset}{r}
\newmacro{\nRuns}{T}	
\newmacro{\runs}{\mathcal{\nRuns}}	
\newmacro{\curve}{\gamma}	
\DeclarePairedDelimiterXPP{\curveof}[1]{\curve}{(}{)}{}{#1}	
\DeclarePairedDelimiterXPP{\dotcurveof}[1]{\dot\curve}{(}{)}{}{#1}	
\newmacro{\tstart}{0}	
\renewcommand{\time}{\draft{t}}	
\newmacro{\timealt}{\tau}	
\newmacro{\timealtalt}{s}	
\newmacro{\horizon}{T}	
\newmacro{\tend}{\horizon}	
\newmacro{\window}{[\tstart,\tend]}	
\newcommand{\new}[1][\point]{#1^{+}}	
\newmacro{\state}{\strat}	
\newmacro{\statealt}{\score}	
\newmacro{\statealtalt}{\scorediff}	
\newcommand{\init}[1][\state]{\draft{#1}_{\start}}	
\newcommand{\iter}[1][\state]{\draft{#1}_{\runalt}}	
\newcommand{\curr}[1][\state]{\draft{#1}_{\run}}	
\renewcommand{\next}[1][\state]{\draft{#1}_{\run+1}}	
\newcommand{\beforelead}[1][\state]{\draft{#1}_{\run-1/2}}		
\newcommand{\lead}[1][\state]{\draft{#1}_{\run+1/2}}		
\newcommand{\afterlast}[1][\state]{\draft{#1}_{\nRuns+1}}	
\newcommand{\curri}[1][\state]{\draft{#1}_{\play,\run}}	
\newcommand{\nexti}[1][\state]{\draft{#1}_{\play,\run+1}}	
\newcommand{\leadi}[1][\state]{\draft{#1}_{\play,\run+1/2}}		
\newmacro{\mat}{M}	
\newmacro{\hmat}{H}	
\newmacro{\ones}{\mathbf{1}}	
\newmacro{\eye}{I}	
\newmacro{\zer}{\mathbf{0}}	
\newmacro{\eigval}{\lambda}	
\newmacro{\eigvec}{u}	
\newop{\Nash}{NE}	
\newop{\CE}{CE}	
\newop{\CCE}{CCE}	
\newop{\NI}{NI}	
\newop{\brep}{br}	
\newop{\reg}{Reg}	
\newop{\preg}{\overline{Reg}}	
\newop{\val}{val}	
\newmacro{\play}{i}	
\newmacro{\others}{-\play}	
\newmacro{\playalt}{j}	
\newmacro{\playaltalt}{k}	
\newmacro{\nPlayers}{N}	
\newmacro{\players}{\mathcal{\nPlayers}}	
\newmacro{\pure}{\alpha}	
\newmacro{\purealt}{\beta}	
\newmacro{\purealtalt}{\gamma}	
\newmacro{\purebench}{\hat{\pure}}	
\newmacro{\nPures}{A}	
\newmacro{\pures}{\mathcal{\nPures}}	
\newmacro{\effpures}{\hat{\pures}} 
\newmacro{\findex}{\play \pure_{\play}} 
\newmacro{\benchfindex}{\play \purebench_{\play}} 
\newmacro{\strat}{x}	
\newmacro{\stratalt}{\alt\strat}	
\newmacro{\strataltalt}{\altalt\strat}	
\newmacro{\strats}{\mathcal{X}}	
\newcommand{\stratsi}[1][\play]{\strats_{\play}}
\newmacro{\intstrats}{\relint{\strats}}	
\newmacro{\corr}{z}		
\newmacro{\corralt}{\alt\corr}		
\newmacro{\corrs}{\mathcal{Z}}		
\newcommand{\eq}{\sol[\strat]}	
\newmacro{\bench}{\base}
\newmacro{\pay}{u}	
\newmacro{\payns}{k}	
\newmacro{\loss}{\ell}	
\newmacro{\payv}{v}	
\newmacro{\payfield}{v}	
\newmacro{\game}{\mathcal{G}}	
\newmacro{\gamefull}{\game(\players,\points,\pay)}	
\newmacro{\fingame}{\Gamma}	
\newmacro{\fingamefull}{\Gamma(\players,\pures,\pay)}	
\newmacro{\mixgame}{\Delta(\fingame)}	
\newmacro{\minmax}{L}	
\newmacro{\minvar}{\point_{1}}	
\newmacro{\minvaralt}{\alt\minvar}	
\newmacro{\minvars}{\points_{1}}	
\newmacro{\maxvar}{\point_{2}}	
\newmacro{\maxvaralt}{\alt\maxvar}	
\newmacro{\maxvars}{\points_{2}}	
\newmacro{\pot}{\phi}	
\newmacro{\potgame}{\fingame_{\!\textup{pot}}}	
\newmacro{\potfield}{P}	
\newmacro{\potrepfield}{\repfield_{\textup{pot}}}	
\newmacro{\harmgame}{\fingame_{\!\textup{harm}}}	
\newmacro{\harmfield}{H}	
\newmacro{\gauge}{\const}	
\newmacro{\nsgame}{\mathrm{NS}}	
\newmacro{\nsfield}{\Const}	
\newmacro{\incgame}{\fingame_{\!\textup{inc}}}	
\newmacro{\incfield}{B}	
\newmacro{\pureA}{\mathtt{A}}	
\newmacro{\pureB}{\mathtt{B}}	
\newmacro{\hreg}{h}	
\newmacro{\hconj}{\hreg^{\ast}}	
\newmacro{\breg}{D}	
\newmacro{\mprox}{P}	
\newmacro{\mirror}{Q}	
\newmacro{\effmirror}{\hat{\mirror}}	
\newmacro{\fench}{F}	
\newmacro{\hstr}{K}	
\newmacro{\hrange}{H}	
\newmacro{\proxdom}{\points_{\hreg}}	
\DeclarePairedDelimiterXPP{\bregof}[2]{\breg}{(}{)}{}{#1,#2}	
\DeclarePairedDelimiterXPP{\proxof}[2]{\mprox_{#1}}{(}{)}{}{#2}	
\newmacro{\zone}{\mathbb{D}}	
\newop{\Eucl}{\Pi}	
\newop{\logit}{\Lambda}	
\newop{\dkl}{KL}	
\newmacro{\dpoint}{y}	
\newmacro{\dpointalt}{\alt\dpoint}	
\newmacro{\dpointaltalt}{\altalt\dpoint}	
\newmacro{\dpoints}{\mathcal{Y}}	
\newmacro{\dstate}{Y}	
\newmacro{\flowmap}{\Theta}	
\DeclarePairedDelimiterXPP{\flowof}[2]{\flowmap_{#1}}{(}{)}{}{#2}	
\newmacro{\traj}{x}	
\newmacro{\difftraj}{\dot\traj}	
\DeclarePairedDelimiterXPP{\trajof}[1]{\traj}{(}{)}{}{#1}	
\DeclarePairedDelimiterXPP{\difftrajof}[1]{\difftraj}{(}{)}{}{#1}	
\newmacro{\trajalt}{y}	
\newmacro{\trajaltalt}{z}	
\newmacro{\signal}{\hat\vecfield}	
\newmacro{\step}{\gamma}	
\newmacro{\learn}{\eta}	
\newmacro{\regweight}{\lambda}	
\newmacro{\error}{Z}	
\newmacro{\noise}{U}	
\newmacro{\bias}{b}	
\newmacro{\sbound}{M}	
\newmacro{\nbound}{\sdev}	
\newmacro{\bbound}{B}	
\newmacro{\snoise}{\xi}	
\newmacro{\sbias}{\chi}	
\newmacro{\mix}{\delta}	
\newmacro{\perturb}{z}	
\newmacro{\pivot}{\point}	
\newmacro{\vertex}{v}	
\newmacro{\vertexalt}{w}	
\newmacro{\vertexaltalt}{u}	
\newmacro{\nVertices}{V}	
\newmacro{\vertices}{\mathcal{V}}	
\newmacro{\edge}{e}	
\newmacro{\edgealt}{\alt\edge}	
\newmacro{\edgealtalt}{\altalt\edge}	
\newmacro{\nEdges}{E}	
\newmacro{\edges}{\mathcal{\nEdges}}	
\newmacro{\graph}{\mathcal{G}}	
\newmacro{\graphfull}{\graph(\vertices,\edges)}	
\newmacro{\source}{O}	
\newmacro{\sink}{D}	
\newmacro{\pair}{i}	
\newmacro{\pairalt}{j}	
\newmacro{\pairaltalt}{k}	
\newmacro{\nPairs}{N}	
\newmacro{\pairs}{\mathcal{\nPairs}}	
\newmacro{\route}{p}	
\newmacro{\routealt}{\alt\route}	
\newmacro{\routealtalt}{\altalt\route}	
\newmacro{\nRoutes}{P}	
\newmacro{\routes}{\mathcal{\nRoutes}}	
\newmacro{\flow}{f}	
\newmacro{\flowalt}{\alt\flow}	
\newmacro{\flowaltalt}{\altalt\flow}	
\newmacro{\flows}{\mathcal{F}}	
\newmacro{\load}{x}	
\newmacro{\loadalt}{\alt\load}	
\newmacro{\loadaltalt}{\altalt\load}	
\newmacro{\loads}{\mathcal{X}}	
\newmacro{\meas}{\mu}	
\newmacro{\meascod}{\reals_{++}}	
\newmacro{\mass}{m}	
\DeclarePairedDelimiterXPP{\stratof}[1]{\strat}{(}{)}{}{#1}	
\DeclarePairedDelimiterXPP{\stratofi}[1]{\strat_{\play}}{(}{)}{}{#1}	
\DeclarePairedDelimiterXPP{\stratofx}[2]{\strat_{#1}}{(}{)}{}{#2}	
\DeclarePairedDelimiterXPP{\corrof}[1]{\corr}{(}{)}{}{#1}	
\DeclarePairedDelimiterXPP{\corrofx}[2]{\corr_{#1}}{(}{)}{}{#2}	
\newmacro{\ambient}{\prod_{\playalt} \R^{\pures_{\playalt}}}
\newcommand{\ambienti}[1][\play]{\R^{\pures_{#1}}}
\newmacro{\score}{y}	
\newmacro{\scorealt}{\alt\score}	
\newmacro{\scorealtalt}{\altalt\score}	
\newmacro{\scores}{\mathcal{Y}}	
\newmacro{\primalambientplay}{\R^{\pures_{\play}}}
\newmacro{\dualambientplay}{\left(\primalambientplay\right)^{\ast}}	
\DeclarePairedDelimiterXPP{\scoreof}[1]{\score}{(}{)}{}{#1}	
\DeclarePairedDelimiterXPP{\scoreofi}[1]{\score_{\play}}{(}{)}{}{#1}	
\DeclarePairedDelimiterXPP{\scoreofx}[2]{\score_{#1}}{(}{)}{}{#2}	
\DeclarePairedDelimiterXPP{\dotscoreof}[1]{\dot\score}{(}{)}{}{#1}	
\DeclarePairedDelimiterXPP{\dotscoreofi}[1]{\dot\score_{\play}}{(}{)}{}{#1}	
\DeclarePairedDelimiterXPP{\dotscoreofx}[2]{\dot\score_{#1}}{(}{)}{}{#2}	
\newmacro{\diffquot}{\Pi} 
\newmacro{\scorediff}{z} 
\newmacro{\scorediffalt}{\alt\scorediff} 
\newmacro{\scorediffsdynamics}{Z} 
\newmacro{\scorediffs}{\mathcal{\scorediffsdynamics}} 
\newmacro{\diffdyn}{\scorediffsdynamics} 
\newmacro{\energy}{E}	
\newmacro{\lyap}{L}	
\newmacro{\centroid}{q}	
\newmacro{\present}{a}
\newmacro{\past}{b}
\newmacro{\reflect}{c}
\newmacro{\limstrat}{\hat\strat}
\newmacro{\limscore}{\hat\score}
\newop{\sgn}{sgn} 
\newmacro{\vfield}{X}
\newcommand{\flowt}[2]{\flowmap_{#1}(#2)} 
\newmacro{\postime}{[0, \infty)}
\DeclarePairedDelimiterXPP{\pointof}[1]{\point}{(}{)}{}{#1}	
\newmacro{\pstart}{\point_{\tstart}} 
\newmacro{\pstarts}{\mathcal{U}} 
\newmacro{\mfldalt}{\alt\mfld}
\newmacro{\payalt}{\alt\pay} 
\newmacro{\payaltalt}{\alt{\alt{\pay}}} 
\newmacro{\findexalt}{\play \purealt_{\play}} 
\newmacro{\findexothers}{\others \pure_{\others}} 
\newmacro{\findexco}{\play \pure_{\others}} 
\newmacro{\bindex}{\benchfindex} 
\newmacro{\ppure}{ \pure_{\play}; \pure_{\others} }
\newcommand{\dev}[3]{ \pay_{#1}(#2_{#1}; #2_{-#1}) - \pay_{#1}(#3_{#1}; #2_{-#1}) }
\newmacro{\comeas}{\gamma} 
\newmacro{\wei}{\mass} 
\newcommand{\lnorm}[1]{\abs{#1}} 
\newmacro{\normeas}{\meas^{\ast}}	
\newmacro{\unimeas}{\bar{\meas}} 
\newmacro{\heq}{\centroid} 
\newmacro{\stratcenter}{(\wei, \heq)} 
\DeclarePairedDelimiterXPP{\scorediffof}[1]{\scorediff}{(}{)}{}{#1}	
\DeclarePairedDelimiterXPP{\scorediffofi}[1]{\scorediff{\play}}{(}{)}{}{#1}	
\newmacro{\vambient}{\mathcal{V}} 
\newmacro{\vambientdual}{\vambient^{\ast}} 
\newmacro{\primaldyn}{X}
\newmacro{\dualdyn}{Y}
\newmacro{\PI}{\diffquot} 
\newmacro{\Pii}{\diffquot_{\play}} 
\newmacro{\strati}{\strat_{\play}} 
\newmacro{\scori}{\score_{\play}} 
\newmacro{\scorin}{\score_{\play,\run}}
\newmacro{\mirrori}{\mirror_{\play}} 
\newmacro{\diffi}{\scorediff_{\play}} 
\newmacro{\diffs}{\scorediffs} 
\newmacro{\diffsi}{\scorediffs_{\play}} 
\newmacro{\piscore}{\scorediff} 
\newop{\jac}{Jac}
\newmacro{\puri}{\pure_{\play}} 
\newmacro{\puresi}{\pures_{\play}} 
\newmacro{\purialt}{\purealt_{\play}} 
\newmacro{\Zdyn}{\diffdyn} 
\newmacro{\effmirrori}{\effmirror_{\play}}
\newmacro{\henergy}{\fench_{\wei, \heq}} 
\newcommand{\restatableeq}[3]{\label{#3}#2\gdef#1{#2\tag{\ref{#3}}}}
\newacro{DO}{dual orbit} 
\newacro{TP}{trajectory of play} 
\newacro{PR}{Poincaré recurrent} 
\begin{document}


\title
[No-Regret Learning in Harmonic Games]
{No-Regret Learning in Harmonic Games:\\
Extrapolation in the Face of Conflicting Interests}	

\author
[Legacci]
{Davide Legacci$^{c,\ast}$}
\address
{$^{c}$%
Corresponding author.}
\address
{$^{\ast}$%
Univ. Grenoble Alpes, CNRS, Inria, Grenoble INP, LIG, 38000 Grenoble, France.}
\EMAIL{davide.legacci@univ-grenoble-alpes.fr}
\author
[Mertikopoulos]
{Panayotis Mertikopoulos$^{\ast}$}
\EMAIL{panayotis.mertikopoulos@imag.fr}
\author
[Papadimitriou]
{\\Christos H.~Papadimitriou$^{\diamond,\sharp}$}
\address
{$^{\diamond}$%
Columbia University, NYC.}
\address
{$^{\sharp}$%
Archimedes/Athena RC, Greece.}
\EMAIL{christos@columbia.edu}
\author
[Piliouras]
{Georgios Piliouras$^{\P}$}
\address
{$^{\S}$%
Google DeepMind, London, UK.}
\EMAIL{gpil@google.com}
\author
[Pradelski]
{Bary~ S.~R.~Pradelski$^{\S}$}
\address
{$^{\S}$%
CNRS, Maison Française d'Oxford, 2–10 Norham Road, Oxford, OX2 6SE, United Kingdom.}
\EMAIL{bary.pradelski@cnrs.fr}

\subjclass[2020]{
Primary 91A10, 91A26;
secondary 68Q32, 68T02.}
\keywords{%
Harmonic games;
no-regret learning;
Poincaré recurrence;
\acl{FTRL};
extrapolation;
Nash equilibrium.}

\thanks{The authors are grateful to Victor Boone and Marco Scarsini for fruitful discussions.}

\newacro{LHS}{left-hand side}
\newacro{RHS}{right-hand side}
\newacro{iid}[i.i.d.]{independent and identically distributed}
\newacro{wlog}[w.l.o.g.]{without loss of generality}
\newacro{lsc}[l.s.c.]{lower semi-continuous}
\newacro{ODE}{ordinary differential equation}
\newacro{VI}{variational inequality}
\newacroplural{VI}{variational inequalities}

\newacro{NE}{Nash equilibrium}
\newacroplural{NE}[NE]{Nash equilibria}
\newacro{CE}{correlated equilibrium}
\newacroplural{CE}[CE]{correlated equilibria}
\newacro{CCE}{coarse correlated equilibrium}
\newacroplural{CCE}[CCE]{coarse correlated equilibria}
\newacro{PE}{preference-equivalent}
\newacro{SE}{strategically equivalent}

\newacro{ZSG}{zero-sum game}
\newacro{2ZSG}{two-player zero-sum game}
\newacro{NZSG}{$\nPlayers$-player zero-sum game}
\newacro{PG}{potential game}
\newacro{HG}{harmonic game}
\newacro{GHG}{generalized harmonic game}
\newacro{UHG}{uniform harmonic game}

\newacro{FTL}{follow-the-leader}
\newacro{FTRL}{follow-the-regularized-leader}
\newacro{FTRL+}{extrapolated \acs{FTRL}}
\newacro{ExtraFTRL}{Extra-Step \acs{FTRL}}
\newacro{OptFTRL}{optimistic \acs{FTRL}}
\newacro{FTLL}{follow-the-linearized-leader}
\newacro{RLD}{regularized learning dynamics}
\newacro{Q}{regularized best response}

\newacro{MD}{mirror descent}
\newacro{DA}{dual averaging}
\newacro{DE}{dual extrapolation}
\newacro{MP}{mirror-prox}
\newacro{OMD}{optimistic mirror descent}

\newacro{GD}{gradient descent}
\newacro{EG}{extra-gradient}
\newacro{OG}{optimistic gradient}

\newacro{EW}{exponential\,/\,multiplicative weights}
\newacro{EXP3}[\ExpThree]{exponential weights algorithm for exploration and exploitation}
\newacro{MWU}{multiplicative weights update}
\newacro{OMW}{optimistic multiplicative weights}
\newacro{IWE}{importance-weighted estimator}


\newmacro{\HG}{\fingame_{\meas}}	
\newmacro{\HGfull}{\HG(\players,\pures,\pay)}	

\newacro{GAN}{generative adversarial network}

\newacro{PG}{potential game}
\newacro{HG}{harmonic game}

\newacro{EW}{exponential\,/\,multiplicative weights}
\newacro{RD}{replicator dynamics}

\begin{abstract}
%
%
The long-run behavior of multi-agent learning \textendash\ and, in particular, \emph{no-regret learning} \textendash\ is relatively well-understood in potential games, where players have aligned interests.
By contrast, in harmonic games \textendash\ the strategic counterpart of potential games, where players have \textit{conflicting} interests \textendash\ very little is known outside the narrow subclass of $2$-player zero-sum games with a fully-mixed equilibrium.
Our paper seeks to partially fill this gap by focusing on the full class of (generalized) harmonic games and examining the convergence properties of \acf{FTRL}, the most widely studied class of no-regret learning schemes.
As a first result, we show that the continuous-time dynamics of \ac{FTRL} are \emph{Poincaré recurrent}, that is, they return arbitrarily close to their starting point infinitely often, and hence fail to converge.
In discrete time, the standard, ``vanilla'' implementation of \ac{FTRL} may lead to even worse outcomes, eventually trapping the players in a perpetual cycle of best-responses.
However, if \ac{FTRL} is augmented with a suitable extrapolation step \textendash\ which includes as special cases the optimistic and mirror-prox variants of \ac{FTRL} \textendash\ we show that learning converges to a \acl{NE} from any initial condition, and all players are guaranteed at most $\bigoh(1)$ regret.
These results provide an in-depth understanding of no-regret learning in harmonic games, nesting prior work on $2$-player zero-sum games, and showing at a high level that harmonic games are the canonical complement of potential games, not only from a strategic, but also from a dynamic viewpoint.
\end{abstract}
\acresetall	

\allowdisplaybreaks	
\acresetall	
\maketitle

\section{Introduction}
\label{sec:introduction}

The question of ``as if'' rationality \textendash\ that is, whether selfishly-minded, myopic agents may learn to behave ``\define{as if}'' they were fully rational \textendash\ has been one of the cornerstones of non-cooperative game theory, and for good reason.
Especially in modern applications of game theory to machine learning and data science \textendash\ from online ad auctions to recommender systems and multi-agent reinforcement learning \textendash\ the standard postulates of rationality (knowledge of the game, capacity to compute an equilibrium, flawless execution of equilibrium strategies, common knowledge of rationality, etc.) are almost never met in practice;
as a result, game-theoretic predictions that rely on these assumptions are likewise put into question.
By contrast, given the ease of implementing and deploying cheap, computationally efficient learning algorithms and policies at a large scale, it is often more logical to turn to the policy being deployed as the object of interest.
The aim is then to understand its long-run behavior \textendash\ and, in particular, whether it ultimately leads to equilibrium.

A major obstacle in this approach is the complexity of computing a \acl{NE}, a problem which is known to be complete for \textsf{PPAD} \textendash\ and hence intractable \textendash\ by the seminal work of \citet{DGP09-acm}.
This result implies that it is not plausible to expect any algorithm to converge to \acl{NE} in \emph{all} games (at least, not in a reasonable amount of time), so it dovetails naturally with the impossibility results of \citet{HMC00,HMC06} who showed that there are no uncoupled learning dynamics that converge to \acl{NE} in all games.
On that account, it is natural to ask in which classes of games we can expect a learning algorithm to converge, in which classes we cannot, and under what conditions.

Perhaps the most well-behaved class of games in terms of learning is the class of \define{potential games} \cite{MS96,San01}, where players have \emph{common} interests \textendash\ not necessarily driving them to play the same strategy, but in the sense that externalities are symmetric and aligned along a common objective (the potential of the game).
In this class of games, the behavior of learning dynamics \textendash\ and, in particular, no-regret learning \cite{daskalakis2021nearoptimal,GPD20,LZMJ20,MZ19,NEURIPS2020_db346ccb,MHC24,HAM21,HACM22,SALS15} \textendash\ are relatively well understood, and there is a wide range of equilibrium convergence results, from continuous  to discrete time, and even with bandit, payoff-based feedback \cite{HS98,San01,HCM17}.

By contrast, in the presence of \emph{conflicting} interests, the situation can be quite different.
In \aclp{2ZSG} with a fully-mixed equilibrium \textendash\ such as Matching Pennies \textendash\ the continuous-time dynamics of no-regret, regularized learning are recurrent in the sense of Poincaré \textendash\ that is, the induced trajectory of play returns arbitrarily close to where it started infinitely many times \cite{PS14,MPP18}.
In discrete time, the situation becomes more complicated:
the vanilla version of \ac{FTRL} \textendash\ the most widely studied family of no-regret algorithms \textendash\ is no longer recurrent, but it diverges away from equilibrium in the same class of games \cite{MLZF+19,GBVV+19}.
On the other hand,
if players employ an optimistic / extra-gradient variant of \ac{FTRL}, the induced trajectory of play converges to equilibrium \cite{MLZF+19,pmlr-v151-fasoulakis22a} and, under certain conditions, it is even possible to show that it converges at a geometric rate \cite{WLZL21}.

At the same time, zero-sum games may also admit a potential function, so it is not possible to predict the outcome of a learning process based on where it stands along the potential\,/\,zero-sum axis.
The non-trivial intersection of these classes means that potential and zero-sum games are \emph{not} complementary, and this, not only from a strategic, but also from a dynamic viewpoint.
Instead, the true strategic complement of potential games is the class of \define{harmonic games}.
This class was first considered by \citet{CMOP11}, who established a remarkable decomposition result:
Every game in normal form can be decomposed as the sum of a potential game and a harmonic game, and this decomposition is unique up to affine transformations that do not alter the equilibrium outcomes of the game.
In particular, the class of potential and harmonic games intersect trivially (up to strategic equivalence), and all \aclp{2ZSG} with an interior equilibrium are harmonic, thus lending credence to the fact that it is harmonic games, not \aclp{ZSG}, that correctly capture the notion of conflicting interests in this context.
This raises the following natural question:
\begin{center}
\itshape
What is the behavior of no-regret algorithms and dynamics in harmonic games?
\end{center}

Except for a very recent paper by \citet{LMP24} (which we discuss below), almost nothing is known on this question.
Accordingly, against this backdrop, our contributions can be summarized as follows:
\begin{enumerate}
\item
Starting with a continuous-time model of no-regret learning, we show that all \ac{FTRL} dynamics are Poincaré recurrent in all harmonic games.
This generalizes and extends the recent result of \citet{LMP24} for the \acl{RD} in uniform harmonic games (a subclass of harmonic games in which the uniform distribution is always a \acl{NE}).%
\footnote{In more detail, the way that \citet{LMP24} obtained their result hinges on the so-called \emph{Shahshahani metric}, a choice which is essentially ``mandated'' by the structure of the replicator dynamics.
Specifically, the key property of the Shahshahani metric is that incompressibility of the replicator field is equivalent to the underlying game being uniformly harmonic;
however, finding a variant of the Shahshahani metric attuned to \ac{FTRL} seems to be a formidable task, and likewise for non-uniform harmonic games.
Because of this, the ``incompressibility'' approach of \cite{LMP24} does not seem applicable to our setting \textendash\ at least, not in a straightforward way.}
\item
In discrete-time models of learning, the standard implementation of \ac{FTRL} cannot be expected to converge (since it fails to do so in Matching Pennies).
To correct this behavior, we consider a flexible algorithmic template, inspired by \citet{AIMM24} and dubbed \acdef{FTRL+}, which augments \ac{FTRL} with a forward-looking, extrapolation step (including as special cases the optimistic and extra-step variants of \ac{FTRL}, \cf \cref{sec:algorithms}).
We then establish the following results:
\begin{enumerate}
\item
Under \acl{FTRL+}, players are guaranteed constant individual regret (so, as a consequence, the players' empirical frequency of play converges to \acl{CCE} at a rate of $\bigoh(1/\nRuns)$).%
\footnote{We clarify here that ``constant'' refers to the horizon $\nRuns$ of play;
the dependence on the number of actions may be logarithmic or worse (depending on the specific regularized learning scheme employed by the players).}
This should be contrasted with the results of \cite{daskalakis2021nearoptimal,farina2022nearoptimal} who showed that players can achieve \emph{polylogarithmic} regret in any game (finite or convex).
\item
The induced trajectory of play converges to \acl{NE} from any initial condition.
\end{enumerate}
\end{enumerate}
Our results aim to provide an in-depth understanding of no-regret learning in harmonic games, nesting prior work on $2$-player zero-sum games \textendash\ from Poincaré recurrence \cite{PS14,MPP18} to constant regret \cite{HAM21} and convergence under optimistic / extra-gradient schemes \cite{MLZF+19,WLZL21,DP19,pmlr-v151-fasoulakis22a,GBVV+19}.
In partiucular, at a high level, our results show that harmonic games are the canonical complement of potential games, not only from a strategic, but also from a dynamic, learning viewpoint.
\acresetall	

\section{Preliminaries}
\label{sec:prelims}

\subsection{Preliminaries on finite games}

Throughout the sequel, we will work with \define{finite games in normal form}.
Formally, such games consist of
\begin{enumerate*}
[\upshape(\itshape i\hspace*{.5pt}\upshape)]
\item
a finite set of \define{players} $\play\in\players \equiv \{1,\dotsc,\nPlayers\}$;
\item
a finite set of \define{actions} $\pures_{\play}$ per player $\play\in\players$;
and
\item
an ensemble of \define{payoff functions} $\pay_{\play} \from \prod_{\playalt}\pures_{\playalt} \to \R$, each determining the reward $\pay_{\play}(\pure)$ of player $\play\in\players$ in a given action profile $\pure = (\pure_{1},\dotsc,\pure_{\nPlayers})$.
\end{enumerate*}
Putting everything together, we will write $\pures \defeq \prod_{\play} \pures_{\play}$ for the game's \define{action space} and $\fingame \equiv \fingamefull$ for the game with primitives as above.

During play, each player selects an action according to some \define{mixed strategy}, that is, a probability distribution $\strat_{\play}$ over $\pures_{\play}$ which assigns probability $\strat_{\play\pure_{\play}}$ to $\pure_{\play}\in\pures_{\play}$.
In a slight abuse of notation, if $\strat_{\play}$ assigns all probability mass to some action $\pure_{\play}\in\pures_{\play}$ (that is, $\strat_{\play\pure_{\play}} = 1$), we will identify $\strat_{\play}$ with $\pure_{\play}$ and we will call it \define{pure}.
We will also write
$\strats_{\play} \defeq \simplex(\pures_{\play}) \subseteq \primalambientplay$ for the mixed strategy space of player $\play$,
$\strat = (\strat_{1},\dotsc,\strat_{\nPlayers})$ for the \define{strategy profile} collecting the strategies of all players,
and
$\strats \defeq \prod_{\play}\strats_{\play}$ for the game's \define{strategy space}.

The \define{mixed payoff} of player $\play$ under a mixed strategy profile $\strat\in\strats$ may then be written as
\begin{equation}
\label{eq:payfield}
\pay_{\play}(\strat)
	= \exwrt{\pure\sim\strat}{\pay_{\play}(\pure)}
	= \sum_{\pure\in\pures} \pay_{\play}(\pure) \, \strat_{\pure}
	= \sum_{\pure_{\play}\in\pures_{\play}} \pay_{\play}(\pure_{\play};\strat_{-\play}) \, \strat_{\play\pure_{\play}}
\end{equation}
where
$\strat_{\pure} \defeq \prod_{\play} \strat_{\play\pure_{\play}}$ denotes the joint probability of $\pure = (\pure_{1},\dotsc,\pure_{\nPlayers}) \in \pures$ under $\strat \in \strats$,
and, in standard game-theoretic notation,
we write $(\strat_{\play};\strat_{-\play}) = (\strat_{1},\dotsc,\strat_{\play},\dotsc,\strat_{\nPlayers})$ for the profile where player $\play$ plays $\strat_{\play}\in\strats_{\play}$ against the strategy $\strat_{-\play} \in \strats_{-\play} \defeq \prod_{\playalt\neq\play} \strats_{\playalt}$ of all other players.
We also respectively define the \define{individual payoff field} of player $\play$ and the \define{game's payoff field} as
\begin{equation}
\label{eq:payv}
\payfield_{\play}(\strat)
	= (\pay_{\play}(\pure_{\play};\strat_{-\play}))_{\pure_{\play}\in\pures_{\play}}
	\quad
	\text{and}
	\quad
\payfield(\strat)
	= (\payfield_{1}(\strat),\dotsc,\payfield_{\nPlayers}(\strat))
\end{equation}
so $\pay_{\play}(\strat) = \sum_{\pure_{\play}\in\pures_{\play}} \payfield_{\play\pure_{\play}}(\strat) \strat_{\play\pure_{\play}} \equiv \dualp{\payfield_{\play}(\strat)}{\strat_{\play}}$, where
$\dualp{\argdot}{\argdot}$ is the standard duality pairing on $\ambienti$.
By multilinearity, each player's individual payoff field is Lipschitz continuous on $\strats$, and we will write $\lips_{\play}$ for its Lipschitz modulus, that is
\begin{equation}
\label{eq:Lips}
\dnorm{\payfield_{\play}(\stratalt) - \payfield_{\play}(\strat)}
	\leq \lips_{\play} \norm{\stratalt - \strat}
	\quad
	\text{for all $\strat,\stratalt\in\strats$}.
\end{equation}
\begin{remark*}
In the above and throughout, $\norm{\cdot}$ denotes an ambient norm on $\ambienti$ (usually the $L^{1}$ norm), and $\dnorm{\cdot}$ is the corresponding dual norm (usually the $L^{\infty}$ norm).
To simplify notation, we will not carry the player index $\play$ in $\norm{\cdot}$, and we will instead rely on the context to resolve any ambiguities.
\end{remark*}

In terms of solution concepts, we will focus almost exclusively on the notion of a \acdef{NE}, \ie a strategy profile $\eq\in\strats$ that is unilaterally stable in the sense that
\begin{equation}
\label{eq:Nash}
\tag{NE}
\pay_{\play}(\eq)
	\geq \pay_{\play}(\strat_{\play};\eq_{-\play})
	\quad
	\text{for all $\strat_{\play}\in\strats_{\play}$, $\play\in\players$}
	\eqdot
\end{equation}
Equivalently, \eqref{eq:Nash} can be expressed in terms of the game's payoff field as a variational inequality of the form
\begin{equation}
\tag{VI}
\label{eq:VI}
\dualp{\payfield(\eq)}{\strat - \eq}
	\leq 0
	\quad
	\text{for all $\strat\in\strats$}
	\eqdot
\end{equation}
Thus, writing $\supp(\eq_{\play}) = \setdef{\pure_{\play}\in\pures_{\play}}{\strat_{\play\pure_{\play}} > 0}$ for the \define{support} of $\eq_{\play}$, it follows that $\eq$ is a \acl{NE} if and only if $\pay_{\play}(\pure_{\play};\eq_{-\play}) \geq \pay_{\play}(\purealt_{\play};\eq_{-\play})$ for all $\pure_{\play}\in\supp(\eq_{\play})$ and all $\purealt_{\play}\in\pures_{\play}$, $\play\in\players$.
We will use all this freely in the rest of our paper.

\subsection{Harmonic games}

Our main focus in what follows will be the class of \define{harmonic games}, first introduced by \citet{CMOP11} as a game-theoretic model for strategic situations with conflicting, anti-aligned interests.
Specifically, as was shown by \citet{CMOP11} \textendash\ and, in a more general setting, by \citet{APSV22} \textendash\ every game in normal form can be decomposed as the sum of a potential game and a harmonic game, and this decomposition is unique up to affine transformations that do not alter the equilibrium outcomes of the game.%
\footnote{We briefly recall here that $\fingame \equiv \fingamefull$ is a potential game if it admits a \define{potential function} $\pot\from\strats\to\R$ such that
\(
\pay_{\play}(\purealt_{\play};\pure_{-\play}) - \pay_{\play}(\pure_{\play};\pure_{-\play})
	= \pot(\purealt_{\play};\pure_{-\play}) - \pot(\pure_{\play};\pure_{-\play})
\)
for all $\pure,\purealt\in\pures$ and all $\play\in\players$ \citep{MS96}.}
In this decomposition, the potential component of a game captures multi-agent strategic interactions with \emph{common} interests, whereas the harmonic component covers interactions with \emph{conflicting} interests.%
\footnote{The terminology ``harmonic'' is due to \citet{CMOP11} and alludes to the harmonic component of the graphical Hodge decomposition \citep{jiangStatisticalRankingCombinatorial2011}.
}

Formally, adapting the more general setup by \citet{APSV22}, we have the following definition:%
\begin{definition}
\label{def:harmonic}
A finite game $\fingame \equiv \fingamefull$ is said to be \define{harmonic} when it admits a \define{harmonic measure}, \ie a collection of weights $\meas_{\play\pure_{\play}}\in(0,\infty)$, $\pure_{\play}\in\pures_{\play}$, $\play\in\players$, such that
\begin{equation}
\tag{HG}
\restatableeq{\EqHarmonic}{
\sum_{\play \in \players}
	\sum_{\purealt_{\play} \in \pures_{\play}}
	\meas_{\play\purealt_{\play}}
	\bracks{ \pay_{\play}(\pure_{\play};\pure_{-\play}) - \pay_{\play}(\purealt_{\play};\pure_{-\play}) }
	= 0
	\quad \text{for all } \pure \in \pures}
{eq:harmonic} \eqdot
\end{equation}
In particular, if $\fingame$ is harmonic relative to the uniform measure $\meas_{\play\pure_{\play}} = 1$, $\pure_{\play}\in\pures_{\play}$, $\play\in\players$, we will say that $\fingame$ is a \acdef{UHG}.
\end{definition}

\begin{remark*}
With regard to terminology, \citet{CMOP11} call ``\aclp{HG}'' what we call ``\aclp{UHG}'', and \citet{APSV22} call ``$\meas$-harmonic games'' what we call ``\aclp{HG}''.%
\footnote{To be even more precise, the definition of \citet{APSV22} involves an additional set of weights, called a \define{comeasure};
however, as we explain in \cref{app:harmonic}, these weights do not change the preference structure of the game, so we disregard this extra degree of generality.}
We use this convention because it simultaneously simplifies notation and terminology while capturing all relevant strategic features of the game;
for a detailed discussion, see \cref{app:harmonic}.
To avoid needless repetition, and unless there is a danger of confusion, when we say that $\fingame$ is harmonic, we will write $\meas_{\play}$ for the corresponding measure, and we will write $\mass_{\play} = \lnorm{\meas_{\play}} = \sum_{\purealt_{\play}\in\pures_{\play}} \meas_{\play\purealt_{\play}}$ for the total mass of $\meas_{\play}$.
\endenv
\end{remark*}

Broadly speaking, in harmonic games, for any player considering a deviation toward a specific pure strategy profile, there exist other players with an incentive to deviate \emph{away} from said profile.
In this regard, \aclp{HG} can be seen as the strategic complement of potential games, where player interests are aligned and sequences of unilateral best responses generate a finite improvement path that terminates at a pure \acl{NE} \citep{MS96}.
By contrast, except for trivial cases (like the zero game) \aclp{HG} \emph{do not} admit pure \aclp{NE}, and they possess non-terminating best-response paths.
For all these reasons, harmonic games can be considered as ``orthogonal'' to potential games, in a sense made precise by the decomposition results of \citet{CMOP11} and \citet{APSV22}.

It is of course natural to ask what is the relation between \aclp{HG} and \aclp{ZSG}.
Games belonging to the latter class \textendash\ such as Matching Pennies and Rock-Paper-Scissors \textendash\ have long been used as prototypical examples of strategic conflict;
at the same time, there are \aclp{ZSG} that are also potential (and even possess strict equilibria), so the potential\,/\,zero-sum distinction does not capture the whole picture.
As a matter of fact, it is not a coincidence that the textbook examples of zero-sum games admit fully-mixed \aclp{NE}:
as we discuss in \cref{app:harmonic}, \aclp{2ZSG} with a fully mixed \acl{NE} are harmonic,
so the existing results for such games are, in a sense, more closely attuned to their harmonic character.

\section{Continuous-time analysis: Poincaré recurrence}
\label{sec:dynamics}

The most basic rationality postulate in the context of online learning is the minimization of a player's (external) regret, \ie the difference between a player's cumulative payoff and that of the player's best possible strategy in hindsight.
In more detail, assuming for the moment that play evolves in continuous time, the \define{regret} of player $\play\in\players$ relative to a sequence of play $\stratof{\time} \in \strats$ is defined as
\begin{equation}
\label{eq:reg-cont}
\reg_{\play}(\horizon)
	= \max_{\bench_{\play}\in\strats_{\play}}
		\int_{\tstart}^{\horizon} \bracks{\pay_{\play}(\bench_{\play};\stratofx{\others}{\time}) - \pay_{\play}(\stratof{\time})} \dd\time
\end{equation}
and we say that the player has \define{no regret} under $\stratof{\time}$ if $\reg_{\play}(\horizon) = o(\horizon)$ as $\horizon\to\infty$.

The most widely used scheme for attaining no regret is the family of policies known as \acdef{FTRL} \cite{SSS06,SS11}.
At a high level, the idea behind \ac{FTRL} is that, at all times $\time\geq\tstart$, each player $\play\in\players$ plays a mixed strategy $\stratofi{\time} \in \strats_{\play}$ that maximizes the player's cumulative payoff up to time $\time$ minus a certain regularization penalty.
In our continuous-time setting, this gives rise to the \ac{FTRL} dynamics
\begin{equation}
\restatableeq{\EqFtrlx}{
\stratofi{\time}
	= \argmax_{\bench_{\play}\in\strats_{\play}}
		\braces*{
			\int_{\tstart}^{\time} \!\!\! \pay_{\play}(\bench_{\play};\stratofx{-\play}{\timealt}) \dd\timealt
			- \hreg_{\play}(\bench_{\play})}
	= \argmax_{\bench_{\play}\in\strats_{\play}}
		\braces*{
			\int_{\tstart}^{\time} \! \braket{\payv_{\play}(\stratof{\timealt})}{\bench_{\play}} \dd\timealt
			- \hreg_{\play}(\bench_{\play})}
}{eq:FTRL-x}
\end{equation}
or, more compactly,
\begin{equation}
\tag{FTRL-D}
\restatableeq{\EqFtrlCont}{
\dotscoreofi{\time}
	= \payfield_{\play}(\stratof{\time})
	\qquad
\stratofi{\time}
	= \mirror_{\play}(\scoreofi{\time})
}{eq:FTRL-cont}
\end{equation}
where
$\hreg_{\play}\from\strats_{\play}\to\R$ is a convex penalty function known as the \define{regularizer} of the method,
$\mirror_{\play}$ denotes the \define{regularized choice map} of player $\play$,
and
$\mirror = (\mirror_{1},\dotsc,\mirror_{\nPlayers})$
denotes the profile thereof.
Formally, writing $\scores_{\play} \equiv \R^{\nPures_{\play}}$ for the \define{payoff space} of player $\play\in\players$
\textendash\ that is, the space of all possible payoff vectors $\payfield_{\play}$ of player $\play$ \textendash\
the regularized choice map $\mirror_{\play}\from\scores_{\play}\to\strats_{\play}$ is defined as
\begin{equation}
\restatableeq{\EqMirrori}{
\mirror_{\play}(\score_{\play})
	= \argmax\nolimits_{\strat_{\play}\in\strats_{\play}} \{ \braket{\score_{\play}}{\strat_{\play}} - \hreg_{\play}(\strat_{\play}) \}
	\quad
	\text{for all $\score_{\play} \in \scores_{\play}$}
	\eqdot
}{eq:mirrori}
\end{equation}
In essence, $\mirror_{\play}$ is a ``soft'' version of the $\argmax$ correspondence $\score_{\play} \mapsto \argmax_{\strat_{\play}\in\strats_{\play}} \braket{\score_{\play}}{\strat_{\play}}$, suitably regularized by a penalty term intended to incentivize exploration.
For technical reasons, we will also assume that each $\hreg_{\play}$ is \define{strongly convex}, \ie
\begin{equation}
\label{eq:hstri}
\hreg_{\play}(t\strat_{\play} + (1-t)\stratalt_{\play})
	\leq t \hreg_{\play}(\strat_{\play})
		+ (1-t) \hreg_{\play}(\stratalt_{\play})
		- \tfrac{1}{2} \hstr_{\play} t(1-t) \norm{\strat_{\play} - \stratalt_{\play}}^{2}
\end{equation}
for some $\hstr_{\play} > 0$ (commonly referred to as the \emph{strong convexity modulus} of $\hreg_{\play}$), and for all $\strat_{\play},\stratalt_{\play}\in\strats$, $t\in[0,1]$.
In plain words, this simply means that $\hreg_{\play}$ has ``enough curvature'' in the sense that it can be bounded from below by a (positive) quadratic function which agrees with $\hreg_{\play}$ to first order.

The go-to example of this setup is the entropic regularizer
\begin{equation}
\label{eq:entropy}
\hreg_{\play}(\strat_{\play})
	= \sum_{\pure_{\play}\in\pures_{\play}} \strat_{\play\pure_{\play}} \log\strat_{\play\pure_{\play}}
\end{equation}
which yields the so-called \define{logit choice map}
\begin{equation}
\label{eq:logit}
\mirror_{\play}(\score_{\play})
	\equiv \logit_{\play}(\score_{\play})
	\defeq \frac{(\exp(\score_{\play\pure_{\play}}))_{\pure_{\play}\in\pures_{\play}}}{\sum_{\pure_{\play}\in\pures_{\play}}\exp(\score_{\play\pure_{\play}})}
	\quad
	\text{for all $\score_{\play}\in\scores_{\play}$}.
\end{equation}
By Pinsker's inequality, the entropic regularizer is $1$-strongly convex relative to the $L^{1}$-norm on $\strats_{\play}$ \citep{SS11}, and by a standard calculation \cite{Rus99,MM10}, the induced sytem \eqref{eq:FTRL-cont} boils down to the \acl{RD} of \citet{TJ78}.
Some other standard examples of \eqref{eq:FTRL-cont} include
the Euclidean projection dynamics of \citet{Fri91} when $\hreg_{\play}(\strat_{\play}) = (1/2) \twonorm{\strat_{\play}}^{2}$,
the $q$-\acl{RD} \cite{Har11,MS16},
etc.
To streamline our presentation, we defer a detailed discussion of these examples to \cref{app:continuous},
and we proceed below to state the main regret guarantee of \eqref{eq:FTRL-cont}, originally due to \cite{KM17}:

\begin{theorem}
\label{thm:reg-cont}
Under \eqref{eq:FTRL-cont}, each player's regret is bounded as $\reg_{\play}(\horizon) \leq \hrange_{\play} \defeq \max\hreg_{\play} - \min\hreg_{\play}$.
\end{theorem}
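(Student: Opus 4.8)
The plan is to run the classical conjugate-function argument for \ac{FTRL}, in the spirit of the analysis originally due to \cite{KM17}. Fix a player $\play\in\players$ and let $\hconj_{\play}(\score_{\play}) \defeq \max_{\strat_{\play}\in\strats_{\play}} \setof{\braket{\score_{\play}}{\strat_{\play}} - \hreg_{\play}(\strat_{\play})}$ be the convex conjugate of the regularizer $\hreg_{\play}$ taken relative to $\strats_{\play}$. Since $\strats_{\play}$ is compact and $\hreg_{\play}$ is strongly convex by \eqref{eq:hstri}, standard convex analysis guarantees that $\hconj_{\play}$ is finite and continuously differentiable on all of $\scores_{\play}$, that the maximum defining $\hconj_{\play}(\score_{\play})$ is attained at a unique point, and that \textendash\ comparing with \eqref{eq:mirrori} \textendash\ this point is precisely the regularized choice map, so that $\nabla\hconj_{\play} = \mirror_{\play}$. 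Consequently, along any orbit $\stratof{\time} = \mirror(\scoreof{\time})$ of \eqref{eq:FTRL-cont} the map $\time \mapsto \hconj_{\play}(\scoreofi{\time})$ is differentiable, and the chain rule combined with the score update $\dotscoreofi{\time} = \payfield_{\play}(\stratof{\time})$ gives
\[
\ddt \hconj_{\play}(\scoreofi{\time})
	= \braket{\nabla\hconj_{\play}(\scoreofi{\time})}{\dotscoreofi{\time}}
	= \braket{\stratofi{\time}}{\payfield_{\play}(\stratof{\time})}
	= \pay_{\play}(\stratof{\time})
	\eqdot
\]

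I would then integrate this identity over $\window$. The integral form \eqref{eq:FTRL-x} initializes the dynamics with zero cumulative payoff, \ie $\scoreofi{\tstart} = 0$; since $\hconj_{\play}(0) = \max_{\strat_{\play}}\setof{-\hreg_{\play}(\strat_{\play})} = -\min\hreg_{\play}$ and $\scoreofi{\horizon} = \int_{\tstart}^{\horizon} \payfield_{\play}(\stratof{\time}) \dd\time$, integration yields
\[
\int_{\tstart}^{\horizon} \pay_{\play}(\stratof{\time}) \dd\time
	= \hconj_{\play}(\scoreofi{\horizon}) + \min\hreg_{\play}
	\eqdot
\]
On the other hand, for any fixed comparator $\bench_{\play} \in \strats_{\play}$, the defining (Young\textendash Fenchel) inequality for the conjugate gives
\[
\hconj_{\play}(\scoreofi{\horizon})
	\geq \braket{\scoreofi{\horizon}}{\bench_{\play}} - \hreg_{\play}(\bench_{\play})
	= \int_{\tstart}^{\horizon} \pay_{\play}(\bench_{\play};\stratofx{\others}{\time}) \dd\time - \hreg_{\play}(\bench_{\play})
	\eqdot
\]
Subtracting the two displays, we obtain, for every $\bench_{\play} \in \strats_{\play}$,
\[
\int_{\tstart}^{\horizon} \bracks{ \pay_{\play}(\bench_{\play};\stratofx{\others}{\time}) - \pay_{\play}(\stratof{\time}) } \dd\time
	\leq \hreg_{\play}(\bench_{\play}) - \min\hreg_{\play}
	\leq \max\hreg_{\play} - \min\hreg_{\play}
	= \hrange_{\play}
	\eqdot
\]
Taking the supremum over $\bench_{\play} \in \strats_{\play}$ and recalling the definition \eqref{eq:reg-cont} of $\reg_{\play}(\horizon)$ then establishes the claim.

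The only step that genuinely requires care \textendash\ and hence the part I would single out as the main ``obstacle'' \textendash\ is the convex-analytic groundwork of the first paragraph: that the conjugate of a strongly convex function on a compact convex set is everywhere finite, of class $C^{1}$, with gradient equal to the unique maximizer $\mirror_{\play}$; this is precisely what legitimizes differentiating $\hconj_{\play}(\scoreofi{\time})$ along the flow, and everything downstream is an elementary integration plus the Fenchel\textendash Young inequality. A secondary, entirely routine, point is the well-posedness of \eqref{eq:FTRL-cont}: strong convexity of $\hreg_{\play}$ makes $\mirror_{\play}$ Lipschitz (with modulus $1/\hstr_{\play}$), and $\payfield$ is Lipschitz by \eqref{eq:Lips}, so the vector field $\score \mapsto \payfield(\mirror(\score))$ is Lipschitz; solutions therefore exist, are unique, and \textendash\ since $\mirror(\score)$ always lies in the compact set $\strats$ \textendash\ extend to all $\time \geq \tstart$, which is exactly what makes $\reg_{\play}(\horizon)$ meaningful for every horizon $\horizon$.
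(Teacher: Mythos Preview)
Your argument is correct and is precisely the classical conjugate-function proof that the paper attributes to \cite{KM17}; note that the paper does not actually supply its own proof of \cref{thm:reg-cont} but merely states it as a known result, and the machinery you invoke (in particular $\nabla\hconj_{\play} = \mirror_{\play}$) is exactly what the paper records later as \cref{prop:mirror}(c).
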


\Cref{thm:reg-cont} showcases the strong no-regret properties of \eqref{eq:FTRL-cont}:
it is not possible to guarantee less than constant, $\bigoh(1)$ regret, so \eqref{eq:FTRL-cont} is optimal in this regard.
In turn, by standard results \cite{NRTV07}, \cref{thm:reg-cont} implies further that the players' (correlated) empirical frequencies $\corrofx{\pure_{1},\dotsc,\pure_{\nPlayers}}{\time} \defeq (1/\time) \int_{\tstart}^{\time} \prod_{\play} \stratofx{\play\pure_{\play}}{\timealt} \dd\timealt$ converge to the game's set of \acfp{CCE} at a rate of $\bigoh(1/\time)$.

Importantly, this result makes no assumptions about the underlying game, but it does not carry the same predictive power in all games:
for one thing, a game's set of \acp{CCE} may include highly non-rationalizable outcomes (such as dominated strategies and the like) \citep{VZ13};
for another, the time-averaging that is inherent in the definition of empirical distributions may conceal a wide range of non-convergence phenomena, from cycles to chaos \cite{PS14,SAF02}.
On that account, the day-to-day behavior of \eqref{eq:FTRL-cont} in harmonic games cannot be understood from \cref{thm:reg-cont} alone, and requires a closer, more in-depth look.

Our first result below provides such a lense and shows that \eqref{eq:FTRL-cont} is almost-periodic in harmonic games, a property known as \define{Poincaré recurrence.}

\begin{restatable}{theorem}{ThRecurrence}
\label{thm:recurrence}
Suppose $\fingame$ is harmonic.
Then almost every orbit $\stratof{\time}$ of \eqref{eq:FTRL-cont} returns arbitrarily close to its starting point infinitely often:
specifically, for \textpar{Lebesgue} almost every initial condition $\stratof{\tstart} = \mirror(\scoreof{\tstart}) \in \strats$, there exists an increasing sequence of times $\curr[\time]\uparrow\infty$ such that $\stratof{\curr[\time]} \to \stratof{\tstart}$.
\end{restatable}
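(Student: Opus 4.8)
The plan is to carry out the Poincaré‑recurrence argument not in the strategy variable $\strat$ but in the \emph{dual} (score) variable $\score$ --- this is what lets us bypass the obstruction that forced \cite{LMP24} into a Shahshahani‑metric (incompressibility) argument specific to the replicator dynamics. The key elementary observation is that, since each payoff component $\payfield_{\play\pure_{\play}}(\strat) = \pay_{\play}(\pure_{\play};\strat_{-\play})$ does not depend on player $\play$'s own strategy $\strat_{\play}$, and $\strat_{\play} = \mirror_{\play}(\score_{\play})$ depends on $\score_{\play}$ alone, the vector field $\score\mapsto(\payfield_{\play}(\mirror(\score)))_{\play}$ generating \eqref{eq:FTRL-cont} has the special structure that its $\play$‑th block never involves $\score_{\play}$; hence its divergence vanishes identically, so \eqref{eq:FTRL-cont} preserves Lebesgue measure in score space \emph{in every finite game}. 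Harmonicity will enter only through a conserved quantity that traps the orbits in a compact region.

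\emph{Step 1 (the first integral).} Writing $\mass_{\play} = \lnorm{\meas_{\play}}$ and $\unimeas_{\play} = \meas_{\play}/\mass_{\play}$ for the normalized harmonic measure, I would first extend the defining identity \eqref{eq:harmonic} from pure to mixed profiles by multilinearity, obtaining
\[
\sum_{\play\in\players}\mass_{\play}\braket{\payfield_{\play}(\strat)}{\strat_{\play}-\unimeas_{\play}} = 0
\qquad\text{for all }\strat\in\strats
\]
(a consequence being that $\unimeas = (\unimeas_{\play})_{\play}$ is itself an interior \acl{NE}). The \emph{$\mass$‑weighted Fenchel coupling}
\(
\lyap(\score) = \sum_{\play}\mass_{\play}\bracks{\hreg_{\play}(\unimeas_{\play}) + \hconj_{\play}(\score_{\play}) - \braket{\score_{\play}}{\unimeas_{\play}}}
\)
is then a first integral of \eqref{eq:FTRL-cont}: using $\nabla\hconj_{\play} = \mirror_{\play}$ (a standard consequence of the strong convexity of $\hreg_{\play}$) together with \eqref{eq:FTRL-cont},
\begin{align*}
\ddt\lyap(\scoreof{\time})
	&= \sum_{\play}\mass_{\play}\braket{\mirror_{\play}(\scoreofi{\time})-\unimeas_{\play}}{\dotscoreofi{\time}} \\
	&= \sum_{\play}\mass_{\play}\braket{\stratofi{\time}-\unimeas_{\play}}{\payfield_{\play}(\stratof{\time})} = 0 .
\end{align*}

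\emph{Step 2 (confinement and recurrence).} The coupling $\lyap$ is unchanged when a common constant is added to all components of some $\score_{\play}$ (indeed $\hconj_{\play}$ then increases by that very constant, while $\braket{\ones}{\unimeas_{\play}}=1$), so it descends to the reduced score space $\scores^{0} = \prod_{\play}\setdef{\score_{\play}\in\scores_{\play}}{\braket{\score_{\play}}{\ones}=0}$, on which it is moreover \emph{coercive} --- a routine estimate that hinges on each $\unimeas_{\play}$ lying in the relative interior of $\strats_{\play}$ (all $\meas_{\play\pure_{\play}}>0$). Passing to the reduced dynamics on $\scores^{0}$ (subtract the running mean from each $\scoreofi{\time}$, which leaves $\stratof{\time} = \mirror(\scoreof{\time})$ unchanged), the flow is complete (its right‑hand side is bounded by $\sup_{\strat\in\strats}\dnorm{\payfield(\strat)}<\infty$, ruling out finite‑time blow‑up), still Lebesgue‑measure‑preserving (the same divergence argument, block by block), and its orbits stay on the level sets of $\lyap$, which are compact by coercivity and invariant since $\lyap$ is conserved. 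Restricting the flow and Lebesgue measure to a compact sublevel set $\setdef{\score\in\scores^{0}}{\lyap(\score)\leq c}$ yields a finite‑measure‑preserving flow, so the Poincaré recurrence theorem gives that almost every point of it is recurrent; exhausting $\scores^{0}$ by a countable increasing family of such sublevel sets shows that almost every $\scoreof{\tstart}\in\scores^{0}$ is recurrent, \ie $\scoreof{\curr[\time]}\to\scoreof{\tstart}$ along some $\curr[\time]\uparrow\infty$. In the $2$‑player zero‑sum case ($\mass_{\play}\equiv1$, $\unimeas$ the interior equilibrium) this specializes to the argument of \cite{MPP18}.

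\emph{Step 3 (back to strategies), and the main obstacle.} Recurrence transfers from scores to strategies by continuity of $\mirror$: $\scoreof{\curr[\time]}\to\scoreof{\tstart}$ forces $\stratof{\curr[\time]} = \mirror(\scoreof{\curr[\time]})\to\mirror(\scoreof{\tstart}) = \stratof{\tstart}$; and since, for steep regularizers, $\mirror\from\scores^{0}\to\intstrats$ is a diffeomorphism, the exceptional set of non‑recurrent initial conditions is Lebesgue‑null in $\strats$ (the non‑steep case, where $\mirror$ may fail to be injective and orbits may touch $\bd\strats$, calls for a little extra bookkeeping, best deferred to an appendix). The step I expect to be the real crux is the handling of the gauge freedom in Step 2: one has to arrange a single reduction that \emph{simultaneously} keeps Liouville's theorem applicable (a finite invariant measure) and makes the first integral coercive (compact sublevel sets). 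If one works naively in the full score space $\prod_{\play}\scores_{\play}$, the invariant level sets of $\lyap$ are non‑compact cylinders of infinite Lebesgue measure, and the recurrence theorem cannot be invoked.
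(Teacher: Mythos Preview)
Your proposal is correct and follows essentially the same route as the paper: a conserved $\mass$-weighted Fenchel coupling (identical to the paper's), a reduction modulo the $\ones$-gauge to obtain a finite-dimensional space on which the induced flow is divergence-free and the conserved quantity forces bounded orbits, then Poincaré recurrence and pushforward via $\mirror$. The only cosmetic difference is the choice of section for the quotient: you project to the zero-mean hyperplane $\scores^{0}=\prod_{\play}\{\score_{\play}:\braket{\score_{\play}}{\ones}=0\}$, whereas the paper fixes a benchmark action $\purebench_{\play}$ per player and works in $\scorediffs_{\play}=\{\scorediff_{\play}:\scorediff_{\play\purebench_{\play}}=0\}$ via $\scorediff_{\play\pure_{\play}}=\score_{\play\pure_{\play}}-\score_{\play\purebench_{\play}}$; both are linear sections of the same quotient and the divergence and boundedness arguments go through identically on either.
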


An immediate consequence of \cref{thm:recurrence} is that no-regret learning under \eqref{eq:FTRL-cont} fails to converge in \emph{any} harmonic game;
in particular, since the orbits of \eqref{eq:FTRL-cont} eventually return to (almost) where they started, it is debatable if the players have learned anything at all,
despite the fact that they incur at most constant regret.
This cyclic, non-convergent landscape is the polar opposite of the long-run behavior of \eqref{eq:FTRL-cont} in \emph{potential} games, where the dynamics are known to converge globally \cite{HCM17}.
Thus, in addition to the strategic viewpoint of the previous section, \cref{thm:recurrence} shows that harmonic games are orthogonal to potential games also from a \emph{dynamic} viewpoint.

\cref{thm:recurrence} also provides a far-reaching generalization of existing results on Poincaré recurrence in (possibly networked) \aclp{2ZSG} with an interior equilibrium \citep{MPP18} to general-sum, $\nPlayers$-player games.
Combined with our previous remark, and given that the zero-sum property is not as meaningful for $\nPlayers$ players as it is for two,%
\footnote{Recall that any $\nPlayers$-player game can be turned into an equivalent zero-sum game by adding a fictitious player.}
the class of harmonic games can be seen as the more natural $\nPlayers$-player generalization of \aclp{2ZSG} from a learning viewpoint.

To the best of our knowledge, the only comparable result to \cref{thm:recurrence} in the literature is the very recent paper of \citet{LMP24} who showed that the \acl{RD} \textendash\ a special case of \eqref{eq:FTRL-cont} \textendash\ are Poincaré recurrent in \emph{uniform} harmonic games, that is, in harmonic games where the uniform distribution is a \acl{NE}, \cf \eqref{eq:harmonic-uniform} and the discussion surrounding \cref{def:harmonic}.
In this regard, \cref{thm:recurrence} extends the recent results of \citet{LMP24} along two axes:
\begin{enumerate*}
[\upshape(\itshape i\hspace*{.5pt}\upshape)]
\item
it applies to the entire class of \ac{FTRL} dynamics (not only the \acl{RD});
and
\item
it applies to the entire class of harmonic games (and not only \emph{uniformly} harmonic games).
\end{enumerate*}

In terms of techniques, \citet{LMP24} obtained their result through a surprising connection between a certain Riemannian metric underlying the \acl{RD} and the defining relation of uniformly harmonic games.
This relation no longer holds for different instances of \eqref{eq:FTRL-cont} or for non-uniform harmonic games, so the techniques of \cite{LMP24} cannot be extended \textendash\ and, in fact, \citet{LMP24} stated this generalization as an open problem.
Our techniques instead rely on the fact that the orbits $\scoreof{\time}$ of \eqref{eq:FTRL-cont} comprise a volume-preserving flow in the game's payoff space $\scores \equiv \prod_{\play}\scores_{\play}$ (though not necessarily on $\strats$), and then deriving a suitable constant of motion.
In the case of the logit map \eqref{eq:logit}, this constant of motion can be written as
\begin{equation}
\label{eq:energy-special}
G(\strat)
	= \prod_{\play\in\players}
		\prod_{\pure_{\play}\in\pures_{\play}}
		\strat_{\play\pure_{\play}}^{\meas_{\play\pure_{\play}}}
	\qquad
	\text{for all $\strat\in\strats$},
\end{equation}
where $\meas = (\meas_{\play\pure_{\play}})_{\pure_{\play}\in\pures_{\play},\play\in\players}$ is the harmonic measure on $\strats$ defining $\fingame$.
In the more general case,
the construction of a constant of motion for \eqref{eq:FTRL-cont} involves a characterization of harmonic games in terms of a ``strategic center'',
which we carry out in detail in \cref{app:continuous}.

\section{Discrete-time analysis: Convergence and constant regret via extrapolation}
\label{sec:algorithms}

We now proceed to examine the regret and convergence properties of regularized learning algorithms in harmonic games.
Starting with the standard, vanilla implementation of \ac{FTRL}, we reproduce a well-known observation that \ac{FTRL} spirals out to a non-terminating cycle of best-responses in Matching Pennies (which is a harmonic game).
Subsequently, to correct this non-convergent behavior, we examine a flexible algorithmic template, which we call \acdef{FTRL+}, and which includes as special cases the optimistic and extra-gradient versions of \ac{FTRL}.

\subsection{Vanilla implementation of \ac{FTRL}}

Building on the discussion of the previous section, the standard implementation of \ac{FTRL} in discrete time for $\run=\running$ is
\begin{equation}
\nexti
	= \argmax_{\bench_{\play}\in\strats_{\play}}
		\braces*{\insum_{\runalt=\start}^{\run} \pay_{\play}(\bench_{\play};\state_{-\play,\run})
			- \regweight_{\play}\hreg_{\play}(\bench_{\play})}
	= \argmax_{\bench_{\play}\in\strats_{\play}}
		\braces*{\insum_{\runalt=\start}^{\run}\braket*{\payfield_{\play}(\iter)}{\bench_{\play}}
			- \regweight_{\play} \hreg_{\play}(\bench_{\play})}
\end{equation}
or, in more compact, iterative notation
\begin{equation}
\label{eq:FTRL}
\tag{FTRL}
\nexti[\statealt]
	= \curri[\statealt] + \learn_{\play} \payfield_{\play}(\curr)
	\qquad
\curri
	= \mirror_{\play}(\curri[\statealt])
\end{equation}
where,
as per \eqref{eq:mirrori}, the map $\mirror_{\play}\from\scores_{\play}\to\strats_{\play}$ denotes the \define{regularized choice map} of player $\play\in\players$,
$\regweight_{\play}$ is a player-specific regularization weight parameter,
and
$\learn_{\play} = 1/\regweight_{\play}$ represents the \define{learning rate} of player $\play$.
Apart from their obvious differences \textendash\ discrete \vs continuous time \textendash\ a salient point that sets \eqref{eq:FTRL} apart from \eqref{eq:FTRL-cont} is the inclusion of the parameter $\learn_{\play}$;
this parameter is necessary to control the algorithm's behavior, and we will discuss it in detail in the sequel.

As mentioned in the introduction, a major shortfall of \eqref{eq:FTRL} \textendash\ and one of the main reasons for the increased popularity of optimistic\,/\,extra-gradient methods \textendash\ is that it may spiral away from \acl{NE}, even in simple $2\times2$ games with a unique equilibrium.
The standard example of this behavior is Matching Pennies, a \acl{2ZSG} with a fully-mixed equilibrium which is also uniformly harmonic, so the trajectories of \eqref{eq:FTRL-cont} are Poincaré recurrent (and, in fact, periodic).
In more detail, this game can be compactly represented by the payoff field $\payfield(\strat_{1},\strat_{2}) = (4\strat_{2}-2,2-4\strat_{1})$ for $\strat_{1},\strat_{2} \in [0,1]$, and its unique \acl{NE} is $\eq = (1/2,1/2)$.
Thus, if we run \eqref{eq:FTRL} with a Euclidean reqularizer \textendash\ that is, $\hreg_{\play}(\strat_{\play}) = \strat_{\play}^{2}/2$ for $\play=1,2$ \textendash\ and the same learning rate $\learn$ for both players, a straightforward calculation shows that the distance $\curr[\breg] = \cramped{(\strat_{1,\run} - \eq_{1})^{2}/2 + (\strat_{2,\run} - \eq_{2})^{2}/2}$ between $\curr$ and $\eq$ evolves as
\begin{equation}
\next[\breg]
	= \tfrac{1}{2} (\strat_{1,\run} + \learn\payfield_{1}(\curr) - \eq_{1})^{2}
		+ \tfrac{1}{2} (\strat_{2,\run} + \learn\payfield_{2}(\curr) - \eq_{2})^{2}
	= (1 + 16\learn^{2}) \curr[\breg]
\end{equation}
as long as $\curr + \learn \payfield(\curr) \in \strats$.
In other words, the distance of the iterates of \eqref{eq:FTRL} from the game's equilibrium grows at a geometric rate until $\curr$ reaches the boundary of $\strats$ and is ultimately trapped in a non-terminating cycle of best responses, \cf \cref{fig:portraits}.
In this regard, the rationality properties of \eqref{eq:FTRL} are even worse than those of \eqref{eq:FTRL-cont} because the game's equilibrium is now \emph{repelling}.

\subsection{\Acl{FTRL+}}

To mitigate this undesirable, divergent behavior of \eqref{eq:FTRL}, a standard approach in the literature is the inclusion of a forward-looking, ``\define{extrapolation step}''. Instead of updating the algorithm's ``base state'' $\curr$ directly, players first move to an interim ``leading state'' $\lead$ using payoff information from $\curr$ (this is the extrapolation step);
subsequently, players update $\curr$ using payoff information from the leading state $\lead$, and the process repeats.
In this way, players attempt to anticipate their payoff landscape and, in so doing, to take a more informed update step at each iteration.

The seed of this idea goes back to \citet{Kor76} and \citet{Pop80} in the context of solving monotone \acl{VI} problems, and it has since percolated to a wide array of ``\define{\acl{EG}}'' or ``\define{optimistic}'' methods, such as the \acl{MP} algorithm of \citet{Nem04}, the \acl{DE} variant of \citet{Nes07}, the \acl{OMD} algorithm of \citet{CYLM+12} and \citet{RS13-NIPS}, and many others.
Given the different operational envelope of each of these methods,
we consider below an integrated algorithmic template, which we call \acdef{FTRL+}, and which is sufficiently flexible to account for a broad range of these schemes.

Formally, the proposed algorithmic blueprint unfolds in two phases as follows:
\begin{flalign}
\label{eq:FTRL+}
\tag{\acs{FTRL+}}
\begin{alignedat}{4}
\quad
	a)\;\;
	&\define{Extrapolation phase:}
	&\qquad
\leadi[\statealt]
	&= \curri[\statealt] + \learn_{\play} \curri[\signal]
	&\qquad
\leadi
	&=\mirror_{\play}(\leadi[\statealt])
	\\
\quad
	b)\;\;
	&\define{Update phase:}
	&\qquad
\nexti[\statealt]
	&= \curri[\statealt] + \learn_{\play} \leadi[\signal]
	&\qquad
\nexti
	&=\mirror_{\play}(\nexti[\statealt])
\end{alignedat}
&&
\end{flalign}

In the above,
$\learn_{\play} > 0$ is the learning rate of player $\play$,
$\curr$ and $\lead$ denote respectively the method's \define{base} and \define{leading} states at stage $\run=\running$,
and
$\curri[\signal]$ and $\leadi[\signal]$ are sequences of ``black-box'' payoff models at $\curr$ and $\lead$ respectively.

Specifically, following \citet{AIMM24}, we will assume throughout that
\begin{subequations}
\begin{equation}
\label{eq:signal-lead}
\leadi[\signal]
	= \payfield_{\play}(\lead)
	\quad
	\text{for all $\play\in\players$ and all $\run=\running$}
\end{equation}
\ie players always update the base state $\curr$ using payoff information from the leading state $\lead$.
By contrast, the leading state $\lead$ can be generated in many different ways, depending on the targeted update structure.
In this regard, we will consider the linear model
\begin{equation}
\label{eq:signal-curr-gen}
\curri[\signal]
	= \present_{\play} \, \payfield_{\play}(\curr)
		+ \past_{\play} \, \payfield_{\play}(\beforelead)
	\quad
	\text{for all $\play\in\players$ and all $\run=\running$}
\end{equation}
where the player-specific coefficients $\present_{\play},\past_{\play} \geq 0$ satisfy $\present_{\play} + \past_{\play} \leq 1$ and represent a mix of past and present payoff information.
\end{subequations}
In this way, depending on the values of $\present_{\play}$ and $\past_{\play}$, we obtain the following prototypical regularized learning methods as special cases of \eqref{eq:FTRL+}:
\begin{subequations}
\begin{enumerate}
[left=\parindent,label={\itshape \alph*\upshape)}]
\item
\define{\acs{FTRL}:}
if $\present_{\play} = \past_{\play} = 0$ for all $\play\in\players$, players essentially forego any look-ahead efforts, so we get
\begin{alignat}{2}
\curr[\signal]
	&= 0
	&\quad
	&\text{for all $\run=\running$}
\intertext{%
In turn, this gives $\lead = \curr$, \ie \eqref{eq:FTRL+} regresses to \eqref{eq:FTRL}.
\item
\define{\Acl{ExtraFTRL}:}
if $\present_{\play}=1$ and $\past_{\play}=0$ for all $\play\in\players$, we have }
\curr[\signal]
	&= \payfield(\curr)
	&\quad
	&\text{for all $\run=\running$}
\intertext{%
\ie players use payoff information from their current state to generate the leading state $\lead$.
This update structure requires two payoff queries per iteration and its origins can be traced back to the work of \citet{Kor76}.
Specifically, depending on the choice of $\hreg_{\play}$, it is essentially equivalent to the mirror-prox \cite{Nem04} and dual extrapolation \cite{Nes07} algorithms, it contains as a special case the forward-looking algorithm of \cite{MLZF+19,pmlr-v151-fasoulakis22a}, etc.
\item
{\define{\Acl{OptFTRL}:}
if $\present_{\play}=0$ and $\past_{\play}=1$ for all $\play\in\players$, we have}}
\curr[\signal]
	&= \payfield(\beforelead)
	&\quad
	&\text{for all $\run=\running$}
\end{alignat}
\ie players reuse the latest available payoff information instead of making a fresh query at $\curr$ (so the algorithm only requires one payoff query per iteration).
In this way, \eqref{eq:FTRL+} recovers the optimistic algorithms of \cite{CYLM+12,RS13-NIPS,SALS15,HIMM19}, the \acs{OMW} update scheme of \cite{SALS15,DP19} when $\mirror = \logit$, etc.
\end{enumerate}
\end{subequations}

Clearly, the list above is not exhaustive:
many other configurations are possible, \eg with different players using different parameter settings for $\present_{\play}$ and $\past_{\play}$, depending on the information they have at hand and any other individual considerations.
To avoid needlessly complicating the analysis, our only standing assumption will be that
$\present_{\play} + \past_{\play} > 0$ for all $\play\in\players$ (since, otherwise, the benefits of the extrapolation step would vanish).
In particular, by rescaling the players' learning rates if needed, we will
normalize $\present_{\play}$ and $\past_{\play}$ to $\present_{\play} + \past_{\play} = 1$,
leading to the convex model
\begin{equation}
\label{eq:signal-curr}
\curri[\signal]
	= \coef_{\play} \, \payfield_{\play}(\curr) + (1-\coef_{\play}) \, \payfield_{\play}(\beforelead)
\end{equation}
for some arbitrarily chosen ensemble of player-specific \define{extrapolation coefficients} $\coef_{\play} \in [0,1]$, $\play\in\players$.

\begin{remark*}
To simplify the presentation of our results, we will assume throughout the rest of our paper that \eqref{eq:FTRL+} is initialized with $\init[\statealt] = \statealt_{1/2} = 0$.
\end{remark*}

\subsection{Analysis \& results}

With all this in hand, we are finally in a position to state our main results for \eqref{eq:FTRL+} in harmonic games.
We begin by showing that \eqref{eq:FTRL+} achieves order-optimal regret:

\begin{restatable}{theorem}{regret}
\label{thm:regret}
Suppose that each player in a harmonic game $\fingame$ is following \eqref{eq:FTRL+} with learning rate $\learn_{\play} \leq \mass_{\play} \hstr_{\play} \bracks{2(\nPlayers+2) \max_{\playalt} \mass_{\playalt} \lips_{\playalt}}^{-1}$
and
payoff models as per \eqref{eq:signal-lead} and \eqref{eq:signal-curr}.
Then the individual regret of each player $\play\in\players$ is bounded as
\begin{equation}
\label{eq:reg-bound}
\reg_{\play}(\nRuns)
	\defeq \max_{\bench_{\play}\in\strats_{\play}}
		\sum_{\run=\start}^{\nRuns}
		\bracks{\pay_{\play}(\bench_{\play};\state_{-\play,\run}) - \pay_{\play}(\curr)}
	\leq \frac{\hrange_{\play}}{\learn_{\play}}
		+ \frac{2 \lips_{\play}}{\nPlayers + 2} \sum_{\playalt\in\players} \frac{\hrange_{\playalt}}{\learn_{\playalt} \lips_{\playalt}}
\end{equation}
where
$\hrange_{\play} = \max\hreg_{\play} - \min\hreg_{\play}$,
and
$\lips_{\play}$ is the Lipschitz modulus of $\payfield_{\play}$.
\end{restatable}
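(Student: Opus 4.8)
The plan is to combine a per-player ``regret-bounded-by-variation'' energy estimate for \eqref{eq:FTRL+} with the defining identity \eqref{eq:harmonic}, using the harmonic masses $\mass_{\play}$ in a double role: as the weights that make a cross-player stability cancellation close, and as the weights under which the \emph{aggregate} regret is forced to be nonnegative. First, I would fix a player $\play$ and a comparator $\bench_{\play}\in\strats_{\play}$ and, applying the three-point inequality for Bregman divergences to each of the two mirror steps of \eqref{eq:FTRL+} (the extrapolation $\curri[\statealt]\mapsto\leadi[\statealt]$ driven by the signal $\curri[\signal]$ and the update $\curri[\statealt]\mapsto\nexti[\statealt]$ driven by $\payfield_{\play}(\lead)$) together with the $\hstr_{\play}$-strong convexity of $\hreg_{\play}$ and the initialization $\init[\statealt]=\statealt_{1/2}=0$, derive a bound of the shape
\begin{multline*}
\reg_{\play}(\nRuns)
\le
\frac{\hrange_{\play}}{\learn_{\play}}
+ \frac{\learn_{\play}}{\hstr_{\play}}\sum_{\run=\start}^{\nRuns}\dnorm{\payfield_{\play}(\lead)-\curri[\signal]}^{2}\\
- \frac{\hstr_{\play}}{4\learn_{\play}}\sum_{\run=\start}^{\nRuns}\bigl(\norm{\leadi-\curri}^{2}+\norm{\nexti-\leadi}^{2}\bigr)
\end{multline*}
(the passage from the leading states $\lead$, which the optimistic analysis controls directly, to the base states $\curr$ at which $\reg_{\play}$ is defined costs only lower-order Lipschitz terms of the same type). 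Since $\curri[\signal]=\coef_{\play}\payfield_{\play}(\curr)+(1-\coef_{\play})\payfield_{\play}(\beforelead)$ is a convex combination of recent payoff vectors, the ``variation'' term $\dnorm{\payfield_{\play}(\lead)-\curri[\signal]}^{2}$ is at most a constant multiple of $\dnorm{\payfield_{\play}(\lead)-\payfield_{\play}(\curr)}^{2}+\dnorm{\payfield_{\play}(\curr)-\payfield_{\play}(\beforelead)}^{2}$.

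The next step is to couple players through the Lipschitz bound \eqref{eq:Lips}: since $\payfield_{\play}$ depends only on $\strat_{-\play}$, each of the two payoff differences above is at most $\lips_{\play}$ times the aggregate step length of the players $\playalt\ne\play$ over the corresponding round, so --- up to an absolute combinatorial factor $\bigoh(\nPlayers)$ --- player $\play$'s variation sum is dominated by the stability sums of everyone else. Multiplying the per-player bound by the harmonic mass $\mass_{\play}$, summing over $\play\in\players$, substituting this estimate and swapping the order of summation, the contribution of player $\playalt$'s stability sum on the positive side acquires the coefficient $\sum_{\play\ne\playalt}\bigoh(\nPlayers)\,\mass_{\play}\learn_{\play}\lips_{\play}^{2}/\hstr_{\play}$, while on the negative side it keeps $\mass_{\playalt}\hstr_{\playalt}/(4\learn_{\playalt})$. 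The hypothesis $\learn_{\play}\le\mass_{\play}\hstr_{\play}\,\bigl[2(\nPlayers+2)\max_{\playalt}\mass_{\playalt}\lips_{\playalt}\bigr]^{-1}$ is tailored precisely so that the former is at most half the latter for every pair, which gives
\[
\sum_{\play\in\players}\mass_{\play}\,\reg_{\play}(\nRuns)
\le
\sum_{\play\in\players}\frac{\mass_{\play}\hrange_{\play}}{\learn_{\play}}
- \frac{1}{8}\sum_{\playalt\in\players}\frac{\mass_{\playalt}\hstr_{\playalt}}{\learn_{\playalt}}\,\mathcal{S}_{\playalt},
\]
where $\mathcal{S}_{\playalt}\ge0$ is the cumulative stability sum of player $\playalt$.

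To turn this into individual bounds I would invoke harmonicity. Rewriting \eqref{eq:harmonic} through $\sum_{\purealt_{\play}}\meas_{\play\purealt_{\play}}\pay_{\play}(\purealt_{\play};\pure_{-\play})=\mass_{\play}\pay_{\play}(\unimeas_{\play};\pure_{-\play})$, with $\unimeas_{\play}=\meas_{\play}/\mass_{\play}$, and extending by multilinearity from pure profiles $\pure\in\pures$ to mixed profiles $\strat\in\strats$, one obtains the conservation law $\sum_{\play\in\players}\mass_{\play}\bigl[\pay_{\play}(\unimeas_{\play};\strat_{-\play})-\pay_{\play}(\strat)\bigr]=0$ for every $\strat\in\strats$; evaluating at $\strat=\curr$, summing over $\run$, and using that $\unimeas_{\play}$ is one admissible comparator in the definition of $\reg_{\play}$, this yields $\sum_{\play\in\players}\mass_{\play}\,\reg_{\play}(\nRuns)\ge0$. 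Comparing with the displayed aggregate bound shows that every stability sum is bounded uniformly in $\nRuns$: $\mathcal{S}_{\playalt}\le 8\,(\learn_{\playalt}/\mass_{\playalt}\hstr_{\playalt})\sum_{\play}\mass_{\play}\hrange_{\play}/\learn_{\play}$. Feeding this back into the single-player energy bound and simplifying the resulting constant with the learning-rate hypothesis --- which collapses products of the form $\mass_{\play}\lips_{\play}^{2}\mass_{\playalt}/(\max_{\playaltalt}\mass_{\playaltalt}\lips_{\playaltalt})^{2}$ into $\lips_{\play}/\lips_{\playalt}$ --- produces exactly the stated estimate $\reg_{\play}(\nRuns)\le\hrange_{\play}/\learn_{\play}+\tfrac{2\lips_{\play}}{\nPlayers+2}\sum_{\playalt\in\players}\hrange_{\playalt}/(\learn_{\playalt}\lips_{\playalt})$.

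The main obstacle is the interplay between the per-player energy bound and the weighted aggregation. One has to establish a clean inequality of regret-bounded-by-variation type for the \emph{general} template \eqref{eq:FTRL+}--\eqref{eq:signal-curr}, which mixes past and present payoff signals and whose ``played'' sequence is the base iterate $\curr$ rather than the leading iterate $\lead$, and then keep every constant tight enough that the weights $\mass_{\play}$ derived from the harmonic measure --- precisely those making the \emph{aggregate} regret nonnegative --- remain admissible as the weights under which the cross-player cancellation closes. It is exactly this compatibility between the stability budget of extrapolated learning and the conservation law \eqref{eq:harmonic} that singles out harmonic games (rather than merely zero-sum games) as the right setting for a constant-regret guarantee.
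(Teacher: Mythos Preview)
Your proposal is correct and follows essentially the same route as the paper. Both arguments hinge on (i) a per-player Fenchel/Bregman ``four-point'' inequality for the two mirror steps of \eqref{eq:FTRL+}, (ii) the harmonic identity $\sum_{\play}\mass_{\play}\langle\payfield_{\play}(\strat),\strat_{\play}-\heq_{\play}\rangle=0$ with $\heq_{\play}=\meas_{\play}/\mass_{\play}$, and (iii) the step-size condition to make the cross-player stability cancellation close. The only organizational difference is that the paper first builds the weighted energy $\energy(\heq,\curr[\statealt])=\sum_{\play}\frac{\mass_{\play}}{\learn_{\play}}\fench_{\play}(\heq_{\play},\curri[\statealt])$ and instantiates the comparator at the strategic center $\heq$ \emph{up front} to obtain the full-profile summability bound (their Proposition~D.2), and only then feeds it into the single-player regret estimate; you instead carry a generic comparator through the per-player bound, aggregate, and invoke harmonicity via the equivalent statement $\sum_{\play}\mass_{\play}\reg_{\play}(\nRuns)\ge0$. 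These are two packagings of the same calculation; your detour through the aggregate regret is slightly longer than the paper's direct instantiation $\base_{\play}\gets\heq_{\play}$, but not substantively different.
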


Even though \cref{thm:regret} invites a natural comparison with the constant regret bound of \cref{thm:reg-cont},
the continuous- and discrete-time settings are fundamentally different,
so any conclusions drawn from such a comparison would be specious.
Indeed, constant regret guarantees in the spirit of \eqref{eq:reg-bound} are particularly rare in the context of discrete-time algorithms, and as far as we are aware, similar bounds have only been established for optimistic methods in variationally stable and \aclp{2ZSG} \citep{HAM21};
other than that \textendash\ and always to the best of our knowledge \textendash\ the tightest regret bounds available for general games (finite or convex) seem to be (poly)logarithmic \cite{daskalakis2021nearoptimal,farina2022nearoptimal}.
In this regard, just like the recurrence result of \cref{thm:recurrence}, the $\bigoh(1)$ regret bound of \cref{thm:regret} represents a significant extension of existing results on zero-sum games (and polylogarithmic regret in general games), and suggests that, from a learning viewpoint, harmonic games are the most natural generalization of \aclp{2ZSG} to a general $\nPlayers$-player context.
We defer the proof of \cref{thm:regret} to \cref{app:discrete}.


\begin{figure*}[tbp]
\centering
\footnotesize
\includegraphics[height=32ex]{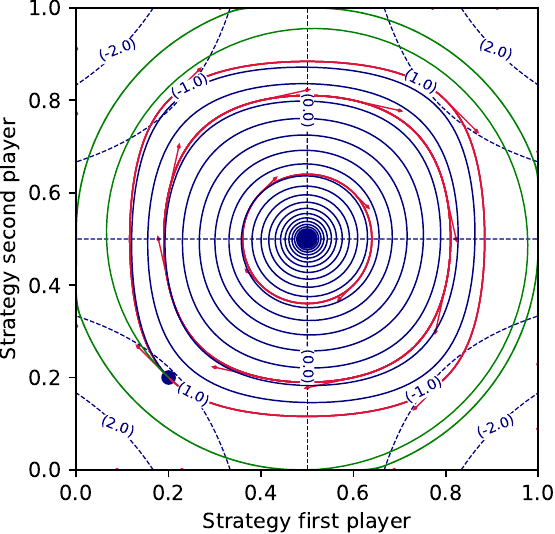}
\hfill
\includegraphics[height=32ex]{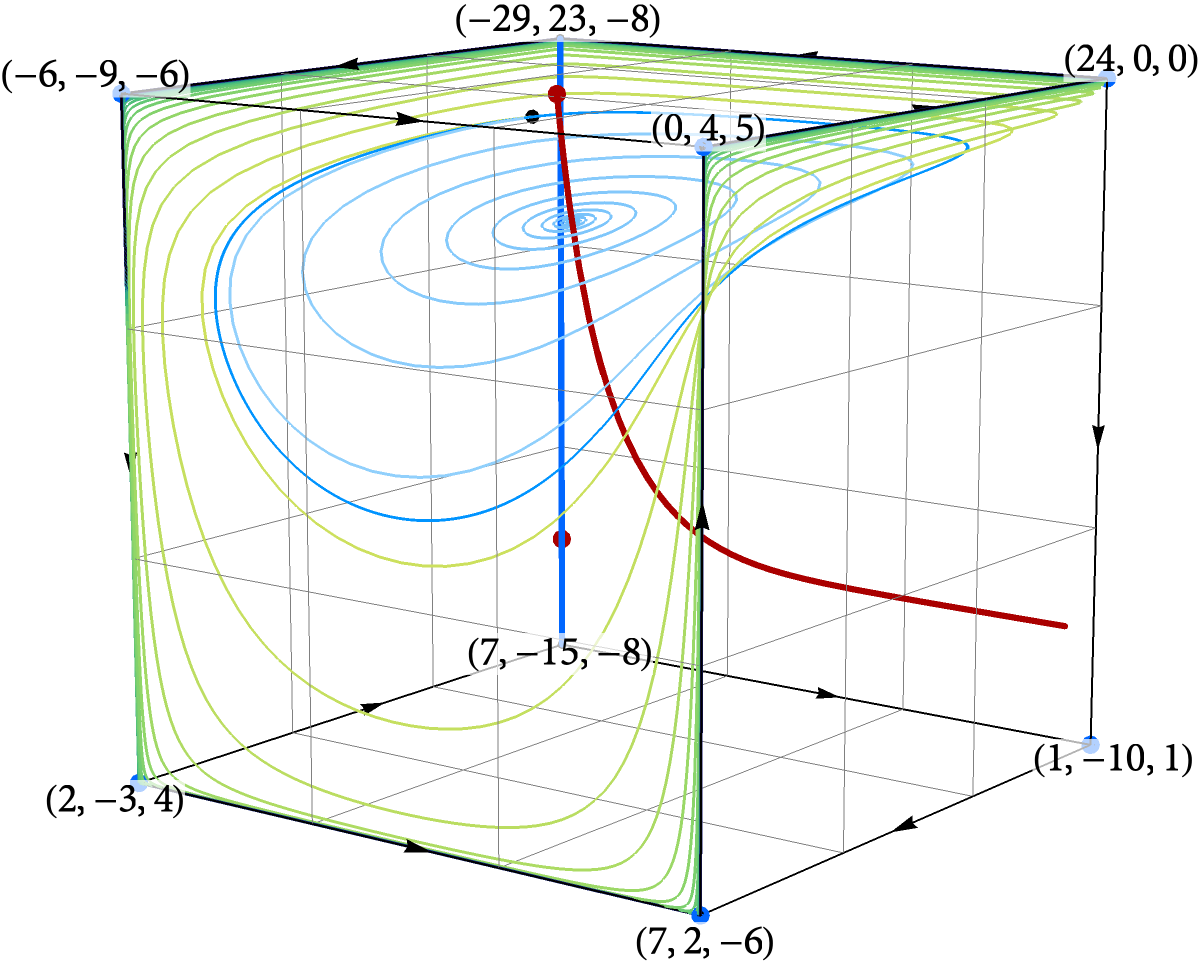}
\hfill
\includegraphics[height=32ex]{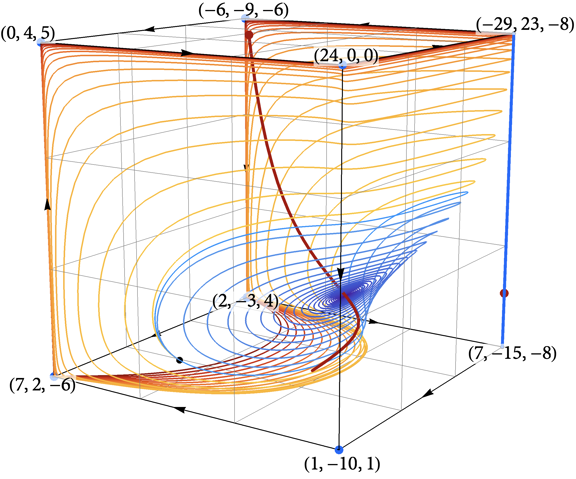}
\caption{The evolution of vanilla \vs extrapolated \ac{FTRL} schemes in harmonic games.
In the left figure, we consider the game of Matching Pennies (blue: \ac{FTRL+}; green: \ac{FTRL}; red: continuous time \ac{FTRL});
in the center and to the right, two different orbits in a $2\times2\times2$ harmonic game from two different viewpoints (blue: \ac{FTRL+}; green/orange:\ac{FTRL}; payoff profiles on vertices).
In all cases, we ran the optimistic variant of \ac{FTRL+} ($\coef_{\play}=0$ for all players), and we see that the trajectories of \eqref{eq:FTRL} diverge away from equilibrium and the trajectories of \eqref{eq:FTRL-cont} are recurrent (actually, periodic), whereas \eqref{eq:FTRL+} converges.
We also see the highly non-convex structure of harmonic games as evidence by their equilibrium set (thick red line in center and right subfigures).}
\label{fig:portraits}
\end{figure*}


As an immediate corollary of the above, we conclude that, under \eqref{eq:FTRL+}, the empirical frequencies of play $\corr_{\pure,\run} \defeq (1/\run) \sum_{\runalt=\start}^{\run} \strat_{\pure,\runalt}$, $\pure\in\pures$, converge to the game's set of \ac{CCE} at a rate of $\bigoh(1/\run)$.
This rate is, again, optimal, but as we discussed in \cref{sec:dynamics}, it offers little information in games where the marginalization of \acp{CCE} does not lead to \acl{NE} \textendash\ and, in general $\nPlayers$-player harmonic games, there is little hope that it would.
In addition, even when the marginalization of \acp{CCE} is Nash, the actual trajectory of play may \textendash\ and, in fact, often \emph{does} \textendash\ behave very differently from the time-averaged frequency of play.

Despite these hurdles, we show below that \eqref{eq:FTRL+} \emph{does} converge to \acl{NE}.
To state this result formally, we will focus on the case where each player's regularizer is \define{smooth} in the sense that
\begin{equation}
\label{eq:hsmooth}
\text{$\hreg_{\play}(\strat_{\play} + t (\stratalt_{\play} - \strat_{\play}))$ is continuously differentiable at $t=0$}
\end{equation}
for all $\strat_{\play}\in\im\mirror_{\play}$ and all $\stratalt_{\play}\in\strats_{\play}$.%
\footnote{The restriction to $\im\mirror_{\play}$ is technical in nature and is related to the subdifferentiability of $\hreg_{\play}$, \cf \cref{app:mirror}.}
Our prototypical examples \textendash\ the entropic and Euclidean regularizers \textendash\ both satisfy this mild requirement, as do all regularizers of the form $\hreg_{\play}(\strat_{\play}) = \sum_{\pure_{\play}\in\pures_{\play}} \theta_{\play}(\strat_{\play\pure_{\play}})$ for some smooth convex function $\theta_{\play}\from[0,1]\to\R$.
We then have the following convergence result:

\begin{restatable}{theorem}{convergence}
\label{thm:convergence}
Suppose that each player in a harmonic game $\fingame$ follows \eqref{eq:FTRL+} with learning rate $\learn_{\play} \leq \mass_{\play} \hstr_{\play} \bracks{2(\nPlayers+2) \max_{\playalt} \mass_{\playalt} \lips_{\playalt}}^{-1}$
and
payoff models as per \eqref{eq:signal-lead} and \eqref{eq:signal-curr}.
Then $\curr$ converges to the set of \aclp{NE} of $\fingame$.
\end{restatable}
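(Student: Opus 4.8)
The plan is to exhibit a Lyapunov function for \eqref{eq:FTRL+} built from a learning-rate-weighted Fenchel coupling to the strategic center of $\fingame$, show it is non-increasing along the generated trajectory \textendash\ the leading-order term cancelling by a conservation identity dual to the harmonic property \textendash\ extract a summable dissipation from the extrapolation phase, and then argue that every limit point of the iterates is a \acl{NE}. \emph{Step 1 (the harmonic identity).} By multilinearity of $\payfield$, the defining relation \eqref{eq:harmonic} of \cref{def:harmonic} extends from pure profiles to all of $\strats$; normalizing the harmonic measure by $\heq_{\play}=\meas_{\play}/\mass_{\play}$, it reads
\[
\sum_{\play\in\players}\mass_{\play}\,\dualp{\payfield_{\play}(\strat)}{\strat_{\play}-\heq_{\play}}=0
\qquad\text{for all }\strat\in\strats .
\]
In particular $\heq$ is a fully-mixed \acl{NE} at which every player is indifferent across all of their actions, and $\stratcenter$ is the strategic center whose construction and properties are established in \cref{app:continuous}. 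This is precisely the identity that makes the energy of \eqref{eq:FTRL-cont} a constant of motion in \cref{thm:recurrence}; in discrete time it will annihilate the first-order term of the energy recursion.

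\emph{Step 2 (energy function and dissipation).} Define the energy $\henergy(\score)=\sum_{\play\in\players}(\mass_{\play}/\learn_{\play})\bracks{\hreg_{\play}(\heq_{\play})+\hconj_{\play}(\score_{\play})-\dualp{\score_{\play}}{\heq_{\play}}}$, i.e.\ the continuous-time conserved quantity $\fench_{\wei,\heq}$ of \cref{app:continuous} reweighted so that the learning rates sit in the weights; it is nonnegative, vanishes exactly when $\mirror(\score)=\heq$, is finite at the initialization $\score=0$, and \textpar{by strong convexity of each $\hreg_{\play}$} satisfies the reciprocity bound linking $\henergy(\curr[\statealt])\to0$ to $\curr\to\heq$. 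Applying the standard three-point inequality for mirror/\acs{FTRL} updates to both phases of \eqref{eq:FTRL+}, together with strong convexity of the $\hreg_{\play}$, the Lipschitz bound \eqref{eq:Lips}, and the payoff models \eqref{eq:signal-lead} and \eqref{eq:signal-curr}, one derives a bound of the form
\[
\henergy(\next[\statealt])\le\henergy(\curr[\statealt])+\sum_{\play\in\players}\mass_{\play}\,\dualp{\payfield_{\play}(\lead)}{\lead_{\play}-\heq_{\play}}-\sum_{\play\in\players}c_{\play}\norm{\leadi-\curri}^{2},
\]
where the weights $\mass_{\play}/\learn_{\play}$ are chosen precisely so that player $\play$'s per-step score increment $\learn_{\play}\payfield_{\play}(\lead)$ produces the plain weight $\mass_{\play}$ in the middle sum, and the constants $c_{\play}>0$ are positive under the learning-rate budget $\learn_{\play}\le\mass_{\play}\hstr_{\play}[2(\nPlayers+2)\max_{\playalt}\mass_{\playalt}\lips_{\playalt}]^{-1}$ of the statement \textendash\ this budget is exactly what lets the negative term absorb the cross-player coupling errors generated by the extrapolation. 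By Step 1, evaluated at $\strat=\lead$, the middle sum vanishes, so $\henergy(\curr[\statealt])$ is non-increasing and telescoping gives $\sum_{\run}\sum_{\play}\norm{\leadi-\curri}^{2}\le\henergy(0)<\infty$. Morally this is the discrete-time shadow of the continuous-time conservation of $\henergy$: vanilla \eqref{eq:FTRL} lacks the negative term and diverges \textpar{as the Matching Pennies computation shows}, whereas the extrapolation turns the conservation identity into a genuine descent inequality.

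\emph{Step 3 (convergence).} Consequently $\henergy(\curr[\statealt])$ converges and $\norm{\leadi-\curri}\to0$ for each $\play$, so the interim strategies $\lead$ asymptotically coincide with the base strategies $\curr$. Since $\strats$ is compact, $(\curr)$ has limit points; it suffices to show each is a \acl{NE}, since \textpar{$\strats$ being compact and the Nash set closed} this forces $\dist(\curr,\Nash(\fingame))\to0$, which is the assertion of the theorem. The idea is that, writing $\leadi=\mirror_{\play}(\curri[\statealt]+\learn_{\play}\curri[\signal])$ from \eqref{eq:FTRL+}, the vanishing of $\leadi-\curri$ together with continuity of $\payfield$ and smoothness of the choice map \eqref{eq:hsmooth} pins down any limit point $\hat{\strat}$ of $(\curr)$ as a stationary point of the induced dynamics: on the support $\supp(\hat{\strat}_{\play})$ the payoffs $\payfield_{\play\pure_{\play}}(\hat{\strat})$ must be equalized, while off the support they cannot be larger \textpar{else the corresponding cumulative score would outgrow those on the support and that action would re-enter it, contradicting $\curr\to\hat{\strat}$}. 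By the support characterization of \eqref{eq:Nash} \textpar{equivalently the variational inequality \eqref{eq:VI}}, $\hat{\strat}$ is therefore a \acl{NE}.

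\emph{Main obstacle.} The crux is making this last step rigorous. On the interior of $\strats$ it is essentially the equalizer computation above, but at a boundary limit point \textpar{$\hat{\strat}\in\bd\strats$} some coordinates of the dual iterates $\curr[\statealt]$ diverge and $\mirror$ degenerates there, so $\norm{\lead-\curr}\to0$ no longer translates cleanly into ``$\payfield_{\play}(\hat{\strat})$ constant across actions'' \textendash\ one must track supports and the rate at which off-support strategy mass decays, and handle the mixed signals \eqref{eq:signal-curr}, whose two terms are evaluated at $\curr$ and $\beforelead$ rather than at a single point \textpar{this may require a further passage to a subsequence and the strong-convexity and smoothness estimates for $\hreg_{\play}$}. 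A secondary, more mechanical point is establishing the dissipation constants $c_{\play}$ of Step 2 with the stated learning-rate budget, so that the extrapolation provably dominates the inter-player coupling of the updates; this is carried out alongside the proof of \cref{thm:regret} in \cref{app:discrete}.
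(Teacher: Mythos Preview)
Your Steps~1 and~2 follow the paper's route: the energy is the $(\mass_{\play}/\learn_{\play})$-weighted Fenchel coupling to the strategic center $\heq$, and the harmonic identity \eqref{eq:harmonic-center} kills the leading drift term. Two corrections, however, are needed. First, the full template inequality (\cref{prop:template}) carries dissipation in \emph{both} $\norm{\leadi-\curri}^{2}$ and $\norm{\nexti-\leadi}^{2}$, together with a positive residual $\norm{\curr-\beforelead}^{2}$ coming from the $(1-\coef_{\play})\,\payfield_{\play}(\beforelead)$ part of \eqref{eq:signal-curr}. Because of that residual the energy is not monotone non-increasing as you claim but only quasi-Fej\'er, $\next[\energy]\le\curr[\energy]+\curr[\eps]$ with $\sum_{\run}\curr[\eps]<\infty$; the conclusion that $\curr[\energy]$ converges still holds, but via a different mechanism. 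Second, summability of $\norm{\next-\lead}^{2}$ (not only of $\norm{\lead-\curr}^{2}$) is essential for what follows, so it should not be dropped from your displayed inequality.

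The genuine gap is the passage from ``$\norm{\lead-\curr}\to0$'' to ``every limit point is a \acl{NE}''. Your support-tracking heuristic---off-support scores would eventually outgrow on-support ones and force the action to re-enter---needs the iterates to linger near the candidate limit long enough for the score comparison to bite, but you only have \emph{subsequential} convergence $\state_{\iter[\run]}\to\strat_{\infty}$, and the raw dual variables $\curr[\statealt]$ themselves diverge when $\strat_{\infty}\in\bd\strats$. The paper sidesteps any support or boundary analysis via a multi-step variational argument. Boundedness of $\curr[\energy]$ forces the score \emph{differences} $\curr[\statealtalt]=\PI(\curr[\statealt])$ (payoff vectors modulo a player-wise additive constant, \cf \eqref{eq:Z-quotient}) to stay bounded, so $\statealtalt_{\iter[\run]}\to\scorediff_{\infty}$ along a subsequence; summability of both $\norm{\lead-\curr}^{2}$ and $\norm{\next-\lead}^{2}$ then gives $\state_{\iter[\run]+\offset}\to\strat_{\infty}$ for every \emph{fixed} offset $\offset=\halfrunning$. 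Now apply the optimality condition for $\mirror$ (\cref{lem:subsel-var-new}, which is where the smoothness hypothesis \eqref{eq:hsmooth} enters, to supply a continuous selection $\subsel\hreg$) to the aggregate $\offset$-step increment $\statealt_{\iter[\run]+\offset}-\statealt_{\iter[\run]}$: for all $\base\in\strats$,
\[
\braket*{\learn\insum_{j=1}^{\offset}\signal_{\iter[\run]+j-1/2}}{\base-\state_{\iter[\run]+\offset}}
\;\le\;
\braket{\subsel\hreg(\state_{\iter[\run]+\offset})-\statealtalt_{\iter[\run]}}{\base-\state_{\iter[\run]+\offset}}.
\]
Letting $\runalt\to\infty$ yields $\learn\,\offset\,\braket{\payfield(\strat_{\infty})}{\base-\strat_{\infty}}\le\braket{\subsel\hreg(\strat_{\infty})-\scorediff_{\infty}}{\base-\strat_{\infty}}$, whose right-hand side is a finite constant independent of $\offset$. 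Sending $\offset\to\infty$ then forces $\braket{\payfield(\strat_{\infty})}{\base-\strat_{\infty}}\le0$ for all $\base\in\strats$, which is exactly \eqref{eq:VI}. No casework on supports or decay rates of off-support mass is needed.
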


To the best of our knowledge, \cref{thm:convergence} is the first result of its kind for harmonic games \textendash\ and, in that regard, it is somewhat unexpected.
To be sure, \aclp{2ZSG} with a fully-mixed equilibrium exhibit a comparable pattern:
\ac{FTRL} is Poincaré recurrent in continuous time, its vanilla discretization is unstable, and its optimistic\,/\,forward-looking implementation is convergent.
However, the convex-concave structure of min-max games which enables this analysis is completely absent in harmonic games, so it is less clear what to expect in this case (where even the set of \aclp{NE} is non-convex, \cf \cref{fig:portraits}).
By this token, the convergence of \eqref{eq:FTRL+} in harmonic games is a property that one could optimistically hope for, but not one that can be taken for granted.

From a technical standpoint, the proof of \cref{thm:regret,thm:convergence} involves two concurrent challenges:
\begin{enumerate}
\item
Deriving a Lyapunov function with a ``sufficient descent'' property for all harmonic games.
\item
Providing an integrated analysis for all possible update structures in \eqref{eq:FTRL+}.
\end{enumerate}
With regard to the first point, our analysis hinges on the ``energy function''
\begin{align}
\label{eq:energy}
\energy(\base,\score)
	&= \sum_{\play\in\players} \frac{\mass_{\play}}{\learn_{\play}} \fench_{\play}(\base_{\play},\score_{\play})
	\qquad
	\base\in\strats,
	\score\in\scores,
\end{align}
In the above, $\base\in\strats$ is a benchmark strategy profile acting as a ``reference point'' for the analysis while
\begin{equation}
\label{eq:Fenchi-max}
\fench_{\play}(\base_{\play},\score_{\play})
	= \max_{\strat_{\play}\in\strats_{\play}} \setof{\braket{\score_{\play}}{\strat_{\play}} - \hreg_{\play}(\strat_{\play})}
		- \bracks{ \braket{\score_{\play}}{\base_{\play}} - \hreg_{\play}(\base_{\play})}
\end{equation}
denotes the \emph{Fenchel coupling} associated to the regularizer $\hreg_{\play}$ of player $\play\in\players$, and represents a ``primal-dual'' measure of divergence between $\base_{\play}\in\strats_{\play}$ and $\score_{\play}\in\scores_{\play}$ (for an in-depth discussion, see \cref{app:mirror,app:discrete}).
Then, letting $\curr[\energy] = \energy(\base,\curr[\score])$, the heavy lifting for our analysis is provided by the ``template inequality''
\begin{align}
\label{eq:template-main}
\next[\energy]
	\leq \curr[\energy]
		&+ \sum_{\play\in\players} \mass_{\play}
			\braket{\payv_{\play}(\lead)}{\leadi - \base_{\play}}
	\notag\\
		&+ \sum_{\play\in\players} \mass_{\play}
			\braket{\payv_{\play}(\lead) - \payfield_{\play}(\curr)}{\nexti - \leadi}
	\notag\\
		&+ \sum_{\play\in\players} \mass_{\play} (1 - \coef_{\play})
			\braket{\payfield_{\play}(\curr) - \payfield_{\play}(\beforelead)}{\nexti - \leadi}
	\notag\\
		&- \sum_{\play\in\players} \frac{\mass_{\play}\hstr_{\play}}{\learn_{\play}}
			\bracks*{\norm{\nexti - \leadi}^{2} + \norm{\leadi - \curri}^{2}}
	\eqdot
\end{align}

A first important consequence of \eqref{eq:template-main} is that the sequences $\curr[A] = \norm{\next - \lead}^{2}$ and $\curr[B] = \norm{\lead - \curr}^{2}$ are both summable:
this requires a repeated use of the Fenchel-Young inequality, and an instantiation of $\base$ to the strategic center $\heq$ of $\fingame$;
we detail the relevant arguments in \cref{app:harmonic,app:discrete}.
Then, by establishing a similar template inequality for \emph{each} player $\play\in\players$, we are able to bound the players' individual regret by the same upper bound that we derived for $\sum_{\run}\curr[A]$ and $\sum_{\run}\curr[B]$, and which is (up to certain secondary factors) the bound \eqref{eq:reg-bound}.

For the convergence to \acl{NE}, the summability argument above also plays a crucial role.
First, by a standard result on numerical sequences, the summability of $\curr[A]$ and $\curr[B]$ coupled with the template inequality \eqref{eq:template-main} implies that the energy $\curr[\energy]$ of the algorithm relative to the game's strategic center converges to some limit value $\energy_{\infty}$.
In turn, this implies that the score sequence $\curr[\statealt]$ is bounded up to a multiple of the vector $(1,\dotsc,1)$, which corresponds to a constant payoff shift in the underlying game.
Then, by focusing on convergent subsequences of $\curr[\statealt]$ and the optimality condition resulting from the definition of $\mirror$, we are able to show that any limit point of $\payv(\curr)$ satisfies the variational characterization \eqref{eq:VI} of \aclp{NE}, from which our claim follows.

\section{Concluding remarks}
\label{sec:discussion}

Our results suggest that the long-run behavior of no-regret algorithms and dynamics in harmonic games is a very rich topic, and one which opens the door to an entirely new class of games where positive convergence results can be obtained.
We find this particularly appealing, not only because harmonic games comprise the strategic complement of potential games, but also because they go beyond standard problems with a convex structure \textendash\ for instance, even their equilibrium set is non-convex.
As such, the fact that it is possible to obtain optimal regret guarantees and positive equilibrium convergence results in this setting is very promising for future work on the topic.

In terms of open questions, it would be important to examine the rate of convergence of \eqref{eq:FTRL+} to equilibrium.
Even though \eqref{eq:FTRL+} has order-optimal regret bounds, this only helps in establishing a convergence rate to the game's set of \aclp{CCE};
for \aclp{NE}, earlier work by \citet{GPD20} and some more recent results by \citet{Cai2022TightLC} and \citet{gorbunov2022lastiterate} have shed some light on the convergence of constrained Euclidean optimistic methods, but the technology therein does not extend to non-monotone, non-Euclidean problems.
Inspired by \citet{WLZL21}, we conjecture that the convergence rate of \eqref{eq:FTRL+} in harmonic games is linear:
this is based on the observation that any harmonic game admits a fully-mixed \acl{NE}, and the weighted sum in the definition of a harmonic game looks formally similar to the condition needed to establish metric subregularity in \cite{WLZL21};
however, a proof would likely require different techniques.

Another important research direction has to do with the information available to the players.
A first open question here concerns the case where players do not have access to full information on their mixed payoff vectors, but can only observe their pure payoffs \textendash\ either in a ``what if'', counterfactual manner, or in the form of bandit, payoff-based feedback.
In a similar manner, the algorithms presented here are not adaptive, in the sense that the players' step-size policy has to satisfy a certain bound that depends on correctly estimating some of the game's parameters.
Obtaining an adaptive version of \eqref{eq:FTRL+} which, in the spirit of \citet{RS13-NIPS} and \citet{HAM21,HACM22,HIMM22}, remains convergent and attains order-optimal regret in both adversarial and game-theoretic settings without any pre-play tuning is also an ambitious question for future research.

\section*{Acknowledgments}
\begingroup
\small
%
%
This research was supported in part by
the French National Research Agency (ANR) in the framework of
the PEPR IA FOUNDRY project (ANR-23-PEIA-0003),
the ``Investissements d’avenir program'' (ANR-15-IDEX-02), the LabEx PERSYVAL (ANR-11-LABX-0025-01), MIAI@Grenoble Alpes (ANR-19-P3IA-0003),
the project IRGA2024-SPICE-G7H-IRG24E90,
and
NSF grant CCF2212233.
PM is also with the Archimedes Research Unit/Athena RC \& NKUA and acknowledges financial support by project MIS 5154714 of the National Recovery and Resilience Plan Greece 2.0 funded by the European Union under the NextGenerationEU Program.
\endgroup

\appendix
\crefalias{section}{appendix}
\crefalias{subsection}{appendix}

\numberwithin{lemma}{section}	
\numberwithin{proposition}{section}	
\numberwithin{equation}{section}	

\section{Harmonic Games}
\label{app:harmonic}


The class of \acdefp{UHG} introduced by \citet{CMOP11} provides a game-theoretic framework for modeling strategic situations with conflicting, anti-aligned interests.\footnote{We include here the word ``uniform'' to distinguish the class of harmonic games introduced by \citet{CMOP11} from the more general class of \aclp{HG} considered in this work, \cf \cref{def:harmonic}.} Broadly speaking, the characterizing property of \aclp{UHG} is the following:  for any player considering a deviation towards a specific pure strategy profile, there exist other players who are motivated to deviate \textit{away} from that profile. 

Given a finite game $\fingame = \fingamefull$, this is formalized by the condition that, for all $\pure\in\pures$,
\begin{equation}
\label{eq:harmonic-uniform}
\sum_{\play \in \players} \sum_{\purealt_{\play} \in \pures_{\play}}
\big[ \dev{\play}{\pure}{\purealt} \big]
= 0 \eqdot
\end{equation}
From a strategic viewpoint, \aclp{UHG} complement potential games: \citet{CMOP11} showed that any finite game can be uniquely decomposed into the sum of a potential game and a \acl{UHG}, up to linear transformations of the payoff functions that do not change the strategic structure of the game.%

Since their introduction, harmonic games have generated a substantial body of literature; for a brief survey, we refer the reader to \citep{LMP24}.

\subsection{Harmonic games, measures and comeasures}
The class of \aclp{UHG} exhibits intriguing, yet restrictive, properties. Notably, a \ac{UHG} always admits the uniformly mixed strategy as a \ac{NE}, and it generally possesses a \textit{continuum} of \aclp{NE} \citep{CMOP11}.  Additionally, the framework of \acp{UHG} and the decomposition proposed by \citet{CMOP11} are incompatible with common game-theoretical transformations, such as the duplication of strategies or rescaling of payoffs \citep{APSV22}. To address the above limitations, \citet{APSV22} extended the definition of harmonic games by the introduction of two parameters: a \define{measure}, that is a positive weight each player assigns to each of \textit{their own} strategy; and a \define{comeasure}, that is a positive weight each player assigns to each of the \textit{other players'} action profiles.
\begin{definitionApp}
\label{def:meas-comeas}
Let $\fingamefull$ be a finite game. A \define{player measure} $\meas_{\play}$ is a function $\meas_{\play}\from\pures_{\play}\to\meascod$; a \define{player co-measure} $\comeas_{\play}$ is a function $\comeas_{\play}\from\pures_{\others}\to\meascod$. Correspondingly, a collection $\meas = \setof{\meas_{\play}}_{\play\in\players}$ (resp. $\comeas = \setof{\comeas_{\play}}_{\play\in\players}$) of player measures (resp. comeasures) is called \define{game measure} (resp. \define{game comeasure}).
If $\meas_{\play}$ is a player measure, we will write $\lnorm{\meas_{\play}} \defeq \sum_{\pure_{\play}}\meas_{\findex}$.
Finally,
a \define{probability measure} is a game measure $\meas$ such that $\lnorm{\meas_{\play}} = 1$ for all $\play \in \players$;
a \define{uniform measure} is a game measure $\meas$ such that $\meas_{\findex} = 1$ for all $\play\in\players, \pure_{\play} \in \pures_{\play}$; and
a \define{uniform comeasure} is a game comeasure $\comeas$ such that $\comeas_{\findexco} = 1$ for all $\play\in\players, \pure_{\others} \in \pures_{\others}$.
\end{definitionApp}
With these notions in place, \citet{APSV22} define a finite game $\fingame$ to be \define{$(\meas,\comeas)$-harmonic} if there exist a game measure $\meas$ and a game comeasure $\comeas$ such that, for all $\pure \in \pures$,
\begin{equation}
\label{eq:harmonic-measure-comeasure}
\sum_{\play} \sum_{\purealt_{\play}}
\meas_{\findexalt} \comeas_{\findexco}
\big[ \dev{\play}{\pure}{\purealt} \big]
= 0 \eqdot
\end{equation}
In this work, we focus solely on harmonic games with \textit{uniform comeasure}. As discussed after \cref{def:harmonic} in the main body of the article, this choice comes without loss of generality: the game comeasure in \cref{eq:harmonic-measure-comeasure} can be absorbed by a payoff rescaling to give a game that is still harmonic, and \define{preference equivalent} to the original game \textendash{} in a sense that we make precise in the next section.

\subsection{Preference equivalence between harmonic games}
\label{app:preference-equivalence}
The strategic structure of a game is preserved under monotonic transformations of the utility functions, since the set of pure Nash equilibria of a game is an ordinal object \textendash{} it depends only on the signs of unilateral payoff differences, and not on their absolute values. For this reason, two games $\fingame(\players,\pures,\pay)$ and $\alt\fingame(\players,\pures,\alt\pay)$ are called \acdef{PE} if for all $\pure,\purealt\in\pures$ and all $\play\in\players$, we have
\begin{equation}
\label{eq:ordinal-strat-equiv}
\sgn{ \big[ \alt\pay_{\play}(\purealt_{\play};\pure_{-\play}) - \alt\pay_{\play}(\pure_{\play};\pure_{-\play})\big]}
	= \sgn{ \big[\pay_{\play}(\purealt_{\play};\pure_{-\play}) - \pay_{\play}(\pure_{\play};\pure_{-\play})\big]}.
\end{equation}

Two games are \acdef{SE} \textendash\ and we write $\fingame \sim \alt\fingame$ \textendash\ if they have the same unilateral payoff differences, that is if
\begin{equation}
\label{eq:strat-equiv}
\alt\pay_{\play}(\purealt_{\play};\pure_{-\play}) - \alt\pay_{\play}(\pure_{\play};\pure_{-\play})
	= \pay_{\play}(\purealt_{\play};\pure_{-\play}) - \pay_{\play}(\pure_{\play};\pure_{-\play})
\end{equation}
for all $\pure,\purealt\in\pures$ and all $\play\in\players$; \acl{SE} games are clearly \acl{PE}.

\begin{lemmaApp}
\label{lemma:harmonic-preference-equivalence}
Let $\fingame_{\meas, \comeas} = \fingame_{\meas, \comeas}(\players, \pures, \pay)$ be a harmonic game in the sense of \cref{eq:harmonic-measure-comeasure}. Then the game $(\players, \pures, \payalt)$ with
$\payalt_{\play}(\ppure)
= \comeas_{\findexco} \pay_{\play}(\ppure)$ is \acl{PE} to the game $\fingame_{\meas, \comeas}$, and it is harmonic in the sense of \cref{eq:harmonic-measure-comeasure} with measure $\meas$ and uniform comeasure.
\end{lemmaApp}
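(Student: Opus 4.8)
The plan is to derive both assertions straight from the definition $\payalt_{\play}(\ppure) = \comeas_{\findexco}\,\pay_{\play}(\ppure)$, the sole structural ingredient being that the rescaling weight $\comeas_{\findexco}$ depends on the opponents' profile $\pure_{-\play}$ only, and is therefore constant along any unilateral deviation of player $\play$. Concretely, the first (and essentially only) step is to record the factorization
\begin{equation}
\label{eq:pe-factorization}
\payalt_{\play}(\pure_{\play};\pure_{-\play}) - \payalt_{\play}(\purealt_{\play};\pure_{-\play})
	= \comeas_{\findexco}\,\bracks*{\dev{\play}{\pure}{\purealt}}
	\qquad\text{for all $\play\in\players$ and $\pure,\purealt\in\pures$,}
\end{equation}
which holds because the common factor $\comeas_{\findexco}$ pulls out of the difference; everything else is a one-line consequence.

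For preference equivalence, I would apply $\sgn(\argdot)$ to both sides of \eqref{eq:pe-factorization}: since $\comeas_{\findexco}\in\meascod$ is strictly positive, this yields $\sgn\bracks{\payalt_{\play}(\purealt_{\play};\pure_{-\play}) - \payalt_{\play}(\pure_{\play};\pure_{-\play})} = \sgn\bracks{\pay_{\play}(\purealt_{\play};\pure_{-\play}) - \pay_{\play}(\pure_{\play};\pure_{-\play})}$ for all $\pure,\purealt\in\pures$ and $\play\in\players$, which is exactly the defining relation \eqref{eq:ordinal-strat-equiv}; hence $(\players,\pures,\payalt)$ is \acl{PE} to $\fingame_{\meas,\comeas}$. (I would also remark that the two games are typically \emph{not} \acl{SE}, since $\comeas$ alters the magnitudes of the unilateral payoff differences, \cf \eqref{eq:strat-equiv}.)

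For harmonicity, I would fix an arbitrary $\pure\in\pures$ and substitute \eqref{eq:pe-factorization} into the defining relation \eqref{eq:harmonic-measure-comeasure} written for the game $(\players,\pures,\payalt)$ with measure $\meas$ and uniform comeasure (so that the comeasure slot is $\equiv 1$):
\begin{equation}
\label{eq:harm-reduction}
\sum_{\play}\sum_{\purealt_{\play}} \meas_{\findexalt}\,\bracks*{\payalt_{\play}(\pure_{\play};\pure_{-\play}) - \payalt_{\play}(\purealt_{\play};\pure_{-\play})}
	= \sum_{\play}\sum_{\purealt_{\play}} \meas_{\findexalt}\,\comeas_{\findexco}\,\bracks*{\dev{\play}{\pure}{\purealt}}
	= 0,
\end{equation}
where the last equality is precisely \eqref{eq:harmonic-measure-comeasure} for the original game $\fingame_{\meas,\comeas}$, which holds by hypothesis. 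Since $\pure\in\pures$ was arbitrary, $(\players,\pures,\payalt)$ is harmonic with measure $\meas$ and uniform comeasure, which completes the argument.

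I do not expect any genuine obstacle in this lemma: the entire content is the factorization \eqref{eq:pe-factorization}, which is available precisely because a unilateral deviation leaves $\pure_{-\play}$ — and hence $\comeas_{\findexco}$ — unchanged. The only point deserving a modicum of care is the index bookkeeping in \eqref{eq:harmonic-measure-comeasure}: one must confirm that the rescaling is absorbed by the comeasure factor (the one carrying the $\pure_{-\play}$-dependence) and not by the player measure $\meas_{\findexalt}$, so that the resulting weight $\meas_{\findexalt}\comeas_{\findexco}$ is exactly the weight appearing in the harmonicity condition for $\fingame_{\meas,\comeas}$.
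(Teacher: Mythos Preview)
Your proposal is correct and follows essentially the same approach as the paper: both arguments hinge on the fact that the comeasure weight $\comeas_{\findexco}$ is constant along any unilateral deviation of player $\play$, so it factors out of payoff differences and is absorbed exactly by the comeasure slot in \eqref{eq:harmonic-measure-comeasure}. The paper routes this through an intermediate game $\payaltalt_{\play} = \meas_{\findex}\comeas_{\findexco}\pay_{\play}$ before dividing out $\meas$ (a detour made for expository reasons, to contrast measure- and comeasure-rescaling), whereas you go straight to the factorization \eqref{eq:pe-factorization}; the mathematical content is identical.
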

\begin{proof}
Let $\payaltalt_{\play}(\ppure) = \meas_{\findex}\comeas_{\findexco} \pay_{\play}(\ppure)$. Then replacing above, for all $\pure \in \pures$,
\[
\begin{split}
0 
& = \sum_{\play\in\players} \sum_{\purealt_{\play}\in\pures_{\play}}
\meas_{\findexalt} 
\left[\frac{\payaltalt_{\play}(\ppure)}{\meas_{\findex}} - \frac{\payaltalt_{\play}(\purealt_{\play}; \pure_{\others})}{\meas_{\findexalt}} \right] \eqdot
\end{split}
\]
Let
$\payalt_{\play}(\ppure)
= \frac{\payaltalt_{\play}(\ppure)}{\meas_{\findex}}
= \comeas_{\findexco} \pay_{\play}(\ppure)$. The game $\payalt$ is \acl{PE} to $\pay$, and
\begin{equation}
0 = 
\sum_{\play\in\players} \sum_{\purealt_{\play}\in\pures_{\play}}
\meas_{\findexalt} 
\big[ \payalt_{\play}(\pure_{\play}; \pure_{\others}) - \payalt_{\play}(\purealt_{\play}; \pure_{\others})   \big]
\end{equation}
for all $\pure \in \pures$, so $\payalt$ is harmonic in the sense of \ref{eq:harmonic-measure-comeasure} with measure $\meas$ and uniform comeasure.
\end{proof}
In the proof above we perform the intermediate step $\pay \to \payaltalt$ rather than defining directly $\pay \to \payalt$ to stress the difference between rescaling the payoffs of a game by a game measure $\meas$ and by a game comeasure $\comeas$. The game with payoffs $\payalt = \comeas \pay$ (the meaning of this notation made precise in the proof above) is \acl{PE} to the game with payoffs $\pay$, \ie rescaling the payoffs by a comeasure does not change the strategic structure of the game. On the other hand, the game with payoffs $\payaltalt = \meas  \payalt$ \textendash{} again, the meaning made precise in the proof \textendash{} is \textit{not} \ac{PE} to the game with payoffs $\payalt$: rescaling the payoffs by a measure can change the preferences of the players, and leads to a game with intrinsically different strategic structure.

\cref{lemma:harmonic-preference-equivalence} motivates our choice to focus in this work on harmonic games with arbitrary measures and uniform comeasures, and to adopt \eqref{eq:harmonic} from \cref{def:harmonic} to characterize \aclp{HG}: a \acdef{HG} $\HG = \HGfull$ is a finite game $(\players, \pures, \pay)$ with a game measure $\meas$ such that \eqref{eq:harmonic} holds, \ie
\(
\insum_{\play \in \players}
	\insum_{\purealt_{\play} \in \pures_{\play}}
	\meas_{\play\purealt_{\play}}
	\bracks{ \pay_{\play}(\pure_{\play};\pure_{-\play}) - \pay_{\play}(\purealt_{\play};\pure_{-\play}) }
= 0
\)
for all $\pure\in\pures$.

\subsection{Mixed characterization of harmonic games}

The defining property \eqref{eq:harmonic} allows for an equivalent characterization of harmonic games in terms of their mixed payoffs:
\begin{lemmaApp}
\label{lemma:harmonic-mixed}
A finite game $\fingame = \fingamefull$ is harmonic with measure $\meas$ if and only if
\begin{equation}
\label{eq:harmonic-mixed}
\tag{HG-mixed}
\sum_{\play \in \players} \lnorm{\meas_{\play}} \, 
\Big\langle
\payfield_{\play}\of\strat , \strat_{\play} - \frac{\meas_{\play}}{\lnorm{\meas_{\play}}}
\Big\rangle
= 0
\quad \text{for all } \strat \in \strats \eqdot
\end{equation}
\end{lemmaApp}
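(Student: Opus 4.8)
The plan is to reduce the claimed equivalence to the elementary fact that a multilinear function on the product of simplices $\strats = \prod_{\play}\strats_{\play}$ vanishes identically if and only if it vanishes at every vertex, equivalently at every pure profile $\pure\in\pures$.

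First I would rewrite both sides of the equivalence as the vanishing of a single scalar quantity. On the pure side, since $\pay_{\play}(\pure_{\play};\pure_{-\play})$ does not depend on the dummy index $\purealt_{\play}$, the defining relation \eqref{eq:harmonic} of a harmonic game is exactly the statement that
\[
\Phi(\pure)
	\defeq \sum_{\play\in\players} \mass_{\play}\,\pay_{\play}(\pure)
		- \sum_{\play\in\players}\sum_{\purealt_{\play}\in\pures_{\play}} \meas_{\play\purealt_{\play}}\,\pay_{\play}(\purealt_{\play};\pure_{-\play})
\]
vanishes for every $\pure\in\pures$, where $\mass_{\play} = \lnorm{\meas_{\play}} = \sum_{\purealt_{\play}}\meas_{\play\purealt_{\play}}$. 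On the mixed side, expanding the pairing in \eqref{eq:harmonic-mixed} via $\braket{\payfield_{\play}(\strat)}{\strat_{\play}} = \pay_{\play}(\strat)$ and $\braket{\payfield_{\play}(\strat)}{\meas_{\play}} = \sum_{\pure_{\play}}\meas_{\play\pure_{\play}}\,\pay_{\play}(\pure_{\play};\strat_{-\play})$ shows that \eqref{eq:harmonic-mixed} is precisely the statement that
\[
\Psi(\strat)
	\defeq \sum_{\play\in\players}\mass_{\play}\,\pay_{\play}(\strat)
		- \sum_{\play\in\players}\sum_{\pure_{\play}\in\pures_{\play}}\meas_{\play\pure_{\play}}\,\pay_{\play}(\pure_{\play};\strat_{-\play})
\]
vanishes for every $\strat\in\strats$.

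The crux of the argument is the identity $\Psi(\strat) = \exwrt{\purealt\sim\strat}{\Phi(\purealt)}$. I would obtain it from multilinearity of the payoffs: writing $\pay_{\play}(\strat) = \exwrt{\purealt\sim\strat}{\pay_{\play}(\purealt)}$ and $\pay_{\play}(\pure_{\play};\strat_{-\play}) = \exwrt{\purealt\sim\strat}{\pay_{\play}(\pure_{\play};\purealt_{-\play})}$ (the latter using that $\pay_{\play}(\pure_{\play};\purealt_{-\play})$ is independent of the coordinate $\purealt_{\play}$), substituting into the definition of $\Psi$, and merging the two sums under the common expectation $\exwrt{\purealt\sim\strat}{\argdot} = \sum_{\purealt\in\pures}\strat_{\purealt}\,(\argdot)$ with $\strat_{\purealt} = \prod_{\play}\strat_{\play\purealt_{\play}}$. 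Granting this, both directions follow at once: if $\fingame$ is harmonic then $\Phi\equiv 0$ on $\pures$, hence $\Psi\equiv 0$ on $\strats$, that is, \eqref{eq:harmonic-mixed} holds; conversely, if \eqref{eq:harmonic-mixed} holds then $\Psi\equiv 0$, and evaluating at a pure profile $\strat = \pure$ (for which $\strat_{\purealt} = \oneof{\purealt = \pure}$, so $\Psi(\pure) = \Phi(\pure)$) gives $\Phi(\pure) = 0$ for all $\pure\in\pures$, so $\fingame$ is harmonic. I do not expect a genuine obstacle here: this is a short computation, and the only point requiring a little care is the bookkeeping in the derivation of $\Psi(\strat) = \exwrt{\purealt\sim\strat}{\Phi(\purealt)}$ — specifically, tracking which strategy coordinates each payoff term depends on, so that the two double sums legitimately combine under one expectation.
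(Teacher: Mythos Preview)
Your proof is correct and follows essentially the same approach as the paper: the paper defines $F(\pure) = \sum_{\play} F_{\play}(\pure)$ with $F_{\play}(\pure) = \sum_{\purealt_{\play}}\meas_{\play\purealt_{\play}}\bracks{\pay_{\play}(\pure_{\play};\pure_{-\play}) - \pay_{\play}(\purealt_{\play};\pure_{-\play})}$, which is exactly your $\Phi$, passes to its multilinear extension $F(\strat) = \sum_{\pure}\strat_{\pure}F(\pure)$ (your $\Psi(\strat) = \exwrt{\purealt\sim\strat}{\Phi(\purealt)}$), and uses the same ``vanishes on vertices iff vanishes everywhere'' principle. The only difference is cosmetic: the paper expands the multilinear extension step by step rather than packaging it as an expectation identity.
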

\begin{proof}
Given a finite game $\fingame = \fingamefull$ and a game measure $\meas$, let $F_{\play}\from\pures\to\R$ be defined by
$F_{\play}(\pure) = \sum_{\purealt_{\play}\in\pures_{\play}} 
\meas_{\findexalt}
\bracks{\dev{\play}{\pure}{\purealt}}$.
 By definition, $\fingame$ is a $\meas$-\acl{HG} if and only if $F(\pure) \defeq \sum_{\play\in\players} F_{\play}\of\pure = 0$ for all $\pure \in \pures$. Denote (with slight abuse of notation) by $F\from \strats \to \R$ the multilinear extension of $F\from\pures\to\R$, \ie $F(\strat) = \sum_{\pure}\strat_{\pure}F(\pure)$, with $\strat_{\pure} \defeq \prod_{\play} \strat_{\play \pure_{\play}}$. Now, $F(\pure) = 0$ for all $\pure \in \pures$ if and only if $F(\strat) = 0$ for all $\strat \in \strats$, which is the case if and only if
\begin{flalign}
0 & = F(\strat) = \insum_{\pure} \strat_{\pure} \insum_{\play} F_{\play}(\pure)
	\notag\\
	&= \insum_{\play} \insum_{\pure_{\play}} \insum_{\pure_{\others}} \strat_{\findex} \strat_{\findexothers} \insum_{\purealt_{\play}} 
\meas_{\findexalt}
\bracks{\dev{\play}{\pure}{\purealt}}
	\notag\\
	&= \insum_{\play} \insum_{\purealt_{\play}} \meas_{\findexalt} \bracks{ \dev{\play}{\strat}{\purealt} }
	\notag\\
	&= \insum_{\play} \big[\lnorm{\meas_{\play}} \dualp{\payfield_{\play}\of\strat}{\strat_{\play}} - \dualp{\payfield_{\play}(\strat)}{\meas_{\play}}\big] \quad \text{for all } \strat \in \strats,
\end{flalign}
from which our claim follows by factoring out the terms involving $\lnorm{\meas_{\play}}$.
\end{proof}
\begin{remark*}
The first equality in the second line holds true for harmonic games with \textit{uniform} comeasure $\comeas_{\findexco} = 1$, since $\comeas_{\findexco} \neq 1$ terms would couple with the corresponding $\strat_{\findexothers}$ terms in the sum.
\end{remark*}
The above result can be  reformulated as follows:
\begin{propositionApp}
\label{prop:harmonic-center}
A finite game $\fingame = \fingamefull$ is harmonic if and only if it admits a \define{strategic center} $\stratcenter$, \viz if there exist
\begin{enumerate*}[(\itshape i\hspace*{1pt}\upshape)]
\item a vector $\wei \in \R^{\nPlayers}_{++}$ and
\item a fully mixed strategy $\heq \in \strats$ such that
\end{enumerate*}
\begin{equation}
\label{eq:harmonic-center}
\tag{HG-center}
\sum_{\play \in \players} \wei_{\play} \, 
\dualp{\payfield_{\play}\of\strat }{ \strat_{\play} - \heq_{\play}}
= 0
\quad \text{for all } \strat \in \strats \eqdot
\end{equation}
\end{propositionApp}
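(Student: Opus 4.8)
The plan is to read Proposition~\ref{prop:harmonic-center} off Lemma~\ref{lemma:harmonic-mixed} by a change of variables: the mixed characterization already says that $\fingame$ is harmonic with measure $\meas$ iff $\sum_{\play\in\players} \lnorm{\meas_\play}\, \dualp{\payfield_\play\of\strat}{\strat_\play - \meas_\play/\lnorm{\meas_\play}} = 0$ for all $\strat\in\strats$, and the only remaining task is to recognize $\lnorm{\meas_\play}$ as the desired positive weight $\wei_\play$ and $\meas_\play/\lnorm{\meas_\play}$ as the fully mixed reference profile $\heq_\play$ (and vice versa).

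For the forward implication I would assume $\fingame$ is harmonic with measure $\meas$ and set $\wei_\play \defeq \lnorm{\meas_\play} = \mass_\play$ and $\heq_\play \defeq \meas_\play/\lnorm{\meas_\play}$ for each $\play\in\players$. Since every weight $\meas_{\findex}$ is strictly positive, each $\wei_\play>0$, so $\wei\in\R^{\nPlayers}_{++}$; moreover each $\heq_\play$ is a probability vector on $\pures_\play$ with all entries strictly positive, so $\heq = (\heq_1,\dotsc,\heq_{\nPlayers})$ is a fully mixed strategy profile. Substituting these definitions into Lemma~\ref{lemma:harmonic-mixed} yields \eqref{eq:harmonic-center} verbatim, so $\stratcenter$ is a strategic center of $\fingame$.

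For the converse I would start from a strategic center $\stratcenter$ with $\wei\in\R^{\nPlayers}_{++}$ and $\heq\in\strats$ fully mixed, and define a game measure by $\meas_{\findex}\defeq \wei_\play\,\heq_{\findex}$ for $\pure_\play\in\pures_\play$, $\play\in\players$. Positivity of $\wei_\play$ together with strict positivity of every coordinate $\heq_{\findex}$ (which is exactly what "fully mixed" gives) ensures $\meas_{\findex}\in(0,\infty)$, so $\meas$ is a legitimate harmonic measure candidate. Since $\heq_\play\in\strats_\play$ sums to one, $\lnorm{\meas_\play} = \wei_\play$, hence $\meas_\play/\lnorm{\meas_\play} = \heq_\play$; feeding these two identities back into \eqref{eq:harmonic-center} reproduces the left-hand side of \eqref{eq:harmonic-mixed}, so Lemma~\ref{lemma:harmonic-mixed} certifies that $\fingame$ is harmonic with measure $\meas$.

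There is no real obstacle beyond bookkeeping: the substance of the proposition is entirely contained in Lemma~\ref{lemma:harmonic-mixed}, and the proposition merely repackages the data of a harmonic measure $\meas$ as the pair formed by the total-mass vector $(\mass_\play)_{\play\in\players}$ and the normalized profile $(\meas_\play/\mass_\play)_{\play\in\players}$. The only point that deserves an explicit word is the strict-positivity correspondence — a harmonic measure has strictly positive weights by definition, which is precisely what makes $\heq$ fully mixed and $\wei$ coordinatewise positive, and conversely the positivity constraints built into the notion of a strategic center are exactly what is needed to recover a genuine harmonic measure.
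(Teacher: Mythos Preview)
Your proof is correct and follows essentially the same approach as the paper: both directions are obtained from Lemma~\ref{lemma:harmonic-mixed} via the change of variables $\wei_{\play}=\lnorm{\meas_{\play}}$, $\heq_{\play}=\meas_{\play}/\lnorm{\meas_{\play}}$ and its inverse $\meas_{\play}=\wei_{\play}\heq_{\play}$. Your treatment is in fact slightly more explicit than the paper's about the positivity checks needed for $\heq$ to be fully mixed and for $\meas$ to be a bona fide measure.
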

This expression is intriguing: it suggest that a game is harmonic precisely if there exists a fully mixed strategy $\heq$ such that, for all $\strat \in \strats$,
the payoff vector $\payfield\of\strat$ is perpendicular (with respect to a $\wei$-weighted inner product) to $\strat - \heq$; \cf \cref{ex:harmonic-game} and \cref{fig:harmonic-game}. The  striking dynamical consequences of this ``circular'' strategic structure \textendash{} hinted at in \cref{fig:harmonic-game}, showing a \textit{periodic} orbit of FTRL in continuous time \textendash{} are captured precisely by  \cref{thm:recurrence} in the main text.
\begin{proof}[Proof of \cref{prop:harmonic-center}]
Let $\HG = \HGfull$ be harmonic; then by \cref{lemma:harmonic-mixed} that there exist a strategic center $\stratcenter$ given by $\wei_{\play} \defeq \lnorm{\meas_{\play}}$ and $\heq_{\play} \defeq \meas_{\play} / \lnorm{\meas_{\play}}$ with $\play \in \players$. Conversely let $\fingame = \fingamefull$ admit a strategic center $(\wei, \heq)$; then by the same argument $\fingame$ is harmonic with $\meas_{\play}  \defeq \wei_{\play}\heq_{\play}$ for all $\play \in \players$.
\end{proof}
An immediate corollary is the following:
\begin{corollaryApp}
If a finite game $\fingame$ admits a strategic center $\stratcenter$, then $\heq$ is a Nash equilibrium.
\end{corollaryApp}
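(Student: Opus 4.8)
The plan is to verify the Nash condition \eqref{eq:Nash} directly, by testing the defining identity \eqref{eq:harmonic-center} of the strategic center against unilateral deviations from $\heq$. Fix a player $\play\in\players$ and an arbitrary deviation $\strat_{\play}\in\strats_{\play}$, and apply \eqref{eq:harmonic-center} to the strategy profile $\strat = (\strat_{\play};\heq_{-\play})$ that agrees with $\heq$ in every coordinate except the $\play$-th.

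For every opponent $\playalt\neq\play$ we then have $\strat_{\playalt} = \heq_{\playalt}$, so the corresponding summand $\wei_{\playalt}\dualp{\payfield_{\playalt}(\strat)}{\strat_{\playalt}-\heq_{\playalt}}$ vanishes, and \eqref{eq:harmonic-center} collapses to $\wei_{\play}\dualp{\payfield_{\play}(\strat_{\play};\heq_{-\play})}{\strat_{\play}-\heq_{\play}} = 0$. The key observation is that the individual payoff field $\payfield_{\play}$ depends only on the opponents' strategies $\strat_{-\play}$, as is clear from \eqref{eq:payv}, so that $\payfield_{\play}(\strat_{\play};\heq_{-\play}) = \payfield_{\play}(\heq)$; since $\wei_{\play}>0$ by the definition of a strategic center, we conclude $\dualp{\payfield_{\play}(\heq)}{\strat_{\play}-\heq_{\play}} = 0$, that is $\pay_{\play}(\strat_{\play};\heq_{-\play}) = \pay_{\play}(\heq)$.

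As $\play$ and $\strat_{\play}$ were arbitrary, this yields not merely \eqref{eq:Nash} but the stronger ``equalizing'' property that every strategy of player $\play$ is a best response to $\heq_{-\play}$; in particular $\heq$ is a \acl{NE}. I do not anticipate any genuine obstacle here — the argument is a single substitution into \eqref{eq:harmonic-center} — the only point worth flagging being the reduction of $\payfield_{\play}(\strat_{\play};\heq_{-\play})$ to $\payfield_{\play}(\heq)$, which is exactly what makes $(\strat_{\play};\heq_{-\play})$ the right test profile to plug in. (Alternatively, one could argue via the variational characterization \eqref{eq:VI}, but the coordinatewise argument above is the most economical route.)
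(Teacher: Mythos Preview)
Your argument is correct. It is also a genuinely different route from the paper's. The paper's proof is indirect: it invokes \cref{prop:harmonic-center} to convert the strategic center $(\wei,\heq)$ into a harmonic measure $\meas_{\play} = \wei_{\play}\heq_{\play}$, and then appeals to an external result \citep[Theorem~1]{APSV22} asserting that the normalized measure $(\meas_{\play}/\lnorm{\meas_{\play}})_{\play\in\players}$ is always a \acl{NE} of a $\meas$-harmonic game. Your approach is more direct and self-contained: by testing \eqref{eq:harmonic-center} at the unilateral deviation $(\strat_{\play};\heq_{-\play})$ and exploiting that $\payfield_{\play}$ depends only on $\strat_{-\play}$, you obtain the Nash property in one step without any external citation. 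As a bonus, you extract the stronger equalizing statement $\pay_{\play}(\strat_{\play};\heq_{-\play}) = \pay_{\play}(\heq)$ for all $\strat_{\play}$, which the paper's proof does not make explicit.
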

\begin{proof}
By \cref{prop:harmonic-center} if $\fingame$ admits a strategic center $\stratcenter$ then it is $\meas$-harmonic with $\meas_{\play} = \wei_{\play} \heq_{\play}$ for all $\play\in\players$; and $(\meas_{\play} / \lnorm{\meas_{\play}})_{\play\in\players}$ is always a \ac{NE} for $\meas$-\aclp{HG} \citep[Theorem 1]{APSV22}.
\end{proof}
\begin{remark*}
The converse does not hold: a fully mixed \acl{NE} is not necessarily a strategic center. If it were, a game would be harmonic precisely if it admitted a fully mixed \ac{NE}, which is not the case \textendash{} think for example of coordination or anti-coordination games, that admit a fully mixed \acl{NE} and are not harmonic.

\end{remark*}

\begin{example}[A harmonic game: Siege]
\newcommand{\cost}{c} 
\newcommand{\ats}{a_{s}} 
\newcommand{\atf}{a_{f}} 
\newcommand{\dfs}{d_{s}} 
\newcommand{\dff}{d_{f}} 
\label{ex:harmonic-game}
\begin{figure*}[tbp]
\centering
\includegraphics[height=19em]{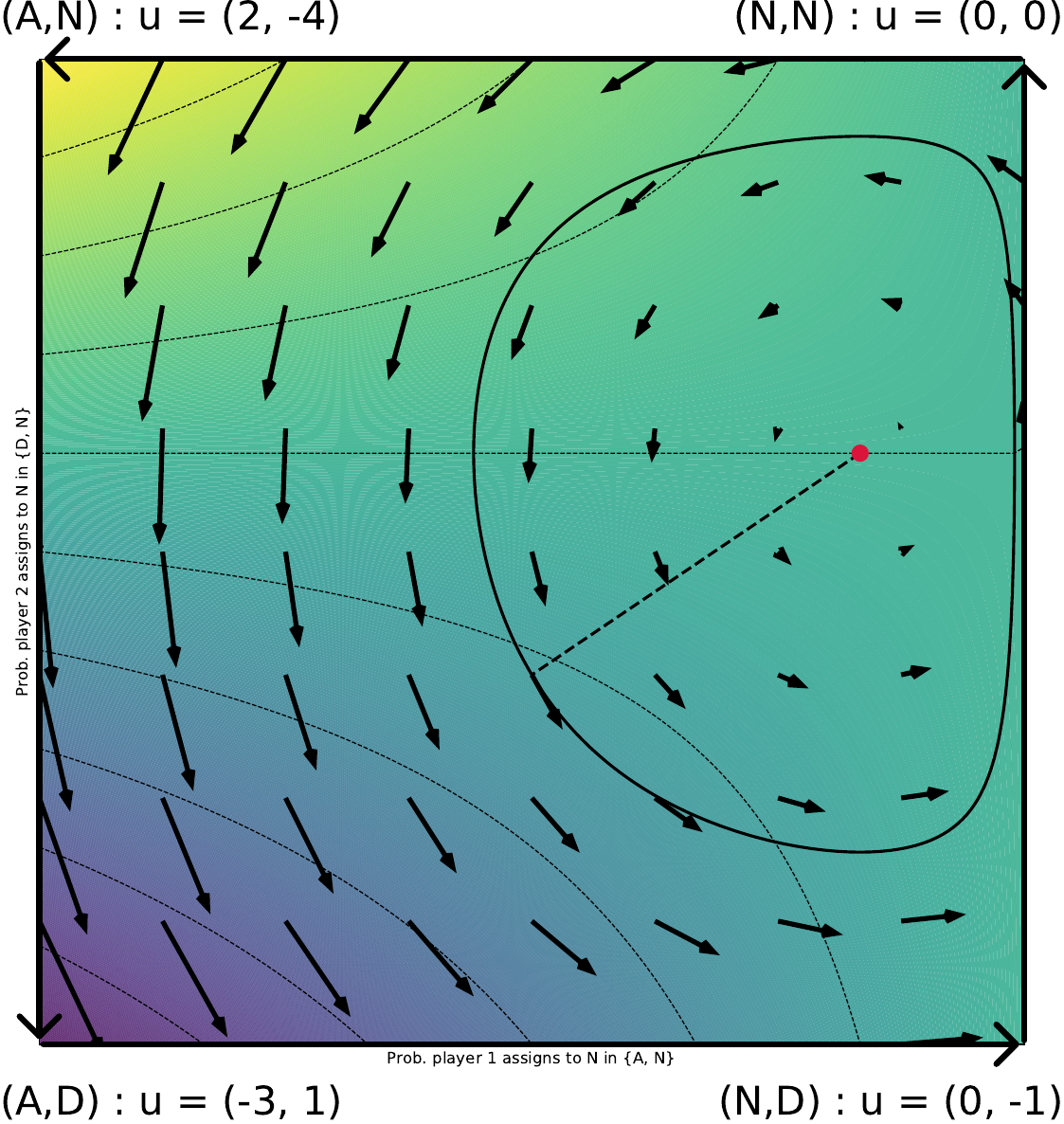}
\caption{Representation of the harmonic payoff structure for the game in \cref{ex:harmonic-game}. Each payoff vector $\payfield\of\strat$ (black arrows) is perpendicular (with respect to a weighted inner product) to the vector $\strat - \heq$ (dotted segment) between the evaluation point $\strat$ of the payoff field and the fully mixed Nash equilibrium $\heq$ (red point). As a consequence every orbit of FTRL in continuous time (such as the one represented by the black curve) is \textit{Poincaré recurrent} (in this low-dimensional example, even periodic), as detailed in \cref{thm:recurrence} in the main text.
Color shading and dotted lines represents player $1$'s utility level sets, with brighter regions indicating higher payoffs.
}
\label{fig:harmonic-game}
\end{figure*}

Consider the following  $2 \times 2$ game: an army (the row player) must choose between Attacking a fortress (pure strategy  $A$ ) and Not attacking (pure strategy  $N$ ). Simultaneously, the fortress (the column player) decides whether to activate its Defenses (pure strategy  $D$ ) or Not (pure strategy  $N$ ). Engaging in either action (the attack or the defense) incurs a preparation cost of $ \cost > 0 $. The army gains
$ \ats > \cost $
if it attacks an undefended fortress, but suffers a loss of $ \atf > 0 $ if it attacks and encounters defenses (the subscripts $s$ and $f$ standing respectively for ``successful'' and ``failed''). Conversely, the fortress benefits by $\dfs > 0 $ if it is defended against an attack, while it incurs a loss of $ \dff > 0 $ if attacked without defenses; defeating the attacking army is worth the preparation cost for the fortress, namely
$\dfs - \cost > -\dff$.
This scenario is captured by the following payoff matrix, specialized on the right to the case
$\cost = 1,
\ats = 3,
\atf = 2,
\dfs = 2,
\dff = 4$:

\begin{equation}
\label{eq:example-harmonic-payoff}
\renewcommand{\arraystretch}{1.2}  
 \begin{array}{c|cc}
	& D & N \\ \hline
A 	& \left( -a_{f} - c, \  d_{s} - c\right) & \left( a_{s} - c, \  -d_{f}\right)\\
N 	& \left( 0, \  -c\right) & \left( 0, \  0\right)
\end{array}
\quad
 \begin{array}{c|cc}
	& D & N \\ \hline
A 	& -3, 1 & 2, -4\\
N 	& 0, -1 & 0,  0
\end{array}
\end{equation}
To determine if the game is harmonic, look for a solution of the linear system
\[
\EqHarmonic \, ,
\]
subject to the constraints $\meas_{\findex} > 0$ for all $\play \in \players, \pure_{\play} \in \pures_{\play}$. For a fixed payoff function $\pay$, this is a system of $\prod_{\playalt \in \players}\nPures_{\playalt}$ linear equations (one for each $\pure \in \pures$) in the $\sum_{\playalt \in \players}\nPures_{\playalt}$ variables
$\left(
	(
		\meas_{\findex}
	)_{\pure_{\play} \in \pures_{\play}}
\right)_{\play\in\players}$, where $\nPures_{\play}$ is the number of pure actions of player $\play\in\players$. With $\pay$ given by \eqref{eq:example-harmonic-payoff} \textendash{} left,
\begin{equation}
\meas = \lambda
\left[
	\left( 
		\frac{c}{a_{f} + c}, 
		\frac{- c + d_{f} + d_{s}}{a_{f} + c}
	\right), 
	\left(
		\frac{a_{s} - c}{a_{f} + c} , 
		1
	\right)
\right]
\end{equation}
is a feasible solution of \eqref{eq:harmonic} for any $\lambda > 0$, so the game is harmonic with a $1$-dimensional set of measures.
The corresponding strategic center $(\mass, \heq)$ with $\wei_{\play} = \sum_{\pure_{\play}} \meas_{\findex}$, $\heq_{\play} = \meas_{\play} / \mass_{\play}$, $\play \in \{1, 2\}$ is
\begin{equation}
\mass = \lambda 
\left(
	\frac{ d_{f} + d_{s}}{a_{f} + c}, 
	\frac{a_{f} + a_{s}}{a_{f} + c}
\right), 
\quad
\heq =
\left[
	\left(
		\frac{c}{d_{f} + d_{s}}, 
		\frac{- c + d_{f} + d_{s}}{d_{f} + d_{s}}
	\right), 
	\left(
		\frac{a_{s} - c}{a_{f} + a_{s}},
		\frac{a_{f} + c}{a_{f} + a_{s}}
	\right)
\right].
\end{equation}
As a sanity check, compute the payoff field and verify that \eqref{eq:harmonic-center} holds true in the specialized case \eqref{eq:example-harmonic-payoff} \textendash{} right. Denoting the mixed strategies of players $1$ and $2$ respectively by $x \in \Delta(\{A,N\})$ and $y \in \Delta(\{D,N\})$, the payoff fields are
$
\payfield_{1}(x, y) = \left(  - 3 y_{D} + 2 y_{N}, 0 \right) , 
\quad
\payfield_{2}(x, y) = \left(  x_{A} - x_{N}, -4x_{A}\right) .
$
Choosing $\lambda = 3$ the strategic center gives weights $\mass = (6, 5)$ and Nash equilibrium
$\heq = \left[ \left(1/6, 5/6\right), \left(2/5, 3/5\right) \right]$. Condition \eqref{eq:harmonic-center} boils down to
$6 	\, \dualp{\payfield_{1}}{x - \heq_{1}}
+ 5 \, \dualp{\payfield_{2}}{y - \heq_{2}} = 0$,
which one readily verifies to hold true by replacing the expressions above and recalling that $x_{A} + x_{N} = 1 = y_{D} + y_{N}$. \cref{fig:harmonic-game} illustrates the situation: each payoff vector $\payfield\of\strat$ (black arrows) is perpendicular (with respect to a weighted inner product) to the vector $\strat - \heq$ (dotted segment) between the evaluation point $\strat$ of the payoff field and the fully mixed Nash equilibrium $\heq$ (red point). 
\endenv
\end{example}

\subsection{Harmonic and zero-sum games}

\citet{CMOP11}'s \aclp{UHG}, defined by \cref{eq:harmonic-uniform}, are precisely the \aclp{HG} with uniform measure, which makes \aclp{UHG} a strict subset of the set of \acp{HG}. Importantly, \acp{HG} include another archetypal class of perfect-competition games: as we show in this section, \acdefp{2ZSG} with an interior \ac{NE} $\eq$ are harmonic with (probability) measure $\meas = \eq$.


To show this, we will need the following definition and lemma:
\begin{definitionApp}[Non-strategic game]
\label{def:non-strategic-game}
A finite normal form game $\fingame = \fingame(\players, \pures, \payns)$ is called \define{non-strategic} if the payoff of each player does not depend on their own choice, \viz if
$
\payns_\play(\pure_\play, \pure_{\others}) =  \payns_\play(\purealt_\play, \pure_{\others})
$
for all $\play \in \players, \pure \in \pures, \purealt_{\play} \in \pures_{\play}$.
\end{definitionApp}
\begin{lemmaApp}
\label{lemma:strat-equiv-iff-difference-non-strategic}
Two finite games $\fingamefull, \alt\fingame(\players,\pures,\alt\pay)$ are strategically equivalent in the sense of \cref{eq:strat-equiv} if and only if their difference is a non-strategic game.
\end{lemmaApp}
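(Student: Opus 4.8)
The plan is to make the difference of the two games explicit and then simply match definitions. First I would introduce the \emph{difference game}: the finite game $(\players,\pures,\payns)$ on the same players and action sets, with payoffs $\payns_{\play}(\pure) \defeq \alt\pay_{\play}(\pure) - \pay_{\play}(\pure)$ for every $\play \in \players$ and $\pure \in \pures$. (Whether one takes $\alt\pay - \pay$ or $\pay - \alt\pay$ as ``the difference'' is immaterial, since a game is non-strategic precisely when its negative is.)

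The core step is a one-line algebraic rearrangement of the defining relation \eqref{eq:strat-equiv}. Transposing the two $\pay$-terms, \eqref{eq:strat-equiv} is seen to be equivalent to
\[
\alt\pay_{\play}(\purealt_{\play};\pure_{\others}) - \pay_{\play}(\purealt_{\play};\pure_{\others})
	= \alt\pay_{\play}(\pure_{\play};\pure_{\others}) - \pay_{\play}(\pure_{\play};\pure_{\others})
	\quad\text{for all } \pure,\purealt \in \pures,\ \play \in \players,
\]
that is, $\payns_{\play}(\purealt_{\play};\pure_{\others}) = \payns_{\play}(\pure_{\play};\pure_{\others})$ for all $\pure,\purealt\in\pures$ and all $\play\in\players$. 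I would then observe that this is verbatim the condition of \cref{def:non-strategic-game} asserting that $(\players,\pures,\payns)$ is non-strategic: with the opponents' profile $\pure_{\others}$ held fixed, player $\play$'s payoff in the difference game does not depend on their own action. Running the same equalities backwards yields the converse implication, so the two conditions are equivalent.

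There is essentially no obstacle here: the only point worth a moment's attention is that the quantifier ``for all $\pure,\purealt\in\pures$'' in \eqref{eq:strat-equiv} may harmlessly be read as ranging over pairs sharing a common opponent profile $\pure_{\others}$, since only the $\play$-th coordinates of $\pure$ and $\purealt$ enter either side — hence the quantifier in \eqref{eq:strat-equiv} and the one in \cref{def:non-strategic-game} range over exactly the same data. Once the difference game is named, the equivalence is immediate.
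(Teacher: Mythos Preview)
Your proposal is correct and follows essentially the same approach as the paper: define the difference game $\payns = \alt\pay - \pay$, rearrange \eqref{eq:strat-equiv} by transposing terms so that each side becomes $\payns_{\play}$ evaluated at two actions differing only in the $\play$-th coordinate, and identify this with \cref{def:non-strategic-game}. Your write-up is in fact more explicit than the paper's, which simply says ``rearrange the terms'' without spelling out the transposition.
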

\begin{proof}
Let $\fingame - \alt\fingame$ be non-strategic; then $\payns \defeq \alt\pay - \pay$ fulfills the condition of \cref{def:non-strategic-game}, which shows that $\pay$ and $\payalt$ fulfill \cref{eq:strat-equiv}. Conversely let $\fingame$ and $\alt\fingame$ be strategically equivalent; set $\payns \defeq \alt\pay - \pay$ and rearrange the terms in \cref{eq:strat-equiv} to immediately conclude that $\payns$ is a non-strategic game.
\end{proof}
\begin{propositionApp}
Let $\HG = \HGfull$ be a \acl{HG}. If the measure $\meas$ fulfills $\lnorm{\meas_{\play}} = \lnorm{\meas_{\playalt}}$ for all $\play, \playalt \in \players$ then $\HG$ is strategically equivalent to a \acl{ZSG}.
\end{propositionApp}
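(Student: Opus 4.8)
The plan is to reduce the claim to \cref{lemma:strat-equiv-iff-difference-non-strategic} by writing down an explicit non-strategic game $\payns$ such that $\pay - \payns$ is zero-sum.

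First I would unpack the defining relation \eqref{eq:harmonic} of \cref{def:harmonic}. Let $\mass \defeq \lnorm{\meas_{\play}}$ be the common total mass (independent of $\play$ by hypothesis) and set $\heq_{\play} \defeq \meas_{\play}/\mass \in \strats_{\play}$, so that $\heq_{\play}$ is a probability vector on $\pures_{\play}$. Splitting the bracket in \eqref{eq:harmonic} into the terms $\pay_{\play}(\pure_{\play};\pure_{-\play}) = \pay_{\play}(\pure)$ and $\pay_{\play}(\purealt_{\play};\pure_{-\play})$ and carrying out the inner sum over $\purealt_{\play}$, we get for every $\pure\in\pures$
\[
\sum_{\play\in\players}\bracks*{\lnorm{\meas_{\play}}\,\pay_{\play}(\pure) - \sum_{\purealt_{\play}\in\pures_{\play}}\meas_{\play\purealt_{\play}}\,\pay_{\play}(\purealt_{\play};\pure_{-\play})} = 0 .
\]
Since $\lnorm{\meas_{\play}} = \mass$ for every $\play$, this scalar may be pulled out of the sum over players and divided through, yielding
\[
\sum_{\play\in\players}\pay_{\play}(\pure) = \sum_{\play\in\players}\pay_{\play}(\heq_{\play};\pure_{-\play}) \qquad\text{for all }\pure\in\pures,
\]
where $\pay_{\play}(\heq_{\play};\pure_{-\play}) = \sum_{\purealt_{\play}}\heq_{\play\purealt_{\play}}\,\pay_{\play}(\purealt_{\play};\pure_{-\play})$ is the multilinear extension of $\pay_{\play}$ in player $\play$'s coordinate. (The same identity follows from the mixed characterization \eqref{eq:harmonic-mixed} of \cref{lemma:harmonic-mixed}, evaluated at a pure profile $\strat=\pure$ after factoring out the common mass.)

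Next I would define the game $\payns = (\payns_{\play})_{\play\in\players}$ by $\payns_{\play}(\pure) \defeq \pay_{\play}(\heq_{\play};\pure_{-\play})$ for $\pure\in\pures$. Each $\payns_{\play}$ is a genuine real-valued payoff function (a fixed convex combination of the numbers $\pay_{\play}(\purealt_{\play};\pure_{-\play})$), and it does not depend on $\pure_{\play}$; hence $\payns$ is a non-strategic game in the sense of \cref{def:non-strategic-game}. Moreover, the displayed identity reads exactly $\sum_{\play\in\players}\bracks{\pay_{\play}(\pure) - \payns_{\play}(\pure)} = 0$ for all $\pure\in\pures$, i.e.\ the game $(\players,\pures,\pay-\payns)$ is zero-sum. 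Since the difference $\pay - (\pay-\payns) = \payns$ is non-strategic, \cref{lemma:strat-equiv-iff-difference-non-strategic} gives $\HG \sim (\players,\pures,\pay-\payns)$, and the latter is a \acl{ZSG}, which is what we wanted.

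There is no deep obstacle here; the only point requiring care is the bookkeeping that makes the equal-mass hypothesis earn its keep — it is precisely what lets the scalars $\lnorm{\meas_{\play}}$ be replaced by a single constant and factored out, collapsing \eqref{eq:harmonic} into a statement about $\sum_{\play}\pay_{\play}$ alone (without it the $\pay_{\play}$-terms carry different weights and no such collapse occurs). One should also confirm that $\payns_{\play}$, although written through the multilinear extension, is indeed a function on the pure action space $\pures$, so that \cref{def:non-strategic-game,lemma:strat-equiv-iff-difference-non-strategic} apply verbatim; this is immediate.
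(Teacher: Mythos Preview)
Your proof is correct and follows essentially the same approach as the paper: both define the non-strategic game $\payns_{\play}(\pure) = \mass^{-1}\sum_{\purealt_{\play}}\meas_{\play\purealt_{\play}}\,\pay_{\play}(\purealt_{\play};\pure_{-\play})$ (which is exactly your $\pay_{\play}(\heq_{\play};\pure_{-\play})$), observe that the equal-mass hypothesis collapses \eqref{eq:harmonic} into $\sum_{\play}\bracks{\pay_{\play}-\payns_{\play}}=0$, and conclude via \cref{lemma:strat-equiv-iff-difference-non-strategic}. Your additional commentary on where the hypothesis is used and the remark linking to \eqref{eq:harmonic-mixed} are helpful but not substantively different.
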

\begin{proof}
Recall that $\lnorm{\meas_{\play}} \equiv \sum_{\pure_\play}\meas_{\findex}$. Under the assumption $\lnorm{\meas_{\play}} = \lnorm{\meas_{\playalt}}$ for all $\play, \playalt \in \players$, let $\const := \lnorm{\meas_{\play}}$ for any $\play \in \players$. By \eqref{eq:harmonic}, the payoff $\pay$ of $\HG$ in this case fulfills
$
\sum_{\play\in\players} \bracks{ \pay_{\play}\of\pure
- \payns_{\play}\of\pure }
= 0
$
for all $\pure \in \pures$, with $\payns_{\play}\of{\ppure} \defeq \const^{-1} \sum_{\purealt_{\play}} \meas_{\findexalt} \pay_{\play}(\purealt_{\play}, \pure_{\others})$. Set $\payalt_{\play} \defeq \pay_{\play} - \payns_{\play}$. By definition $\payalt$ is a \acl{ZSG}; furthermore, the difference between $\pay_{\play}$ and $\payalt_{\play}$ is non-strategic, since $\payns_{\play}(\ppure)$ does not depend on $\pure_{\play}$. Thus $\pay_{\play}$ and $\payalt_{\play}$ are \acl{SE} by \cref{lemma:strat-equiv-iff-difference-non-strategic}.
\end{proof}
In particular we have the following:
\begin{corollaryApp}
Let $\HG = \HGfull$ be a \acl{HG}. If the measure $\meas$ is a probability measure, then $\HG$ is strategically equivalent to a \acl{ZSG}.
\end{corollaryApp}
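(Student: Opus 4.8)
The plan is to obtain the statement as an immediate specialization of the preceding proposition. By definition (\cref{def:meas-comeas}), a probability measure is a game measure $\meas$ with $\lnorm{\meas_{\play}} = \sum_{\pure_{\play}} \meas_{\findex} = 1$ for every player $\play \in \players$; in particular all the player-masses coincide, $\lnorm{\meas_{\play}} = \lnorm{\meas_{\playalt}}$ for all $\play, \playalt \in \players$. This is exactly the hypothesis of the previous proposition, so it applies verbatim and yields that $\HG$ is strategically equivalent to a \acl{ZSG}.

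For self-containedness I would also record the one-line direct argument, which is just the proof of the preceding proposition with the constant $\const$ taken to be $1$. Set $\payns_{\play}(\ppure) \defeq \sum_{\purealt_{\play} \in \pures_{\play}} \meas_{\findexalt}\, \pay_{\play}(\purealt_{\play}; \pure_{\others})$ and $\payalt_{\play} \defeq \pay_{\play} - \payns_{\play}$. Since $\payns_{\play}(\ppure)$ does not depend on $\pure_{\play}$, the game $\payns = \pay - \payalt$ is non-strategic in the sense of \cref{def:non-strategic-game}, whence $\pay$ and $\payalt$ are strategically equivalent by \cref{lemma:strat-equiv-iff-difference-non-strategic}. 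On the other hand, the defining identity \eqref{eq:harmonic} for the probability measure $\meas$ reads $\sum_{\play \in \players}[\pay_{\play}(\pure) - \payns_{\play}(\pure)] = 0$ for all $\pure \in \pures$, i.e. $\sum_{\play \in \players} \payalt_{\play}(\pure) = 0$; hence $\payalt$ is a \acl{ZSG}, and the claim follows.

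There is no real obstacle here: the corollary is nothing more than the equal-mass case of the previous proposition, and the only thing one has to notice is the trivial fact that a probability measure has all player-masses equal to $1$. The substantive work has already been done upstream --- in the mixed characterization \eqref{eq:harmonic-mixed} of harmonic games and in the criterion (\cref{lemma:strat-equiv-iff-difference-non-strategic}) identifying strategic equivalence with having a non-strategic difference --- so the proof will amount to a single sentence invoking the proposition, possibly followed by the explicit $\const = 1$ specialization for the reader's convenience.
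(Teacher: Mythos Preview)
Your proposal is correct and takes exactly the paper's approach: the corollary is stated without a separate proof precisely because it is the immediate specialization of the preceding proposition to the case $\lnorm{\meas_{\play}}=1$ for all $\play$, which is what you identify. Your optional self-contained paragraph is just the proposition's proof with $\const=1$, so nothing diverges from the paper.
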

The converse holds true only in the case of two-player games:
\begin{propositionApp}
Every \acl{2ZSG} with an interior \acl{NE} $\eq$ is harmonic, with (probability) measure $\meas = \eq$.
\end{propositionApp}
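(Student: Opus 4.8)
The plan is to verify the mixed characterization of \cref{lemma:harmonic-mixed} with the candidate measure $\meas = \eq$. Write $\players = \{1,2\}$, and for $\play\in\players$ let $\playalt\in\players$ denote the other player. Since $\fingame$ is zero-sum we have $\pay_{1}(\pure) + \pay_{2}(\pure) = 0$ for every $\pure\in\pures$, and hence $\pay_{1}(\strat) + \pay_{2}(\strat) = 0$ for every $\strat\in\strats$ by bilinearity; as harmonicity depends on the payoffs only through their unilateral differences, the same argument covers any game that is merely strategically equivalent to a zero-sum one. Because $\eq_{\play}\in\strats_{\play} = \simplex(\pures_{\play})$, we have $\lnorm{\meas_{\play}} = \lnorm{\eq_{\play}} = 1$, so \eqref{eq:harmonic-mixed} reduces to the claim that $\sum_{\play\in\players}\dualp{\payfield_{\play}(\strat)}{\strat_{\play} - \eq_{\play}} = 0$ for all $\strat\in\strats$, and \cref{lemma:harmonic-mixed} will then finish the proof.

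The key structural input is the interior equilibrium. Since $\eq$ is fully mixed, $\supp(\eq_{\play}) = \pures_{\play}$, so the complementarity form of \eqref{eq:Nash} recorded just after \eqref{eq:VI} forces the quantity $\pay_{\play}(\pure_{\play};\eq_{-\play})$ to be independent of $\pure_{\play}\in\pures_{\play}$; write $\val_{\play}$ for this common equilibrium payoff. Passing to mixtures gives $\pay_{\play}(\strat_{\play};\eq_{-\play}) = \val_{\play}$ for every $\strat_{\play}\in\strats_{\play}$, and the zero-sum property yields $\val_{1} + \val_{2} = \pay_{1}(\eq) + \pay_{2}(\eq) = 0$.

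The computation is then short. For the ``own'' terms, $\dualp{\payfield_{\play}(\strat)}{\strat_{\play}} = \pay_{\play}(\strat)$, so they sum to $\pay_{1}(\strat) + \pay_{2}(\strat) = 0$. For the benchmark terms, $\payfield_{\play}(\strat)$ depends only on $\strat_{-\play}$, so $\dualp{\payfield_{\play}(\strat)}{\eq_{\play}} = \pay_{\play}(\eq_{\play};\strat_{-\play})$; expanding over the opponent's pure actions, then using zero-sum and the opponent's indifference against $\eq_{\play}$, gives $\pay_{\play}(\eq_{\play};\pure_{-\play}) = -\pay_{\playalt}(\pure_{-\play};\eq_{\play}) = -\val_{\playalt}$ for each $\pure_{-\play}$, hence $\dualp{\payfield_{\play}(\strat)}{\eq_{\play}} = -\val_{\playalt}$. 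Therefore $\sum_{\play}\dualp{\payfield_{\play}(\strat)}{\strat_{\play} - \eq_{\play}} = 0 - (-\val_{1} - \val_{2}) = \val_{1} + \val_{2} = 0$, which is exactly the identity needed, so $\fingame$ is harmonic with the probability measure $\meas = \eq$.

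I do not anticipate a real obstacle here — the statement is an elementary consequence of the indifference property at a fully mixed Nash equilibrium. The only bookkeeping to watch is that player $\play$'s payoff is constant against $\eq_{-\play}$ (not against $\eq_{\play}$) and that each zero-sum substitution flips a sign; routing everything through the mixed characterization \eqref{eq:harmonic-mixed} keeps both points transparent. It also makes clear why the two-player hypothesis is indispensable: the cancellation $\pay_{1}(\strat) + \pay_{2}(\strat) = 0$ underpinning the ``own'' terms has no analogue for $\nPlayers\ge 3$ unless one already knows that the harmonic measure is balanced (which is precisely the content of the preceding propositions).
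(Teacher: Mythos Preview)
Your proof is correct and follows essentially the same approach as the paper: both verify the mixed characterization \eqref{eq:harmonic-mixed} with $\meas=\eq$, reducing to the identity $\sum_{\play}\dualp{\payfield_{\play}(\strat)}{\strat_{\play}-\eq_{\play}}=0$. The only difference is that the paper simply cites this identity from \cite{MPP18,MLZF+19}, whereas you give a self-contained derivation via the indifference principle at a fully mixed equilibrium and the zero-sum cancellation $\val_{1}+\val_{2}=0$.
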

\begin{proof}
Let $\fingame = \fingamefull$ be a \acl{2ZSG} with interior \acl{NE} $\eq$. If we show that
\begin{equation}
\label{eq:2zs-is-harmonic}
\sum_{\play \in \players} \lnorm{\eq_{\play}} \, 
\Big\langle
\payfield_{\play}\of\strat , \strat_{\play} - \frac{\eq_{\play}}{\lnorm{\eq_{\play}}}
\Big\rangle
= 0
\quad \text{for all } \strat \in \strats \, ,
\end{equation}
then we can conclude by \cref{lemma:harmonic-mixed} that $\fingame$ is harmonic with measure $\eq$. \cref{eq:2zs-is-harmonic} holds indeed true: $\lnorm{\eq_{\play}} = 1$ for all $\play \in \players$, and it is well known \cite{MPP18, MLZF+19} that \aclp{2ZSG} with an interior equilibrium $\eq$ fulfill 
$
\sum_{\play \in \players}
\dualp{\payfield_{\play}\of\strat}{\strat_{\play} - \eq_{\play}}
= 0
$ for all $\strat \in \strats$, so we are done.
\end{proof}


Harmonic games thus encompass and substantially generalize two prototypical classes of games with anti-aligned incentives, serving as an ideal complement to the class of potential games. This is made precise in \citep{APSV22}: building on the work of \citet{CMOP11}, \citet{APSV22} showed that, for any choice of game measure $\meas$, every finite game can be uniquely decomposed into the sum of a potential and a $\meas$-harmonic game, up to strategic equivalence.

This establishes harmonic games as the natural complement of potential games from a strategic perspective; \Cref{{thm:recurrence}} in the main text shows that this holds true from a \textit{dynamic} perspective as well.

\section{Basic properties of regularizers and the induced choice maps}
\label{app:mirror}

In this appendix, we collect a number of properties concerning regularizers and the associated choice maps.
To avoid carrying around the player index $\play\in\players$, we state all our results for a generic convex subset $\cvx$ of some real vector space $\vecspace$.
The desired properties for \ac{FTRL} will then be obtained by specializing $\cvx$ to $\strats_{\play}$ or $\strats$ and $\vecspace$ to $\ambienti$ or $\ambient$, depending on the context.

\subsection{Preliminary definitions}

To begin, let $\vecspace$ be a $\vdim$-dimensional normed space with norm $\norm{\cdot}$.
In what follows, we will write
$\dpoints \defeq \dspace$ for the dual space of $\vecspace$,
$\braket{\dpoint}{\point}$ for the canonical pairing between $\point\in\vecspace$ and $\dpoint\in\dspace$,
and
$\dnorm{\dpoint} = \max\setdef{\braket{\dpoint}{\point}}{\norm{\point} \leq 1}$ for the induced dual norm on $\dpoints$.
Following standard conventions in convex analysis, functions will be allowed to take values in the extended real line $\R\cup\{\infty\}$, and if $\obj\from\vecspace\to\R\cup\{\infty\}$ is a convex function on $\vecspace$, we will denote its \define{effective domain} as
\begin{equation}
\label{eq:domfun}
\dom\obj
	\defeq \setdef{\point\in\vecspace}{\obj(\point) < \infty}
	\eqdot
\end{equation}
In addition, assuming $\dom\obj\neq\varnothing$, the \define{subdifferential} of $\obj$ at $\point$ is defined as
\begin{equation}
\label{eq:subdiff}
\subd\obj(\point)
	\defeq \setdef{\dpoint\in\dpoints}{\obj(\pointalt) \geq \obj(\point) + \braket{\dpoint}{\pointalt - \point} \; \text{for all $\pointalt\in\vecspace$}}
\end{equation}
and we denote the \define{domain of subdifferentiability} of $\obj$ as
\begin{equation}
\label{eq:domdiff}
\dom\subd\obj
	= \setdef{\point\in\vecspace}{\subd\obj(\point) \neq \varnothing}
	\eqdot
\end{equation}
Finally, to ease notation, a convex function $\obj\from\cvx\to\R$ will be identified with the extended-real-valued function $\bar\obj\from\vecspace\to\R\cup\{\infty\}$ that agrees with $\obj$ on $\cvx$ and is identically equal to $\infty$ on $\vecspace\setminus\cvx$.

With all this in hand, let $\cvx$ be a closed convex subset of $\vecspace$, and let $\hreg\from\cvx\to\R$ be a $\hstr$-strongly convex \define{regularizer} on $\cvx$, that is,
\begin{equation}
\label{eq:hstr}
\hreg(t\point + (1-t)\pointalt)
	\leq t \hreg(\point)
		+ (1-t) \hreg(\pointalt)
		- \frac{\hstr}{2} t(1-t) \norm{\pointalt - \point}^{2}
	\eqdot
\end{equation}
By standard arguments in convex analysis, this readily implies that
\begin{equation}
\label{eq:hstr-diff}
\hreg(\pointalt)
	\geq \hreg(\point)
		+ \dir\hreg(\point;\pointalt - \point)
		+ \frac{\hstr}{2} \norm{\pointalt - \point}^{2}
	\quad
	\text{for all $\point,\pointalt\in\points$},
\end{equation}
where
\begin{equation}
\dir\hreg(\point;\pointalt-\point)
	= \lim_{\theta\to0^{+}} \bracks{\hreg(\point + \theta(\pointalt-\point)) - \hreg(\point)} / \theta
\end{equation}
denotes the one-sided directional derivative of $\hreg$ at $\point$ along the direction of $\pointalt-\point$.
To proceed, we will need the following basic objects:
\begin{enumerate}
\item
The \define{convex conjugate} $\hconj\from\dpoints\to\R$ of $\hreg$:
\begin{alignat}{2}
\label{eq:conj}
\hconj(\dpoint)
	&= \max_{\point\in\points} \{ \braket{\dpoint}{\point} - \hreg(\point) \}
	&\qquad
	&\text{for all $\dpoint\in\dpoints$}.
\intertext{%
\item
The \define{regularized choice map} \textendash\ or \define{mirror map} \textendash\ $\mirror\from\dpoints\to\points$ induced by $\hreg$:%
}
\label{eq:mirror}
\mirror(\dpoint)
	&= \argmax_{\point\in\points} \{ \braket{\dpoint}{\point} - \hreg(\point) \}
	&\qquad
	&\text{for all $\dpoint\in\dpoints$}
\intertext{%
\item
The associated \define{Fenchel coupling} $\fench\from\points\times\dpoints\to\R$ of $\hreg$:%
}
\label{eq:Fench}
\fench(\base,\dpoint)
	&= \hreg(\base)
		+ \hconj(\dpoint)
		- \braket{\dpoint}{\base}
	&\quad
	&\text{for all $\base\in\points$, $\dpoint\in\dpoints$}.
\end{alignat}
\end{enumerate}
\smallskip

\begin{remark*}
The terminology ``Fenchel coupling'' is due to \cite{MS16,MZ19}, which we follow closely in terms of notation and conventions.
\end{remark*}

The proposition below provides some basic properties concerning the first two objects above:

\begin{proposition}
\label{prop:mirror}
Let $\hreg$ be a $\hstr$-strongly convex regularizer on $\cvx$.
Then:
\begin{enumerate}
[\upshape(\itshape a\hspace*{.5pt}\upshape)]
\item
$\mirror$ is single-valued on $\dpoints$;
in particular, for all $\point\in\dom\subd\hreg$ and all $\dpoint\in\dpoints$, we have:
\begin{equation}
\label{eq:hinv}
\point
	= \mirror(\dpoint)
	\quad
	\text{if and only if}
	\quad
\dpoint
	\in \subd\hreg(\point)
	\eqdot
\end{equation}

\item
The image $\im\mirror$ of $\mirror$ satisfies $\relint\cvx \subseteq \im\mirror = \dom\subd\hreg \subseteq \cvx$.

\item
The convex conjugate $\hconj\from\dpoints\to\R$ of $\hreg$ is differentiable and
\begin{equation}
\label{eq:Danskin}
\mirror(\dpoint)
	= \nabla\hconj(\dpoint)
	\quad
	\text{for all $\dpoint\in\dpoints$}.
\end{equation}

\item
$\mirror$ is $(1/\hstr)$-Lipschitz continuous, that is,
\begin{equation}
\label{eq:QLips}
\norm{\mirror(\dpointalt) - \mirror(\dpoint)}
	\leq (1/\hstr) \dnorm{\dpointalt - \dpoint}
	\quad
	\text{for all $\dpoint,\dpointalt\in\dpoints$}.
\end{equation}

\item
Fix some $\dpoint\in\dpoints$ and set $\point = \mirror(\dpoint)$.
Then, for all $\pointalt\in\points$ we have:
\begin{equation}
\label{eq:hdir}
\dir\hreg(\point;\pointalt - \point)
	\geq \braket{\dpoint}{\pointalt - \point}
	\eqdot
\end{equation}
In particular, if $\subd\hreg$ admits a continuous selection $\subsel\hreg\from\dom\subd\hreg\to\dpoints$, we have
\begin{equation}
\label{eq:subsel-var}
\braket{\subsel\hreg(\point)}{\pointalt - \point}
	\geq \braket{\dpoint}{\pointalt - \point}
	\quad
	\text{for all $\point\in\dom\subd\hreg$ and all $\point\in\cvx$},
\end{equation}
or, equivalently,
\begin{equation}
\label{eq:subsel-cone}
\subd\hreg(\point)
	= \subsel\hreg(\point) + \pcone(\point)
	\quad
	\text{for all $\point\in\dom\subd\hreg$},
\end{equation}
where
\begin{equation}
\label{eq:pcone}
\pcone(\point)
	= \setdef{\dvec\in\dpoints}{\braket{\dvec}{\pointalt-\point} \leq 0 \;\; \text{for all $\pointalt\in\points$}}
\end{equation}
denotes the \define{polar cone} to $\cvx$ at $\point$.
\end{enumerate}
\end{proposition}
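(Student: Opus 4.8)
The plan is to treat the five items roughly in order, reducing (a), (c), (d) and the inequality \eqref{eq:hdir} of (e) to classical facts about strongly convex functions and their conjugates, and obtaining (b) together with the cone reformulations in (e) from standard subdifferential calculus. Throughout I identify $\hreg$ with its extended-real-valued version that is $+\infty$ off $\cvx$, so that $\mirror(\dpoint) = \argmin_{\point\in\vecspace}\{\hreg(\point) - \braket{\dpoint}{\point}\}$.

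For part (a): adding the linear term $-\braket{\dpoint}{\cdot}$ preserves $\hstr$-strong convexity, so this objective is strictly convex and coercive and attains its infimum at a unique point, which is $\mirror(\dpoint)$; hence $\mirror$ is single-valued. The equivalence \eqref{eq:hinv} is then the first-order optimality condition $0\in\subd(\hreg - \braket{\dpoint}{\cdot})(\point) = \subd\hreg(\point) - \dpoint$ for this unconstrained minimization, combined with the uniqueness just obtained (the forward implication in particular forcing $\point\in\dom\subd\hreg$). Part (b) follows almost immediately: by (a), $\point\in\im\mirror$ iff $\subd\hreg(\point)\neq\varnothing$, so $\im\mirror = \dom\subd\hreg$; the inclusion $\dom\subd\hreg\subseteq\dom\hreg = \cvx$ is trivial, and $\relint\cvx\subseteq\dom\subd\hreg$ is the standard fact that a proper convex function on a finite-dimensional space is subdifferentiable throughout the relative interior of its domain.

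For part (c): the maximum defining $\hconj(\dpoint)$ is attained at the unique point $\mirror(\dpoint)$, so an envelope/Danskin argument gives that the directional derivative of $\hconj$ at $\dpoint$ along any $\dpointalt$ equals $\braket{\dpointalt}{\mirror(\dpoint)}$; being linear in $\dpointalt$, this makes $\hconj$ Gâteaux differentiable with $\nabla\hconj = \mirror$, which is \eqref{eq:Danskin} — and upgrades to Fréchet differentiability in finite dimensions once $\mirror$ is known to be continuous, i.e. once (d) is in hand. For part (d): put $\point = \mirror(\dpoint)$, $\pointalt = \mirror(\dpointalt)$; by (a) and \eqref{eq:hstr-diff} we get $\hreg(\pointalt)\geq\hreg(\point) + \braket{\dpoint}{\pointalt-\point} + \tfrac{\hstr}{2}\norm{\pointalt-\point}^{2}$ together with its symmetric counterpart, and adding the two yields the strong-monotonicity estimate $\braket{\dpoint-\dpointalt}{\point-\pointalt}\geq\hstr\norm{\point-\pointalt}^{2}$, whence \eqref{eq:QLips} via $\hstr\norm{\point-\pointalt}^{2}\leq\dnorm{\dpoint-\dpointalt}\norm{\point-\pointalt}$. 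For \eqref{eq:hdir}: with $\point = \mirror(\dpoint)$ and $\pointalt\in\cvx$ arbitrary, convexity of $\cvx$ puts $\point + \theta(\pointalt-\point)$ in $\cvx$ for $\theta\in(0,1]$, optimality of $\point$ gives $\hreg(\point+\theta(\pointalt-\point)) - \hreg(\point)\geq\theta\braket{\dpoint}{\pointalt-\point}$, and dividing by $\theta$ and letting $\theta\to0^{+}$ produces the inequality.

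The remaining statements in (e) are where the only real care is needed. Given a continuous selection $\subsel\hreg$ of $\subd\hreg$, I would establish \eqref{eq:subsel-cone} as a sum rule: writing the extended regularizer as $\hreg$ restricted to $\cvx$ plus the indicator of $\cvx$, whose subdifferential at $\point$ is the polar cone $\pcone(\point)$ of \eqref{eq:pcone}, one obtains $\subd\hreg(\point) = \subsel\hreg(\point) + \pcone(\point)$ on $\dom\subd\hreg$; then $\dpoint\in\subd\hreg(\point)$ from (a) means $\dpoint - \subsel\hreg(\point)\in\pcone(\point)$, i.e. $\braket{\dpoint - \subsel\hreg(\point)}{\pointalt-\point}\leq 0$ for all $\pointalt\in\cvx$, which rearranges to \eqref{eq:subsel-var}. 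The main obstacle is precisely this decomposition: one must pin down which subdifferential is meant (the extended-real $\hreg$ versus $\hreg$ viewed as a function on $\cvx$), verify the constraint qualification for the sum rule (trivial here, since $\relint\cvx$ meets the domain of $\hreg$), and confirm that $\subsel\hreg$ may indeed be taken as a continuous selection of the ``interior'' part; once the conventions are fixed, the rest is routine bookkeeping.
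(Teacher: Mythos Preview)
Your treatment of (a)--(d) and of \eqref{eq:hdir} is correct and essentially matches the paper's; for (d) you actually do more, giving a self-contained strong-monotonicity argument where the paper simply cites Rockafellar--Wets.

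The genuine gap is in the last part of (e), and you have correctly located it but underestimated it: it is not ``routine bookkeeping''. The sum-rule decomposition you propose is circular. Writing ``$\hreg$ restricted to $\cvx$ plus $\iota_{\cvx}$'' either reproduces the extended $\hreg$ itself (so the sum rule gives nothing) or presupposes a convex extension of $\hreg$ to $\vecspace$ whose subdifferential at $\point$ is the singleton $\{\subsel\hreg(\point)\}$. No such extension is part of the hypotheses, and proving that the ``interior part'' of $\subd\hreg(\point)$ collapses to the given continuous selection is exactly the content of \eqref{eq:subsel-cone}. A further symptom: your outline never uses the \emph{continuity} of $\subsel\hreg$, yet continuity is indispensable --- without it one can have $\subd\hreg(\point)$ strictly larger than any singleton plus $\pcone(\point)$ at points where $\pcone(\point)=\{0\}$.

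The paper sidesteps the decomposition entirely with a one-dimensional reduction. Fix $\point = \mirror(\dpoint)$, $\pointalt\in\cvx$, $\tanvec = \pointalt - \point$, and set $\phi(\theta) = \hreg(\point+\theta\tanvec) - \hreg(\point) - \theta\braket{\dpoint}{\tanvec}$ on $[0,1]$; this is convex, nonnegative, and vanishes at $0$. The function $\psi(\theta) = \braket{\subsel\hreg(\point+\theta\tanvec) - \dpoint}{\tanvec}$ is a selection of $\subd\phi$, and because $\psi$ is \emph{continuous} (this is where the hypothesis on $\subsel\hreg$ enters), one-variable convex analysis forces $\phi'(\theta) = \psi(\theta)$. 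Since $\phi$ is minimized at $0$, $\psi(0) = \braket{\subsel\hreg(\point) - \dpoint}{\tanvec} \geq 0$, which is \eqref{eq:subsel-var}. This gives $\subd\hreg(\point)\subseteq\subsel\hreg(\point)+\pcone(\point)$; the reverse inclusion is immediate from $\subsel\hreg(\point)\in\subd\hreg(\point)$ and $\dom\hreg=\cvx$.
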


\begin{proof}
These properties are fairly well known (except possibly the last one), so we only provide a quick proof or a precise pointer to the literature.
\begin{enumerate}
[\upshape(\itshape a\hspace*{.5pt}\upshape)]

\item
The maximum in \eqref{eq:mirror} is attained for all $\dpoint\in\dspace$ and is unique because $\hreg$ is strongly convex.
Furthermore, $\point$ solves \eqref{eq:mirror} if and only if $\dpoint - \subd\hreg(\point) \ni 0$, \ie if and only if $\dpoint\in\subd\hreg(\point)$.

\item
By \eqref{eq:hinv}, we readily get $\im\mirror = \dom\subd\hreg$.
Consequently, the rest of our claim follows from standard results in convex analysis, see \eg \citet[Chap.~26]{Roc70}.

\item
The equality $\mirror = \nabla\hconj$ follows immediately from Danskin's theorem, see \eg \citet[Proposition 5.4.8, Appendix B]{Ber15}.

\item
See \citet[Theorem 12.60(b)]{RW98}.

\item
Since $\dpoint \in \subd\hreg(\point)$ by \eqref{eq:hinv}, we readily get that
\begin{equation}
\hreg(\point + \theta(\pointalt-\point))
	\geq \hreg(\point) + \theta \braket{\dpoint}{\pointalt-\point}
	\quad
	\text{for all $\theta\in[0,1]$}
	\eqdot
\end{equation}
Hence, by rearranging and taking the limit $\theta\to0^{+}$,%
\footnote{The existence of the limit is guaranteed by standard results, see \eg \citet[Appendix B]{Ber15}.}
we conclude that
\begin{equation}
\label{eq:dir-subdiff}
\dir\hreg(\point;\pointalt-\point)
	= \lim_{\theta\to0^{+}} \frac{\hreg(\point + \theta(\pointalt-\point)) - \hreg(\point)}{\theta}
	\geq \braket{\dpoint}{\pointalt-\point}
\end{equation}
as claimed.
Finally, for our last assertion, let $\tanvec = \pointalt - \point$ and set
\begin{equation}
\label{eq:hreg-ray}
\phi(\theta)
	= \hreg(\point + \theta\tanvec)
		- \bracks{\hreg(\point) + \braket{\dpoint}{\theta\tanvec}}
	\quad
	\text{for all $\theta\in[0,1]$}
\end{equation}
so $\phi(\theta) \geq \hstr \theta^{2}\norm{\tanvec}^{2}/2 \geq 0$ for all $\theta\in[0,1]$.
By construction, it is straightforward to verify that the function $\psi(\theta) = \braket{\subsel\hreg(\point+\theta\tanvec) - \dpoint}{\tanvec}$ is a selection of subgradients of $\phi$, \ie
\begin{equation}
\phi(\alt\theta)
	\geq \phi(\theta) + \psi(\theta) (\alt\theta - \theta)
	\quad
	\text{for all $\theta,\alt\theta\in[0,1]$}.
\end{equation}
Since $\psi$ is in addition continuous (because $\subsel\hreg$ is), it follows that $\phi'(\theta) = \psi(\theta)$ for all $\theta\in[0,1]$ by a well-known characterization of the one-sided derivatives of convex functions, \cf \citet[Theorem~24.2]{Roc70}.
Hence, with $\phi$ convex and $\phi(\theta) \geq \phi(0)$ for all $\theta\in[0,1]$, we conclude that $\braket{\subsel\hreg(\point) - \dpoint}{\tanvec} = \psi(0) = \phi'(0) \geq 0$, and our proof is complete.
\qedhere
\end{enumerate}
\end{proof}

The next proposition collects some basic properties of the Fenchel coupling.

\begin{proposition}
\label{prop:Fench}
Let $\hreg$ be a $\hstr$-strongly convex regularizer on $\cvx$.
Then, for all $\base\in\points$ and all $\dpoint,\dpointalt\in\dpoints$, we have:
\begin{subequations}
\begin{flalign}
\label{eq:Fench-posdef}
\quad
(a)
	&\;\;
	\fench(\base,\dpoint)
	\geq 0
	\;\;
	\text{with equality if and only if $\base = \mirror(\dpoint)$}.
	&
	\\
\label{eq:Fench-norm}
\quad
(b)
	&\;\;
	\fench(\base,\dpoint)
	\geq \tfrac{1}{2} \hstr \, \norm{\mirror(\dpoint) - \base}^{2}.
	&
\end{flalign}
\end{subequations}
\end{proposition}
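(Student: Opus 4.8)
The plan is to read part (a) off directly from the definitions of the convex conjugate $\hconj$ and the mirror map $\mirror$, and then to upgrade it to the quantitative bound of part (b) by feeding in the strong-convexity estimate \eqref{eq:hstr-diff} together with the variational inequality \eqref{eq:hdir} of \cref{prop:mirror}. For part (a), fix $\base\in\points$ and $\dpoint\in\dpoints$ and set $\point \defeq \mirror(\dpoint)$. Since $\point$ attains the maximum in the definition \eqref{eq:conj} of $\hconj$, we have $\hconj(\dpoint) = \braket{\dpoint}{\point} - \hreg(\point) \geq \braket{\dpoint}{\base} - \hreg(\base)$, and rearranging this inequality gives exactly $\fench(\base,\dpoint) = \hreg(\base) + \hconj(\dpoint) - \braket{\dpoint}{\base} \geq 0$. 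Equality holds precisely when $\base$ also attains the maximum in \eqref{eq:conj}; as $\hreg$ is strongly convex this maximizer is unique, so equality is equivalent to $\base = \point = \mirror(\dpoint)$, which is \eqref{eq:Fench-posdef}.

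For part (b), I would keep $\point = \mirror(\dpoint)$ and use $\hconj(\dpoint) = \braket{\dpoint}{\point} - \hreg(\point)$ to rewrite the Fenchel coupling as $\fench(\base,\dpoint) = \hreg(\base) - \hreg(\point) - \braket{\dpoint}{\base - \point}$. Applying the strong-convexity inequality \eqref{eq:hstr-diff} at $\point\in\points$ with $\pointalt = \base$ gives $\hreg(\base) \geq \hreg(\point) + \dir\hreg(\point;\base - \point) + \tfrac{1}{2}\hstr\norm{\base - \point}^{2}$, and then \eqref{eq:hdir} from \cref{prop:mirror}, namely $\dir\hreg(\point;\base - \point) \geq \braket{\dpoint}{\base - \point}$, yields $\hreg(\base) - \hreg(\point) \geq \braket{\dpoint}{\base - \point} + \tfrac{1}{2}\hstr\norm{\base - \point}^{2}$. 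Substituting this into the rewritten expression for $\fench(\base,\dpoint)$ cancels the duality-pairing terms and leaves $\fench(\base,\dpoint) \geq \tfrac{1}{2}\hstr\norm{\point - \base}^{2} = \tfrac{1}{2}\hstr\norm{\mirror(\dpoint) - \base}^{2}$, which is \eqref{eq:Fench-norm}.

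No genuine difficulty arises here; the only point meriting care is checking that the two ingredients \eqref{eq:hstr-diff} and \eqref{eq:hdir} are legitimately invoked at the point $\point = \mirror(\dpoint)$. This is fine, since $\point\in\im\mirror = \dom\subd\hreg \subseteq \points$ by \cref{prop:mirror}, so in particular the one-sided directional derivative $\dir\hreg(\point;\argdot)$ exists, and $\dpoint\in\subd\hreg(\point)$ by \eqref{eq:hinv}, which is exactly the hypothesis under which \eqref{eq:hdir} was established. Everything else is a one-line rearrangement, so the proof is essentially a bookkeeping exercise.
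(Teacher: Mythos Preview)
Your proof is correct and essentially identical to the paper's. The paper phrases part~(a) as an invocation of the Fenchel--Young inequality together with \eqref{eq:hinv}, which is exactly your maximizer argument; for part~(b) the paper applies \eqref{eq:hdir} before \eqref{eq:hstr-diff} rather than after, but the two inequalities are chained in the same way and the computation is the same.
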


\begin{proof}
These properties are also fairly standard, but we provide a quick proof for completeness.
\begin{enumerate}
[\upshape(\itshape a\hspace*{.5pt}\upshape)]

\item
By the Fenchel\textendash Young inequality, we have $\hreg(\base) + \hconj(\dpoint) \geq \braket{\dpoint}{\base}$ for all $\base\in\points$, $\dpoint\in\dpoints$, with equality if and only if $\dpoint\in\subd\hreg(\base)$.
Our claim then follows from \eqref{eq:hinv}.

\item
Let $\point = \mirror(\dpoint)$ so $\dpoint \in \subd\hreg(\point)$ by \eqref{eq:hinv}.
Then, by the definition of $\fench$,
we have
\begin{align}
\fench(\base,\dpoint)
	&= \hreg(\base) + \hconj(\dpoint) - \braket{\dpoint}{\base}
	\notag\\
	&= \hreg(\base) + \braket{\dpoint}{\point} - \hreg(\point) - \braket{\dpoint}{\base}
	\explain{because $\dpoint \in \subd\hreg(\point)$}
	\\
	&\geq \hreg(\base) - \hreg(\point) - \dir\hreg(\point;\base-\point)
	\explain{by \cref{prop:mirror}}
	\\
	&\geq \tfrac{1}{2} \hstr \norm{\point - \base}^{2}
	\explain{by \eqref{eq:hstr}}
\end{align}
and our proof is complete.
\qedhere
\end{enumerate}
\end{proof}

In view of \cref{prop:Fench}, $\fench(\base,\dpoint)$ can be seen a ``primal-dual'' measure of divergence between $\base\in\points$ and $\dpoint\in\dpoints$, and the alternate expression \eqref{eq:Fenchi-max} is straightforward.
This observation will play a major role in the sequel.

\subsection{Basic lemmas}

Moving forward, we note that the various update steps in \eqref{eq:FTRL+} can be written as
\begin{equation}
\label{eq:update}
\new[\dpoint]
	= \dpoint + \dvec
	\quad
	\text{and}
	\quad
\new
	= \mirror(\new[\dpoint])
\end{equation}
for some $\dpoint,\dvec\in\dpoints$.
With this in mind, we proceed below to state a series of basic lemmas for the Fenchel coupling before and after an update of the form \eqref{eq:update}.
These results are not new, \cf \cite{JNT11,MZ19,MLZF+19} and references therein;
however, the assumptions used to derive them vary significantly in the literature, so we provide detailed proofs for completeness.

All of the results that follow below are stated for a $\hstr$-strongly convex regularizer on $\cvx$.
The first result is a primal-dual version of the so-called ``three-point identity'' for \acl{MD} \citep{CT93}:

\begin{lemma}
\label{lem:3point}
Fix some $\base\in\points$, $\dpoint\in\dpoints$, and let $\point = \mirror(\dpoint)$.
Then, for all $\new[\dpoint]\in\dpoints$, we have:
\begin{equation}
\label{eq:3point}
\fench(\base,\new[\dpoint])
	= \fench(\base,\dpoint)
		+ \fench(\point,\new[\dpoint])
		+ \braket{\new[\dpoint] - \dpoint}{\point - \base}.
\end{equation}
\end{lemma}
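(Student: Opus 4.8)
The plan is to prove the identity \eqref{eq:3point} by a direct expansion of the three Fenchel couplings it involves, relying only on the definition \eqref{eq:Fench} together with the single nontrivial input that $\fench(\point,\dpoint) = 0$ whenever $\point = \mirror(\dpoint)$ — equivalently, the equality case of the Fenchel--Young inequality recorded in \cref{prop:Fench}(a). Since $\hreg$ is $\hstr$-strongly convex, the maximum defining $\hconj(\dpoint)$ in \eqref{eq:conj} is attained, so $\hconj$ is real-valued and $\point = \mirror(\dpoint)$ is well-defined; there are thus no integrability or domain issues to worry about.

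First I would write out, straight from the definition \eqref{eq:Fench},
\begin{align*}
\fench(\base,\new[\dpoint])
	&= \hreg(\base) + \hconj(\new[\dpoint]) - \braket{\new[\dpoint]}{\base},
	\\
\fench(\base,\dpoint)
	&= \hreg(\base) + \hconj(\dpoint) - \braket{\dpoint}{\base},
	\\
\fench(\point,\new[\dpoint])
	&= \hreg(\point) + \hconj(\new[\dpoint]) - \braket{\new[\dpoint]}{\point}.
\end{align*}
Substituting the last two expressions into the right-hand side of \eqref{eq:3point} and expanding the bilinear term as $\braket{\new[\dpoint] - \dpoint}{\point - \base} = \braket{\new[\dpoint]}{\point} - \braket{\new[\dpoint]}{\base} - \braket{\dpoint}{\point} + \braket{\dpoint}{\base}$, the summands $\pm\braket{\new[\dpoint]}{\point}$ cancel and so do $\pm\braket{\dpoint}{\base}$, leaving
\[
\hreg(\base) + \hconj(\new[\dpoint]) - \braket{\new[\dpoint]}{\base}
	+ \bracks*{\hreg(\point) + \hconj(\dpoint) - \braket{\dpoint}{\point}}.
\]
The first three terms are exactly $\fench(\base,\new[\dpoint])$, which is the left-hand side of \eqref{eq:3point}, while the bracketed expression is precisely $\fench(\point,\dpoint)$.

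It then remains only to observe that $\fench(\point,\dpoint) = 0$: since $\point = \mirror(\dpoint)$, we have $\dpoint \in \subd\hreg(\point)$ by \eqref{eq:hinv}, so the Fenchel--Young inequality holds with equality, i.e.\ $\hreg(\point) + \hconj(\dpoint) = \braket{\dpoint}{\point}$; this is just \cref{prop:Fench}(a). Hence the bracketed term vanishes and \eqref{eq:3point} follows. I do not expect any genuine obstacle here — the statement is a purely algebraic consequence of the definition of the Fenchel coupling, and the only conceptual point is recognizing that $\hconj(\dpoint) + \hreg(\point) = \braket{\dpoint}{\point}$ exactly when $\point$ attains the maximum defining $\hconj(\dpoint)$.
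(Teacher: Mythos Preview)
Your proof is correct and follows essentially the same approach as the paper: both expand the three Fenchel couplings from the definition \eqref{eq:Fench} and then invoke the Fenchel--Young equality $\hreg(\point) + \hconj(\dpoint) = \braket{\dpoint}{\point}$ for $\point = \mirror(\dpoint)$. The only cosmetic difference is that the paper subtracts \eqref{eq:Fench2} and \eqref{eq:Fench3} from \eqref{eq:Fench1} and then simplifies, whereas you expand the right-hand side of \eqref{eq:3point} directly and identify the residual as $\fench(\point,\dpoint) = 0$; the underlying algebra is identical.
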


\begin{proof}
By definition, we have:
\begin{subequations}
\begin{align}
\label{eq:Fench1}
\fench(\base,\new[\dpoint])
	&= \hreg(\base)
		+ \hconj(\new[\dpoint])
		- \braket{\new[\dpoint]}{\base}
	\\
\label{eq:Fench2}
\fench(\base,\dpoint)
	&= \hreg(\base)
		+ \hconj(\dpoint)
		- \braket{\dpoint}{\base}
	\\
\label{eq:Fench3}
\fench(\point,\new[\dpoint])
	&= \hreg(\point)
		+ \hconj(\new[\dpoint])
		- \braket{\new[\dpoint]}{\point}
\end{align}
\end{subequations}
Thus, subtracting \eqref{eq:Fench2} and \eqref{eq:Fench3} from \eqref{eq:Fench1}, and rearranging, we get
\begin{equation}
\fench(\base,\new[\dpoint])
	= \fench(\base,\dpoint)
		+ \fench(\point,\new[\dpoint])
		- \hreg(\point)
		- \hconj(\dpoint)
		+ \braket{\new[\dpoint]}{\point}
		- \braket{\new[\dpoint] - \dpoint}{\base}
	\eqdot
\end{equation}
Our assertion then follows by recalling that $\point = \mirror(\dpoint)$, so $\hreg(\point) + \hconj(\dpoint) = \braket{\dpoint}{\point}$.
\end{proof}

The next result we present concerns the Fenchel coupling before and after a direct update step;
similar results exist in the literature, but we again provide a proof for completeness.

\begin{lemma}
\label{lem:onestep}
Fix some $\base\in\points$ and $\dpoint,\dvec\in\dpoints$.
Then, letting $\point = \mirror(\dpoint)$, $\new[\dpoint] = \dpoint + \dvec$, and $\new = \mirror(\new[\dpoint])$ as per \eqref{eq:update}, we have:
\begin{subequations}
\label{eq:onestep}
\begin{align}
\fench(\base,\new[\dpoint])
	&= \fench(\base,\dpoint)
		+ \braket{\dvec}{\new - \base}
		- \fench(\new,\dpoint)
	\label{eq:onestep1}
	\\
	&\leq \fench(\base,\point)
		+ \braket{\dvec}{\point - \base}
		+ \tfrac{1}{2} \hstr \dnorm{\dvec}^{2}
	\eqdot
	\label{eq:onestep2}
\end{align}
\end{subequations}
\end{lemma}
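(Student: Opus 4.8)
The plan is to get \eqref{eq:onestep1} straight from the three-point identity of \cref{lem:3point} and then produce \eqref{eq:onestep2} by absorbing a ``drift'' term using strong convexity. For the equality, I would apply \cref{lem:3point} \emph{with the two dual points swapped}: play the role of the lemma's ``$\dpoint$'' with our $\new[\dpoint]$ (whose mirror image is $\new = \mirror(\new[\dpoint])$), and the role of the lemma's ``$\new[\dpoint]$'' with our $\dpoint$. This gives
\[
\fench(\base,\dpoint)
	= \fench(\base,\new[\dpoint])
		+ \fench(\new,\dpoint)
		+ \braket{\dpoint - \new[\dpoint]}{\new - \base}.
\]
Since $\new[\dpoint] = \dpoint + \dvec$, we have $\dpoint - \new[\dpoint] = -\dvec$, and rearranging yields exactly \eqref{eq:onestep1}. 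The only delicate point in this step is the bookkeeping of which dual point plays which role, so that the spurious term $\fench(\new,\dpoint)$ comes out with a minus sign.

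For the inequality, I would start from \eqref{eq:onestep1}, split $\braket{\dvec}{\new - \base} = \braket{\dvec}{\point - \base} + \braket{\dvec}{\new - \point}$, and rewrite
\[
\fench(\base,\new[\dpoint])
	= \fench(\base,\dpoint)
		+ \braket{\dvec}{\point - \base}
		+ \bigl[\braket{\dvec}{\new - \point} - \fench(\new,\dpoint)\bigr].
\]
It then remains to bound the bracketed drift term by $\tfrac{1}{2\hstr}\dnorm{\dvec}^{2}$. Here \cref{prop:Fench} supplies $\fench(\new,\dpoint) \geq \tfrac{\hstr}{2}\norm{\mirror(\dpoint) - \new}^{2} = \tfrac{\hstr}{2}\norm{\point - \new}^{2}$ (using $\point = \mirror(\dpoint)$), while Young's inequality gives $\braket{\dvec}{\new - \point} \leq \dnorm{\dvec}\,\norm{\new - \point} \leq \tfrac{1}{2\hstr}\dnorm{\dvec}^{2} + \tfrac{\hstr}{2}\norm{\new - \point}^{2}$; combining the two, the quadratic terms cancel and the drift term is at most $\tfrac{1}{2\hstr}\dnorm{\dvec}^{2}$, which is \eqref{eq:onestep2}. (To match the displayed form with $\fench(\base,\point)$ in the first argument, one just notes that $\fench(\base,\dpoint) = \hreg(\base) - \hreg(\point) - \braket{\dpoint}{\base - \point}$ whenever $\point = \mirror(\dpoint)$, which is the Bregman-type expression abbreviated there.)

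Overall I expect no genuine obstacle: both parts are short manipulations built on results already in hand (\cref{lem:3point} for the identity, \cref{prop:Fench} for the strong-convexity lower bound on the Fenchel coupling, and \cref{prop:mirror} implicitly for single-valuedness of $\mirror$). If I had to flag one place to be careful, it is the sign/role bookkeeping in the three-point identity in the first step, and making sure the $\tfrac{\hstr}{2}\norm{\point-\new}^{2}$ terms cancel exactly against the Young's-inequality remainder in the second.
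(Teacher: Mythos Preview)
Your proposal is correct and follows essentially the same route as the paper: apply the three-point identity \eqref{eq:3point} with the roles of $\dpoint$ and $\new[\dpoint]$ swapped to get \eqref{eq:onestep1}, then split $\braket{\dvec}{\new-\base}$ through $\point$, use Young's inequality, and cancel the $\tfrac{\hstr}{2}\norm{\new-\point}^{2}$ term against the lower bound on $\fench(\new,\dpoint)$ from \cref{prop:Fench}. Your bookkeeping is in fact cleaner than the paper's written proof (which has a few evident typos in the displayed intermediate expressions), and you correctly flag that $\fench(\base,\point)$ in \eqref{eq:onestep2} is to be read as $\fench(\base,\dpoint)$ and that the constant should be $\tfrac{1}{2\hstr}$ rather than $\tfrac{1}{2}\hstr$.
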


\begin{proof}
By the three-point identity \eqref{eq:3point}, we have
\begin{equation}
\fench(\point,\dpoint)
	= \fench(\point,\new[\dpoint])
		+ \fench(\new,\point)
		+ \braket{\dpoint - \new[\dpoint]}{\new - \base}
\end{equation}
so our first claim follows by rearranging.
For our second claim, simply note that
\begin{align}
\fench(\base,\dpoint)
		+ \braket{\dvec}{\new - \base}
		- \fench(\new,\dpoint)
	&= \fench(\base,\dpoint)
		+ \braket{\dvec}{\point - \base}
		+ \braket{\dvec}{\new-\point}
		- \fench(\base,\dpoint)
	\notag\\
	&\leq \fench(\base,\dpoint)
		+ \braket{\dvec}{\point - \base}
		+ \frac{1}{2\hstr} \dnorm{\dvec}^{2}
		+ \frac{\hstr}{2} \norm{\point-\base}^{2}
		- \fench(\base,\dpoint)
\end{align}
so our claim follows from \cref{prop:Fench}.
\end{proof}

The last result we present here is sometimes referred to as a ``four-point lemma'', and concerns the Fenchel coupling before and after an \emph{extrapolation} step:
\begin{lemma}
\label{lem:twostep}
Fix some $\base\in\points$ and $\dpoint,\dvec_{1},\dvec_{2}\in\dpoints$.
Then, letting $\point = \mirror(\dpoint)$, $\new[\dpoint_{i}] = \dpoint + \dvec_{i}$, and $\new[\point_{i}] = \mirror(\new[\dpoint_{i}])$, $i=1,2$, as per \eqref{eq:update}, we have:
\begin{subequations}
\label{eq:twostep}
\begin{align}
\fench(\base,\new[\dpoint_{2}])
	&= \vphantom{\frac{1}{2}}
		\fench(\base,\dpoint)
		+ \braket{\dvec_{2}}{\new[\point_{1}] - \base}
		+ \bracks*{ \braket{\dvec_{2}}{\new[\point_{2}] - \new[\point_{1}]} - \fench(\new[\point_{2}],\dpoint) }
	\label{eq:twostep1}
	\\
	&= \vphantom{\frac{1}{2}}
		\fench(\base,\dpoint)
		+ \braket{\dvec_{2}}{\new[\point_{1}] - \base}
		+ \braket{\dvec_{2} - \dvec_{1}}{\new[\point_{2}] - \new[\point_{1}]}
		- \fench(\new[\point_{2}],\new[\dpoint_{1}])
		- \fench(\new[\point_{1}],\dpoint)
	\label{eq:twostep2}
	\\
	&\leq \vphantom{\frac{1}{2}}
		\fench(\base,\dpoint)
		+ \braket{\dvec_{2}}{\new[\point_{1}] - \base}
		+ \frac{1}{2\hstr} \dnorm{\dvec_{2} - \dvec_{1}}^{2}
		- \frac{\hstr}{2} \norm{\new[\point_{1}] - \point}^{2}
	\eqdot
	\label{eq:twostep3}
\end{align}
\end{subequations}
\end{lemma}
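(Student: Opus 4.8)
The plan is to derive the three assertions in order, using only the three-point identity of \cref{lem:3point}, the one-step identity of \cref{lem:onestep}, the quadratic lower bound \eqref{eq:Fench-norm} on the Fenchel coupling, and Young's inequality. The first identity \eqref{eq:twostep1} is immediate from \cref{lem:onestep}: applying its first claim \eqref{eq:onestep1} with update vector $\dvec \equiv \dvec_{2}$ (so that $\new[\dpoint] = \new[\dpoint_{2}]$ and $\new = \new[\point_{2}]$) gives
\[
\fench(\base,\new[\dpoint_{2}])
	= \fench(\base,\dpoint)
		+ \braket{\dvec_{2}}{\new[\point_{2}] - \base}
		- \fench(\new[\point_{2}],\dpoint),
\]
and splitting $\braket{\dvec_{2}}{\new[\point_{2}] - \base} = \braket{\dvec_{2}}{\new[\point_{1}] - \base} + \braket{\dvec_{2}}{\new[\point_{2}] - \new[\point_{1}]}$ rearranges this into \eqref{eq:twostep1}.

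For \eqref{eq:twostep2}, I would rewrite the term $\fench(\new[\point_{2}],\dpoint)$ appearing in \eqref{eq:twostep1} by invoking the three-point identity \eqref{eq:3point} with reference point $\new[\point_{2}]$, base dual variable $\new[\dpoint_{1}]$ \textendash\ whose mirror image is exactly $\new[\point_{1}] = \mirror(\new[\dpoint_{1}])$ \textendash\ and ``updated'' dual variable $\dpoint$. This yields
\[
\fench(\new[\point_{2}],\dpoint)
	= \fench(\new[\point_{2}],\new[\dpoint_{1}])
		+ \fench(\new[\point_{1}],\dpoint)
		+ \braket{\dpoint - \new[\dpoint_{1}]}{\new[\point_{1}] - \new[\point_{2}]},
\]
and, since $\dpoint - \new[\dpoint_{1}] = -\dvec_{1}$, the last inner product equals $\braket{\dvec_{1}}{\new[\point_{2}] - \new[\point_{1}]}$. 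Substituting this back into \eqref{eq:twostep1} merges the two inner products into $\braket{\dvec_{2} - \dvec_{1}}{\new[\point_{2}] - \new[\point_{1}]}$ and leaves the two residual terms $-\fench(\new[\point_{2}],\new[\dpoint_{1}]) - \fench(\new[\point_{1}],\dpoint)$, which is precisely \eqref{eq:twostep2}.

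Finally, for the bound \eqref{eq:twostep3} I would estimate the right-hand side of \eqref{eq:twostep2} term by term. Young's inequality gives $\braket{\dvec_{2} - \dvec_{1}}{\new[\point_{2}] - \new[\point_{1}]} \leq \tfrac{1}{2\hstr}\dnorm{\dvec_{2} - \dvec_{1}}^{2} + \tfrac{\hstr}{2}\norm{\new[\point_{2}] - \new[\point_{1}]}^{2}$; the bound $\fench(\new[\point_{2}],\new[\dpoint_{1}]) \geq \tfrac{\hstr}{2}\norm{\mirror(\new[\dpoint_{1}]) - \new[\point_{2}]}^{2} = \tfrac{\hstr}{2}\norm{\new[\point_{1}] - \new[\point_{2}]}^{2}$ from \eqref{eq:Fench-norm} cancels the leftover quadratic term; and a second application of \eqref{eq:Fench-norm} gives $\fench(\new[\point_{1}],\dpoint) \geq \tfrac{\hstr}{2}\norm{\mirror(\dpoint) - \new[\point_{1}]}^{2} = \tfrac{\hstr}{2}\norm{\new[\point_{1}] - \point}^{2}$, producing the negative term $-\tfrac{\hstr}{2}\norm{\new[\point_{1}] - \point}^{2}$. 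Adding back $\fench(\base,\dpoint) + \braket{\dvec_{2}}{\new[\point_{1}] - \base}$ yields \eqref{eq:twostep3}. There is no genuine difficulty here; the only step requiring care is selecting the correct orientation of the three-point identity in the second paragraph \textendash\ i.e. which of $\new[\point_{1}],\new[\point_{2}],\point$ is the mirror image of which dual variable \textendash\ so that the sign of the residual inner product is the one needed to assemble \eqref{eq:twostep2}.
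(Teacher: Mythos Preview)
Your proposal is correct and follows essentially the same route as the paper: \eqref{eq:twostep1} via \cref{lem:onestep} with the obvious split of the inner product, \eqref{eq:twostep2} via the three-point identity \eqref{eq:3point} applied exactly as you describe (the paper only cites \eqref{eq:3point} without spelling out the substitution, whereas you do), and \eqref{eq:twostep3} via the Peter--Paul form of Young's inequality together with two applications of \eqref{eq:Fench-norm}. There is nothing to add.
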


\begin{proof}
By \cref{lem:onestep}, we have
\begin{align}
\fench(\base,\new[\dpoint_{2}])
	&= \fench(\base,\dpoint)
		+ \braket{\dvec_{2}}{\new[\point_{2}] - \base}
		- \fench(\new[\point_{2}],\dpoint)
\end{align}
so \eqref{eq:twostep1} follows by writing $\braket{\dvec_{2}}{\new[\point_{2}] - \base} = \braket{\dvec_{2}}{\new[\point_{1}] - \base} + \braket{\dvec_{2}}{\new[\point_{2}] - \new[\point_{1}]}$, and \eqref{eq:twostep2} follows from the three-point identity \eqref{eq:3point} for the Fenchel coupling.
Finally, for \eqref{eq:twostep3}, the Fenchel-Young inequality in Peter-Paul form yields
\begin{equation}
\label{eq:twostepx}
\braket{\dvec_{2} - \dvec_{1}}{\new[\point_{2}] - \new[\point_{1}]}
	\leq \frac{1}{2\hstr} \dnorm{\dvec_{2} - \dvec_{1}}^{2}
		+ \frac{\hstr}{2} \norm{\new[\point_{2}] - \new[\point_{1}]}^{2}
\end{equation}
and our claim follows again by invoking \cref{prop:Fench} to write
\begin{equation}
\frac{\hstr}{2} \norm{\new[\point_{2}] - \new[\point_{1}]}^{2}
	- \fench(\new[\point_{2}],\new[\dpoint_{1}])
	- \fench(\new[\point_{1}],\dpoint)
	\leq -\fench(\new[\point_{1}],\dpoint)
	\leq -\frac{\hstr}{2} \norm{\new[\point_{1}] - \point}^{2}
\end{equation}
and then substituting the result in \eqref{eq:twostepx}
\end{proof}

\Cref{lem:onestep,lem:twostep} will be responsible for most of the heavy lifting to derive a Lyapunov function for \eqref{eq:FTRL+}.
We discuss the relevant details in \cref{app:discrete}.

We conclude this section with a variational characterization of the abstract update \eqref{eq:update} in the case where $\subd\hreg$ of $\hreg$ admits a continuous selection \textendash\ or, alternatively, $\hreg$ is smooth in the sense of \eqref{eq:hsmooth}.

\begin{lemma}
\label{lem:subsel-var-new}
Fix some $\dpoint,\new[\dpoint]\in\dpoints$, and let $\new = \mirror(\new[\dpoint])$.
Then, for all $\base\in\points$, we have
\begin{equation}
\label{eq:subsel-var-new}
\braket{\new[\dpoint] - \dpoint}{\base - \new}
	\leq \braket{\subsel\hreg(\new) - \dpoint}{\base - \new}
	\eqdot
\end{equation}
\end{lemma}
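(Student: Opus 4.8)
The plan is to obtain \eqref{eq:subsel-var-new} as a one-line consequence of the variational characterization \eqref{eq:subsel-var} in \cref{prop:mirror}(e). The first step is to note, via the equivalence \eqref{eq:hinv}, that the hypothesis $\new = \mirror(\new[\dpoint])$ is equivalent to $\new[\dpoint] \in \subd\hreg(\new)$; in particular $\new \in \dom\subd\hreg$, so the continuous selection $\subsel\hreg$ is defined at $\new$ and \eqref{eq:subsel-var} applies there.

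Next I would apply \eqref{eq:subsel-var} with the pair $(\point,\dpoint)$ appearing there specialized to $(\new,\new[\dpoint])$: since $\new = \mirror(\new[\dpoint])$, this gives $\braket{\subsel\hreg(\new)}{\pointalt - \new} \geq \braket{\new[\dpoint]}{\pointalt - \new}$ for every $\pointalt \in \points$. Taking $\pointalt = \base$ and subtracting $\braket{\dpoint}{\base - \new}$ from both sides yields $\braket{\subsel\hreg(\new) - \dpoint}{\base - \new} \geq \braket{\new[\dpoint] - \dpoint}{\base - \new}$, which is exactly \eqref{eq:subsel-var-new}. Equivalently, using the cone decomposition \eqref{eq:subsel-cone} one may write $\new[\dpoint] = \subsel\hreg(\new) + \dvec$ with $\dvec \in \pcone(\new)$, so that $\braket{\dvec}{\base - \new} \leq 0$ by the definition \eqref{eq:pcone} of the polar cone; expanding $\braket{\new[\dpoint] - \dpoint}{\base - \new} = \braket{\subsel\hreg(\new) - \dpoint}{\base - \new} + \braket{\dvec}{\base - \new}$ then gives the claim.

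I do not expect any substantive obstacle: the statement is a bookkeeping corollary of \cref{prop:mirror}(e). The only point that warrants attention is that the continuous-selection hypothesis of that proposition is in force, which is precisely what the smoothness requirement \eqref{eq:hsmooth} on the regularizers (restricted to $\im\mirror$) guarantees — so that $\subsel\hreg$ is a well-defined, continuous selection of $\subd\hreg$ on $\im\mirror = \dom\subd\hreg$ and the argument above is legitimate.
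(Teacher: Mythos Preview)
Your proposal is correct and follows essentially the same approach as the paper: apply \eqref{eq:subsel-var} from \cref{prop:mirror}(e) with the substitutions $\dpoint \gets \new[\dpoint]$, $\point \gets \new$, $\pointalt \gets \base$, and then subtract $\braket{\dpoint}{\base - \new}$ from both sides. Your additional alternative argument via the polar-cone decomposition \eqref{eq:subsel-cone} is equivalent and also fine, and your remark about the smoothness hypothesis \eqref{eq:hsmooth} ensuring the existence of the continuous selection is a helpful sanity check that the paper leaves implicit.
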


\begin{proof}
Invoking \eqref{eq:subsel-var} in \cref{prop:mirror} with $\dpoint \gets \new[\dpoint]$, $\point \gets \new$, and $\pointalt \gets \base$, we get
\begin{equation}
\braket{\new[\dpoint]}{\base - \new}
	\leq \braket{\subsel\hreg(\new)}{\base - \new}
	\eqdot
\end{equation}
Our claim then follows by subtracting $\braket{\dpoint}{\base - \new}$ from both sides of the above.
\end{proof}

\section{Continuous-time analysis}
\label{app:continuous}

\subsection{Dynamical systems notions}
\label{sec:dynamical-systems-app}

To fix notation, we recall here some basics from the theory of dynamical systems, roughly following \citep{robinsonDynamicalSystemsStability1998, arnoldMathematicalMethodsClassical1989}. In this section, $\mfld$ is an open subset of a Euclidean space of dimension $\vdim$.

We consider a system of \acp{ODE} of the form
\begin{equation}
\tag{DS}
\label{eq:dynamical-system}
\dot{\point}\of\time = \vfield(\pointof\time) \, ,
\end{equation}
where $\pointof\time$ is a curve in $\mfld$ defined on an open interval $\interval \subseteq \R$ (that without loss of generality we assume to contain $0$), and $\vfield\from\mfld\to\R^{\vdim}$ is a smooth function. The function $\vfield$ is called \define{vector field} because it assigns a vector $\vfield\of\point$ to each point $\point$ in $\mfld$, and \eqref{eq:dynamical-system} is called \define{dynamical system generated by $\vfield$}.

Given $\pstart \in \mfld$, an \define{orbit with initial condition $\pstart$} is a solution $\pointof\time$ of \eqref{eq:dynamical-system} with $\pointof\tstart = \pstart$. The \define{flow generated by $\vfield$} is the smooth function $\flowmap\from \interval \times \mfld \to\mfld$ such that $\flowt{\tstart}{\pstart} = \pstart$ for all $\pstart \in \mfld$ and $\frac{d}{dt} \flowt{\time}{\point} = \vfield( \flowt{\time}{\point} )$ for all $\time \in \interval$. In words, $\flowt{\time}{\pstart}$ is the orbit $\pointof\time$  with initial condition $\pstart$; the existence and uniqueness of this function is guaranteed by the existence and uniqueness theorem of solutions of \aclp{ODE}.

A flow $\flowmap$ is called \define{volume-preserving} if $\vol\parens{ \flowt{\time}{\pstarts}} = \vol(\pstarts) $ for any (Lebesgue) measurable subset $\pstarts \subseteq \mfld$ and all $\time \in \interval$. Liouville's theorem gives a sufficient condition for a flow to be volume-preserving based on the \define{divergence} of its generating field:\footnote{Recall here that the divergence is a differential operator mapping a vector field $\vfield$ on $\mfld$ to the real-valued function $\diver \vfield \of \point \defeq \sum_{\pure = 1}^{\vdim} \pd_{\pure}\vfield^{\pure}\of\point$, where $\pd_{\pure}$ is a shorthand for the partial derivative $\pd/\pd\point_{\pure}$}
\begin{theorem*}[Liouville]
If $\diver\vfield \equiv 0$ then the flow generated by $\vfield$ is volume-preserving.
\end{theorem*}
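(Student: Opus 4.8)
The plan is to express the volume of the image set as an integral of the Jacobian of the flow and then show that, under the hypothesis, this Jacobian is identically $1$. Concretely, fix a (Lebesgue) measurable set $\pstarts \subseteq \mfld$ and a time $\time$ in the interval on which $\flowt{\timealt}{\point}$ is defined for every $\point \in \pstarts$ and every $\timealt$ between $\tstart$ and $\time$; write $\mat(\time,\point) = \pd \flowt{\time}{\point} / \pd \point$ for the Jacobian matrix of the map $\point \mapsto \flowt{\time}{\point}$ and $J(\time,\point) = \det \mat(\time,\point)$. Since $\flowt{\tstart}{\point} = \point$ we have $\mat(\tstart,\point) = \eye$, so $J(\tstart,\point) = 1$, and by continuity $J$ stays positive; the change-of-variables formula then gives $\vol\parens{\flowt{\time}{\pstarts}} = \int_{\pstarts} J(\time,\point) \dd \point$.

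The heart of the argument is the classical \emph{Liouville formula}
\[
\ddt J(\time,\point) = \diver \vfield\parens{\flowt{\time}{\point}} \, J(\time,\point) .
\]
To derive it I would first differentiate the defining identity $\ddt \flowt{\time}{\point} = \vfield\parens{\flowt{\time}{\point}}$ with respect to the initial condition $\point$ and interchange the $\time$- and $\point$-derivatives (legitimate because $\vfield$ is smooth, hence the flow depends $C^{1}$-smoothly on initial data): this yields the first variational equation $\ddt \mat(\time,\point) = D\vfield\parens{\flowt{\time}{\point}} \, \mat(\time,\point)$, where $D\vfield$ is the Jacobian of $\vfield$. Then I would apply Jacobi's formula for the derivative of a determinant, $\ddt \det \mat = \det\mat \cdot \tr\parens{\mat^{-1}\, \ddt \mat}$ (valid since $\mat$ is invertible), and substitute the variational equation to get $\ddt J = J \cdot \tr\parens{D\vfield\parens{\flowt{\time}{\point}}} = J \cdot \diver\vfield\parens{\flowt{\time}{\point}}$, which is exactly the displayed identity.

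With the hypothesis $\diver\vfield \equiv 0$ in hand, the Liouville formula reduces to $\ddt J(\time,\point) = 0$, so $J(\time,\point) = J(\tstart,\point) = 1$ for all $\point$ and all admissible $\time$. Plugging this back into the change-of-variables formula gives $\vol\parens{\flowt{\time}{\pstarts}} = \int_{\pstarts} 1 \dd\point = \vol(\pstarts)$. Since $\pstarts$ and $\time$ were arbitrary, the flow is volume-preserving; the restriction to a common interval of existence is harmless, as the identity $J \equiv 1$ itself rules out degeneration of the flow map, and one extends to all of $\interval$ by the usual continuation argument.

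The only genuinely non-routine point is the Liouville formula — and even there the two ingredients (the variational equation, obtained by swapping derivatives, and Jacobi's determinant formula) are entirely standard; so I would expect the bulk of the write-up to be bookkeeping, namely smooth dependence on initial conditions, positivity of $J$ to drop the absolute value in the change of variables, and differentiation under the integral sign, all of which are controlled by the smoothness of $\vfield$.
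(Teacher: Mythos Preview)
Your argument is correct and is the standard textbook proof of Liouville's theorem (via the variational equation and Jacobi's determinant formula). Note, however, that the paper does not actually supply a proof of this statement: it is quoted as a classical result from dynamical systems (the paper cites Arnold's \emph{Mathematical Methods of Classical Mechanics} and Robinson's \emph{Dynamical Systems} for this background material), so there is nothing to compare against beyond observing that your write-up matches the proof one finds in those references.
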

Volume-preserving flows are closely related to recurrent dynamical patterns. A point $\point \in \mfld$ is said to be \define{recurrent} under \eqref{eq:dynamical-system} if, for every neighborhood $\pstarts$ of $\point \in \mfld$,
there exists an increasing sequence of time $\time_{\run}\uparrow\infty$
such that $\flowt{\time_{\run}}{\point}$ is defined and falls in $\pstarts$ for all $\run$. Moreover, \eqref{eq:dynamical-system} is said to be \define{\acl{PR}} if almost every point $\point \in \mfld$ is recurrent.
The celebrated Poincaré recurrence theorem gives a sufficient condition for a dynamical system to be \acl{PR}:
\begin{theorem*}[Poincaré]
Let $\vfield$ be a smooth vector field on $\mfld$. If the flow induced by $\vfield$ is volume-preserving and all the orbits of \eqref{eq:dynamical-system} are bounded, then \eqref{eq:dynamical-system} is \acl{PR}.
\end{theorem*}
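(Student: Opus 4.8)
The plan is to run Poincar\'e's classical measure-theoretic argument, the one genuine subtlety being that $\mfld$ need not have finite Lebesgue volume — which is exactly where the boundedness hypothesis is used. Throughout I would read ``bounded orbit'' in its usual strong form, namely that every orbit is relatively compact in $\mfld$; this makes the flow complete (the escape-time lemma for ODEs), so that $\flowmap$ is defined on all of $\R\times\mfld$ and each $\flowmap_{\time}$ is a volume-preserving diffeomorphism of $\mfld$ (volume-preservation is hypothesized; smoothness and invertibility are standard). It then suffices to prove recurrence for the discrete-time system generated by the time-one map $\varphi\defeq\flowmap_{1}$: since $\varphi^{\run}=\flowmap_{\run}$ by the group property, if almost every $\point$ admits integers $n_{j}\uparrow\infty$ with $\varphi^{n_{j}}(\point)$ lying in any prescribed neighborhood of $\point$, the same integer times witness recurrence for \eqref{eq:dynamical-system}.

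First I would localize to pieces of finite volume. Fix a countable basis $\{\nhd_{k}\}_{k\in\N}$ of bounded open sets for the topology of $\mfld$, together with a compact exhaustion $K_{1}\Subset K_{2}\Subset\dotsb$ of $\mfld$ (so $\bigcup_{R}K_{R}=\mfld$ and each $K_{R}$ lies in the interior of $K_{R+1}$). For each $R\in\N$ set
\[
W_{R}
	\defeq \setdef{\point\in\mfld}{\varphi^{\run}(\point)\in K_{R}\ \text{for all }\run\in\Z}
	= \bigcap_{\run\in\Z}\varphi^{-\run}(K_{R})
	\eqdot
\]
Then $W_{R}$ is closed (hence Borel), $\varphi$-invariant, contained in $K_{R}$ — so $\vol(W_{R})<\infty$ — and, because every orbit is relatively compact, $\bigcup_{R}W_{R}=\mfld$ (the discrete orbit of any $\point$ has compact closure, which is covered by the open sets $\operatorname{int}K_{R}$, hence lies in a single $K_{R}$).

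Next comes the heart of the argument: for each $k$, the set of points of $\nhd_{k}$ that never return to $\nhd_{k}$,
\[
N_{k}
	\defeq \setdef{\point\in\nhd_{k}}{\varphi^{\run}(\point)\notin\nhd_{k}\ \text{for all integers }\run\geq1}
	= \nhd_{k}\cap\bigcap_{\run\geq1}\varphi^{-\run}(\mfld\setminus\nhd_{k})
	\eqdot
\]
is Lebesgue-null. I would prove this piece by piece: write $N_{k}=\bigcup_{R}(N_{k}\cap W_{R})$ and fix $R$. The backward iterates $\{\varphi^{-j}(N_{k}\cap W_{R})\}_{j\geq0}$ are pairwise disjoint — if a point lay in the $j$-th and the $i$-th with $j>i\geq0$, then applying $\varphi^{i}$ would produce $\pointalt\in N_{k}$ with $\varphi^{j-i}(\pointalt)\in N_{k}\subseteq\nhd_{k}$ and $j-i\geq1$, contradicting $\pointalt\in N_{k}$. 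All these sets lie in the invariant set $W_{R}$ and, by volume-preservation of $\varphi$, share the common value $\vol(N_{k}\cap W_{R})$; since $\vol(W_{R})<\infty$, that value must be $0$. Hence $\vol(N_{k})=0$, and so $\vol\bigl(\bigcup_{k}N_{k}\bigr)=0$.

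Finally I would saturate: put $N_{\infty}\defeq\bigcup_{k\in\N}\bigcup_{\run\in\Z}\varphi^{\run}(N_{k})$, a countable union of Borel null sets (again by volume-preservation), hence null, and manifestly $\varphi$-invariant. For any $\point\notin N_{\infty}$ every iterate $\varphi^{\run}(\point)$ also lies outside $N_{\infty}\supseteq N_{k}$; so, given a neighborhood $\nhd$ of $\point$, picking $k$ with $\point\in\nhd_{k}\subseteq\nhd$ yields an integer $n_{1}\geq1$ with $\varphi^{n_{1}}(\point)\in\nhd_{k}$, then — applying the same to $\varphi^{n_{1}}(\point)\notin N_{k}$ — an integer $n_{2}>n_{1}$ with $\varphi^{n_{2}}(\point)\in\nhd_{k}$, and inductively $n_{1}<n_{2}<\dotsb\uparrow\infty$ with $\varphi^{n_{j}}(\point)\in\nhd_{k}\subseteq\nhd$; thus $\point$ is recurrent. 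Since $\mfld\setminus N_{\infty}$ has full measure, \eqref{eq:dynamical-system} is Poincar\'e recurrent. The main obstacle here is the infinite-volume issue — the textbook one-line proof assumes a finite (or probability) measure space — and it is precisely the boundedness of the orbits, reorganized as the slicing of $\mfld$ into the invariant finite-volume sets $W_{R}$, that repairs it; everything else (completeness of the flow, Borel measurability of $W_{R}$ and $N_{k}$, and the passage from $\varphi$ back to the flow) is routine and would simply be spelled out.
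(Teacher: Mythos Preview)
Your argument is correct and complete. One note on context: the paper does not actually prove this theorem --- it is stated as a classical result (with the references to Arnold and Robinson at the head of \S\ref{sec:dynamical-systems-app}) and invoked as a black box in the proof of \cref{thm:recurrence}. So there is no ``paper's own proof'' to compare against; you have simply supplied what the paper omits.

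On the substance: your identification of the infinite-volume issue as the only real obstacle is exactly right, and your resolution via the $\varphi$-invariant finite-volume sets $W_{R}$ is the clean way to do it. The interpretation of ``bounded'' as ``relatively compact in $\mfld$'' is the correct reading for the paper's application --- there $\mfld$ is the payoff-difference space $\scorediffs$, which is a full Euclidean space, so bounded and relatively compact coincide and there is no boundary to worry about. Everything else (completeness via the escape lemma, the reduction to the time-one map, the disjoint-backward-iterates argument, the saturation step) is standard and correctly executed.
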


\subsection{Basic properties of \ac{FTRL}}
\label{sec:continuous-ftrl-app}
In this section we survey some of the properties of the \acl{FTRL} learning scheme in a continuous-time, multi-agent setting, in line with the presentations of \citep{MS16, MPP18, FVGL+20}. For ease of reference we recall here some of the notions introduced in \cref{app:mirror} and in \cref{sec:prelims,sec:dynamics} from the main body of the paper.

Let $\fingame = \fingamefull$ be a finite normal form game, and let $\payfield$ denote its payoff field. The game's strategy space is $\strats = \prod_{\playalt \in \players} \simplex(\pures_{\playalt}) \subseteq \vambient \defeq \ambient $, and the game's payoff space is
$\scores \defeq \vambientdual$. The payoff field is a map $\payfield\from\vambient\to\scores$ that evaluated at a strategy $\strat \in \strats$ acts linearly on any $\stratalt \in \strats$ by
\begin{equation}
\begin{split}
\dualp{\payfield\of\strat}{\stratalt}
& = \insum_{\play\in\players} \dualp{\payfield_{\play}\of\strat}{\stratalt_{\play}}
= \insum_{\play\in\players} \insum_{\pure_{\play} \in \pures_{\play}} \payfield_{\findex}\of\strat \,  \stratalt_{\findex} \\
& = \insum_{\play\in\players} \pay_{\play}(\stratalt_{\play}, \strat_{\others}) \in \R \eqdot
\end{split}
\end{equation}

Assume now that $\fingame$ is played continuously over time. As discussed in \cref{sec:dynamics}, the main idea behind the \acl{FTRL} learning scheme is that, at any given time $\time\geq\tstart$, each player $\play\in\players$ tracks their cumulative payoff up to time $\time$ and plays a ``regularized'' best response strategy in light of this information. Concretely, given a cumulative payoff vector $\scoreofi\time \in \scores_{\play}$, each player $\play \in \players$ selects this optimal strategy $\stratofi\time \in \strats_{\play}$ by means of a \define{regularized best response} map $\mirror_{\play} \from \scores_{\play} \to \strats_{\play}$, a single-valued analogue of the best-response correspondence $\score_{\play} \mapsto \argmax_{\strat_{\play}\in\strats_{\play}} \braket{\score_{\play}}{\strat_{\play}}$. A standard way \cite{SS11} of obtaining such map is to introduce a \textit{regularizer function} $\hreg_{\play}\from\strats_{\play}\to\R$ that is \begin{enumerate*}
[(\itshape i\hspace*{1pt}\upshape)]
\item continuous on $\strats_{\play}$, 
\item smooth on $\intstrats_{\play}$, the relative interior of $\strats_{\play}$, and
\item strongly convex on $\strats_{\play}$ (as per \cref{eq:hstr})
\end{enumerate*}; and to consider the induced \define{choice map} $\mirror_{\play} \from \scores_{\play} \to \strats_{\play}$ defined by
\[
\EqMirrori
\]
By \cref{prop:mirror}, $\mirror_{\play}$ is well-defined and Lipschitz continuous, and it coincides with the differential $\nabla \hconj_{\play}$ of $\hconj_{\play}:\scores_{\play}\to\R$, the \textit{convex conjugate} of $\hreg_{\play}$.

In a continuous time setting, this regularized learning scheme translates into the following implicit equations of motion, which govern the evolution of the cumulative payoff $\scoreof\time \in \scores$ and of the mixed strategy profile $\stratof\time \in \strats$ as the players attempt to maximize their payoff over time:
\begin{equation}
\label{eq:FTRL-y}
\score_{\findex}\of\time
	= \score_{\findex}\of\tstart + \int_{\tstart}^{\time}
	\payfield_{\findex}(\stratof\timealt)
	\dd\timealt
	\quad
	\text{with}
	\quad
\stratofi{\time}
	= \mirror_{\play}(\scoreofi{\time})  \, ,
\end{equation}
for all $\play \in \players, \pure_{\play} \in \pures_{\play}$. A straightforward computation shows that this is equivalent to \cref{eq:FTRL-x} from \cref{sec:dynamics} in the main text, that governs the evolution of the mixed strategy $\stratof\time\in\strats$:
\[
\EqFtrlx \eqdot
\]
Importantly, \cref{eq:FTRL-y} can be cast in the form \eqref{eq:dynamical-system} of a dynamical system in the game's payoff space. For each $\play \in \players$, differentiation with respect to $\time$ yields
\[
\EqFtrlCont \, ,
\]
and by aggregating the player indices we obtain the system of \acp{ODE}
\begin{equation}
\label{eq:FTRL-ode-system}
\dot\score = \dualdyn (\score) \, ,
\end{equation}
where $\dualdyn \defeq \payfield \circ \mirror : \scores \to \scores$ is a continuous vector field on $\scores$; \cf \cref{diag:FTRL-diagram}.

Existence and uniqueness of a global solution $\scoreof\time \in \scores$ of \cref{eq:FTRL-ode-system} for any initial condition $\scoreof\tstart \in \scores$
are guaranteed by standard arguments \citep[Prop. 3.1]{MS16}; in line with the terminology of the previous section we will refer to such a solution as a \define{\acl{DO}}.

\subsection{Constant of motion for harmonic games}
\label{sec:constant-fenchel}
The following result shows that \ac{FTRL} in harmonic games admits a constant of motion.

\begin{propositionApp}
Let $\fingame = \fingamefull$ be a finite game and consider a vector $\wei \in \R^{\nPlayers}_{++}$ and a fully mixed strategy $\heq \in \strats$. Then the weighted Fenchel coupling $\henergy \from \scores \to \R$ defined by
\begin{equation}
\label{eq:constant-fenchel}
\henergy\of\score
\defeq \insum_{\play} \wei_{\play} \fench_{\play}(\heq_{\play}, \score_{\play})
=  \insum_{\play} \wei_{\play} \left( \hreg_{\play}(\heq_{\play}) + \hconj_{\play}\of\scori - \dualp{\heq_{\play}}{\scori} \right)
\end{equation}
is a constant of motion under \eqref{eq:FTRL-cont} if and only if $\fingame$ is harmonic with strategic center $\stratcenter$.
\end{propositionApp}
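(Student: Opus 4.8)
The plan is to differentiate $\henergy$ along the orbits of \eqref{eq:FTRL-cont} and to recognize the resulting expression as the left‑hand side of the strategic‑center identity \eqref{eq:harmonic-center}. Write $\Phi(\strat)\defeq\sum_{\play\in\players}\wei_{\play}\dualp{\payfield_{\play}(\strat)}{\strat_{\play}-\heq_{\play}}$ for $\strat\in\strats$. Since the vector field $\dualdyn=\payfield\circ\mirror$ is continuous, every orbit $\scoreof{\time}$ of \eqref{eq:FTRL-cont} is $C^{1}$, and since each convex conjugate $\hconj_{\play}$ is differentiable with $\nabla\hconj_{\play}=\mirror_{\play}$ (\cref{prop:mirror}, cf.\ \eqref{eq:Danskin}), the map $\time\mapsto\henergy(\scoreof{\time})$ is differentiable; recalling that $\hreg_{\play}(\heq_{\play})$ is a constant, the chain rule gives
\begin{equation*}
\frac{d}{d\time}\henergy(\scoreof{\time})
	= \sum_{\play\in\players}\wei_{\play}\dualp{\nabla\hconj_{\play}(\scoreofi{\time})-\heq_{\play}}{\dotscoreofi{\time}}
	= \sum_{\play\in\players}\wei_{\play}\dualp{\stratofi{\time}-\heq_{\play}}{\payfield_{\play}(\stratof{\time})}
	= \Phi(\stratof{\time}),
\end{equation*}
where I used $\nabla\hconj_{\play}(\scoreofi{\time})=\mirror_{\play}(\scoreofi{\time})=\stratofi{\time}$ and $\dotscoreofi{\time}=\payfield_{\play}(\stratof{\time})$. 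This is the only genuine computation in the proof, and it is short.

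From here both implications follow quickly. If $\fingame$ is harmonic with strategic center $\stratcenter$, then $\Phi\equiv0$ on $\strats$ by \eqref{eq:harmonic-center} (\cref{prop:harmonic-center}), so $\frac{d}{d\time}\henergy(\scoreof{\time})=0$ along every orbit and $\henergy$ is a constant of motion. Conversely, if $\henergy$ is a constant of motion, then for an arbitrary $\score_{0}\in\scores$ the function $\time\mapsto\henergy(\scoreof{\time})$ along the (global) orbit through $\score_{0}$ is constant, so its derivative at $\time=\tstart$ vanishes; by the displayed identity this gives $\Phi(\mirror(\score_{0}))=0$, and since $\score_{0}$ was arbitrary, $\Phi$ vanishes on $\im\mirror$. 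By \cref{prop:mirror} we have $\relint\strats_{\play}\subseteq\im\mirror_{\play}$ for each $\play$, hence $\intstrats\subseteq\im\mirror$; since $\Phi$ is continuous on $\strats$ (indeed it is multilinear in $\strat$, exactly the multilinear extension appearing in the proof of \cref{lemma:harmonic-mixed}, with measure $\meas_{\play}=\wei_{\play}\heq_{\play}$) and $\intstrats$ is dense in $\strats$, we conclude $\Phi\equiv0$ on $\strats$, which is precisely \eqref{eq:harmonic-center}. Then \cref{prop:harmonic-center} (equivalently \cref{lemma:harmonic-mixed}) shows that $\fingame$ is harmonic with strategic center $\stratcenter$, completing the equivalence.

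The one place requiring care is the converse implication: one should not try to read the identity $\Phi\equiv0$ off the dynamics directly on all of $\strats$, since the orbits of \eqref{eq:FTRL-cont} only sweep out $\im\mirror$, which in general omits the boundary faces of $\strats$. The remedy is the density fact $\intstrats\subseteq\im\mirror$ together with continuity (indeed polynomiality) of $\Phi$. A secondary technical point is the differentiability of $\henergy$ along orbits, which reduces to the $C^{1}$ regularity of orbits (from continuity of $\dualdyn$) and the $C^{1}$ regularity of the conjugates $\hconj_{\play}$ (from \cref{prop:mirror}); both are standard, so I expect no real obstacle beyond the density argument.
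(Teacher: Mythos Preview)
Your proof is correct and follows essentially the same approach as the paper: differentiate $\henergy$ along orbits via the chain rule, use $\nabla\hconj_{\play}=\mirror_{\play}$ and $\dotscoreofi{\time}=\payfield_{\play}(\stratof{\time})$ to identify the derivative with the strategic-center expression, and then invoke \eqref{eq:harmonic-center}. You are actually more careful than the paper on the converse direction, supplying the density argument ($\relint\strats\subseteq\im\mirror$ plus continuity of $\Phi$) that the paper's one-line ``precisely if'' glosses over.
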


\begin{proof}
Let $\scoreof\time$ be a \acl{DO}. Then by chain rule
\begin{equation}
\begin{split}
\frac{d}{dt} \henergy(\scoreof\time)
& = \insum_{\play} \wei_{\play} \big[  \dualp{\nabla\hconj_{\play}(\scori)}{\dot{\scori}}  -
\dualp{\heq_{\play}}{\dot{\scori}}   \big]
= \insum_{\play} \wei_{\play}  \,  \dualp{\strati\of\time - \heq_{\play}}{\payfield_{\play}(\stratof\time)}
\end{split}
\end{equation}
where the second equality holds by \eqref{eq:FTRL-cont} and \cref{eq:Danskin}. Then, by the characterization of harmonic games in terms of a strategic center \eqref{eq:harmonic-center}, the time derivative of the weighted Fenchel coupling vanishes identically along a dual orbit of \eqref{eq:FTRL-cont} precisely if the underlying game is harmonic.
\end{proof}
The existence of this constant of motion is fundamental for proving \cref{thm:recurrence}, \ie the Poincaré recurrence of continuous-time \ac{FTRL} in harmonic games. With this key element established, the remainder of this appendix closely follows the proof technique described by \citep{MPP18} for the analogous result in the context of \aclp{2ZSG}.

\subsection{\ac{FTRL} in the space of payoff differences}
\label{sec:ftrl-z-space}

For any initial condition $\scoreof\tstart \in \scores$, a \acl{DO} of \eqref{eq:FTRL-cont} induces a curve $\stratof\time = \mirror(\scoreof\time)$ in the game's strategy space $\strats$
which solves \cref{eq:FTRL-x} for all $\time \geq 0$; in the following we will refer to such curve as \define{\acl{TP}}. Crucially, a \acl{TP} is in general \textit{not} the global solution of a dynamical system $\dot{\strat} = \primaldyn\of\strat$ for some vector field $\primaldyn\from\strats\to\strats$ in the game's strategy space.
The reason for this is that the map $\mirror\from\scores\to\strats$ is not necessarily invertible, so
there is in general no way to identify a unique a vector field $\primaldyn$ on $\strats$ that is \define{related} to the vector field $ \dualdyn$ on $\scores$ via $\mirror$.%

\para{Related vector fields and induced dynamical systems}
The concept of vector fields related by a smooth map is standard in differential geometry (\eg \cite[p. 181]{Lee12}). Let $\mfld, \mfldalt$ be open subsets of Euclidean space: given a vector field $\dualdyn$ on $\mfld$ and a smooth map $F \from \mfld \to \mfldalt$,  a vector field $\primaldyn$ on $\mfldalt$ is called \define{$F$-related to $\dualdyn$} if, for all $\score \in \mfld$, $  (\jac{F})_{\score} \cdot \dualdyn\of{\score}  = \primaldyn\of\strat$, with $\strat = F\of\score$. Here  $\jac{F}$ is the Jacobian matrix of $F$, and $\cdot$ represents matrix-vector multiplication. If $F$ is invertible then such vector field exists always and is unique;
else, it might exist and not be unique, or not exist at all.

Vector fields that are related via a smooth map are useful inasmuch as they generate ``compatible'' dynamical systems:
\begin{lemmaApp}
\label{lem:related-vector-fields}
Let $F \from \mfld \to \mfldalt$ be a smooth map between open subsets of Euclidean spaces, and let $\dot{\score} = \dualdyn\of\score$ be a dynamical system on $\mfld$. Let $\scoreof\time$ be an orbit with initial condition $\score_{\tstart} \in \mfld$, and consider the curve on $\mfldalt$ defined by  $\stratof\time \defeq F(\scoreof\time)$. If there exists a vector field $\primaldyn$ on $\mfldalt$ that is $F$-related to $\dualdyn$, then the curve $\stratof\time$ is an orbit of the dynamical system $\dot{\strat} = \primaldyn(\strat)$ with initial condition $\strat_{\tstart} = F(\score_{\tstart})$.
\end{lemmaApp}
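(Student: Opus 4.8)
The plan is to prove this directly from the chain rule: being $F$-related is, essentially by design, precisely the condition that guarantees $F$ carries integral curves of $\dualdyn$ to integral curves of $\primaldyn$, so there is nothing to do beyond differentiating the composite curve and substituting.

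First I would record the easy structural points. Since $F\from\mfld\to\mfldalt$ is smooth and defined on all of $\mfld$, and $\scoreof{\time}$ is a solution of \eqref{eq:dynamical-system} on some open interval $\interval\ni 0$, the composite $\stratof{\time}\defeq F(\scoreof{\time})$ is a well-defined curve in $\mfldalt$ on the \emph{same} interval $\interval$; it is $C^{1}$ because $\scoreof{\time}$ is $C^{1}$ (being a solution of an ODE with continuous right-hand side) and $F$ is smooth. Moreover $\stratof{\tstart}=F(\scoreof{\tstart})=F(\score_{\tstart})$, which is the asserted initial condition. It then remains only to check that $\stratof{\time}$ solves $\dot\strat=\primaldyn(\strat)$.

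For this I would differentiate: by the chain rule, for every $\time\in\interval$,
\[
\ddt\stratof{\time}
	= (\jac{F})_{\scoreof{\time}}\cdot\ddt\scoreof{\time}
	= (\jac{F})_{\scoreof{\time}}\cdot\dualdyn(\scoreof{\time}),
\]
where the second equality uses that $\scoreof{\time}$ solves \eqref{eq:dynamical-system}. Now I would invoke the single hypothesis of the lemma, namely that $\primaldyn$ is $F$-related to $\dualdyn$, which by definition means $(\jac{F})_{\score}\cdot\dualdyn(\score)=\primaldyn(F(\score))$ for all $\score\in\mfld$. Evaluating at $\score=\scoreof{\time}$ and using $\stratof{\time}=F(\scoreof{\time})$, the display collapses to $\ddt\stratof{\time}=\primaldyn(\stratof{\time})$, valid on all of $\interval$. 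Hence $\stratof{\time}$ is an orbit of $\dot\strat=\primaldyn(\strat)$ with initial condition $\strat_{\tstart}=F(\score_{\tstart})$, as claimed.

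I do not expect a genuine obstacle in this lemma: it is pure bookkeeping, and the only points deserving a sentence each are (i) that the composite curve automatically inherits the domain $\interval$ of the original orbit, since $F$ is globally defined, and (ii) that the $F$-relatedness hypothesis is used exactly once, precisely to rewrite $(\jac{F})\cdot\dualdyn$ as $\primaldyn\circ F$ along the orbit. (If one additionally wanted the orbit through $F(\score_{\tstart})$ to be \emph{unique}, it would suffice to know $\primaldyn$ is locally Lipschitz and invoke the standard existence and uniqueness theorem; the statement as given only asserts that $\stratof{\time}$ is \emph{an} orbit, so no such regularity is needed.) The real difficulty lies one step earlier, outside this lemma: exhibiting a vector field $\primaldyn$ on $\strats$ that is $\mirror$-related to $\dualdyn=\payfield\circ\mirror$ can fail, because $\mirror$ need not be injective, which is exactly why the lemma is stated conditionally on the existence of $\primaldyn$.
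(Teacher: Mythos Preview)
Your proof is correct and follows exactly the same approach as the paper: differentiate the composite curve via the chain rule, substitute $\dot\score=\dualdyn(\score)$, and then invoke the $F$-relatedness hypothesis to rewrite $(\jac F)\cdot\dualdyn$ as $\primaldyn\circ F$. Your additional remarks on the domain of the composite curve and on uniqueness are sound but go beyond what the paper records.
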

\begin{proof}
By chain rule, 
\begin{equation}
\frac{d}{dt}\stratof\time
= \frac{d}{dt} F(\scoreof\time)
= (\jac{F})_{\scoreof\time} \cdot \dot{\score}(\time)
= (\jac{F})_{\scoreof\time} \cdot \dualdyn(\scoreof\time)
= \primaldyn( \stratof\time ) \, ,
\end{equation}
where the last equality follows by the assumption that $\primaldyn$ is $F$-related to $\dualdyn$.
\end{proof}
In the following, if $F \from \mfld \to \mfldalt$ is a smooth function between open subsets of Euclidean spaces, and $\dualdyn, \primaldyn$ are vector fields fulfilling the assumptions of \cref{lem:related-vector-fields}, we say that the dynamical system $\dot{\score} = \dualdyn\of\score$ on $\mfld$ \define{induces the dynamical system $\dot{\strat} = \primaldyn\of\strat$ on $\mfldalt$ via $F$}.

\para{\ac{FTRL} induced in the space of payoff differences}
The choice map $\mirror \from \scores \to \strats$ is in general not smooth, and neither injective nor surjective \cite[Sec.3]{FVGL+20}, so it generally does not allow
to induce the dynamical system \eqref{eq:FTRL-ode-system} from the game's payoff space $\scores$ to the game's strategy space $\strats$.
\footnote{A detailed treatment of the conditions under which a \acl{TP} $\stratof\time$ actually \textit{is} a solution of dynamical system in the game's strategy space $\strats$ is beyond the scope of this work; we refer the interested reader to \citep{MS16,MerSan18} for an in-depth treatment.} In other words, the learning process \eqref{eq:FTRL-cont} in a finite game gives rise to a dynamical system in the game's payoff space $\scores$, to which the theorems presented in \cref{sec:dynamical-systems-app} can in principle be applied; however, it can be shown that the orbits of \cref{eq:FTRL-ode-system} in $\scores$ are \textit{not} bounded, preventing the application of Poincaré's theorem. Furthermore, the \aclp{DO} do not convey direct information on the day-to-day behavior of the players, due to the lack of invertibility of the choice map.

Conversely, the objects of interest from a dynamical, learning viewpoint \textendash{} that is, the \aclp{TP} in the game's strategy space $\strats$ \textendash{}  present technical difficulties and do not easily fit the dynamical systems framework depicted in \cref{sec:dynamical-systems-app}. In the following we show how these difficulties can be circumvented by analyzing the dynamics induced by \eqref{eq:FTRL-cont} in yet a third space $\scorediffs$, that arises by taking the \textit{differences} between payoffs \textendash{} rather than their absolute values \textendash{} as the objects of study.
\begin{figure}
\centering
\begin{tikzcd}[row sep=large,
column sep=large,
every label/.append style = {font = \normalsize},
nodes={font=\normalsize}]
\vambient = \ambient \arrow[r, "\payfield"] & \scores = \vambientdual  \arrow[d, "\diffquot"] \arrow[r, "\fench"] \arrow[dl, "\mirror"] & \R \\
\strats \arrow[u, hookrightarrow]                           & \scorediffs \arrow[l, "\effmirror"]
\end{tikzcd}
\caption{Commutative diagram of the maps discussed in \cref{sec:continuous-ftrl-app,sec:constant-fenchel,sec:ftrl-z-space}; note in particular that $\payfield \circ \mirror$ is a vector field on $\scores$. The notation $\strats \hookrightarrow \vambient$ is equivalent to $\strats \subseteq\vambient$.}
\label{diag:FTRL-diagram}
\end{figure}

To make this precise, given the game $\fingame = \fingamefull$ fix a benchmark strategy $\purebench_{\play} \in \pures_{\play}$ for every player $\play\in\players$, and consider the hyperplane 
$\scorediffs_{\play}
\defeq \setdef{\scorediff_{\play} \in \R^{\nPures_{\play}}}{ \scorediff_{\benchfindex} =0 }
\subset  \R^{\nPures_{\play}}$. Clearly, $\scorediffs_{\play}  \cong  \R^{\nPures_{\play}-1}$. Each player's strategy space $\scores_{\play} = \R^{\nPures_{\play}}$ can be mapped onto $\scorediffs_{\play}$ by the linear operator
\begin{equation}
\label{eq:Z-quotienti}
\diffquot_{\play} \from \scores_{\play} \to \scorediffs_{\play}
\quad \text{with} \quad
\scorediff_{\findex} = \score_{\findex} - \score_{\benchfindex}
\end{equation}
for all $\pure_{\play} \in \pures_{\play}$.

$\diffquot_{\play}$ is clearly smooth, and a standard check shows that $\diffquot_{\play}$ is surjective and not injective:
$\ker\diffquot_{\play} = \setdef{\score_{\play}}{\score_{\findex} = \score_{\play \purealt_{\play}} \text{ for all } \pure_{\play}, \purealt_{\play} \in \pures_{\play}}$
is the $1$-dimensional linear subspace spanned by the vector $\ones_{\play} = (1, \dots, 1) \in \scores_{\play}$; and $\diffquot^{-1}(\scorediff_{\play}) = \scorediff_{\play} + \ker{\diffquot_{\play}}$ for any $\scorediff_{\play} \in \scorediffs_{\play}$. In particular, for all $\score_{\play}, \scorealt_{\play} \in \scores_{\play}$, we have that $\diffquot_{\play}(\score_{\play}) = \diffquot_{\play}(\scorealt_{\play}) $ if and only if $\score_{\play} - \scorealt_{\play}$ is proportional to $\ones_{\play}$. 

Since every $\scorediff_{\play} \in \scorediffs_{\play}$ is the image of some payoff $\score_{\play}$ via $\PI_{\play}$, the space $\scorediffs \defeq \prod_{\playalt}\scorediffs_{\playalt}$ is called the game's \define{payoff-difference space}; we will denote by $\PI$ the product map $\PI \equiv \prod_{\playalt}\PI_{\playalt}$, \ie (\cf \cref{diag:FTRL-diagram})
\begin{equation}
\label{eq:Z-quotient}
\PI \from \scores \to \scorediffs,
\quad
\PI\of\score \defeq (\PI_{\play}(\score_{\play}))_{\play\in\players} \eqdot
\end{equation}
\begin{lemmaApp}
\label{lem:Q-invarianti-PI-fibers}
The choice map $\mirror\from\scores\to\strats$ is invariant on the level sets of $\PI$.
\end{lemmaApp}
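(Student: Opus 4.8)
The plan is to reduce the statement to a single-player claim and then to a one-line observation about the simplex constraint. Recall from \eqref{eq:Z-quotient} that $\PI = \prod_{\play}\PI_{\play}$ and $\mirror = \prod_{\play}\mirror_{\play}$, so $\PI(\score) = \PI(\scorealt)$ holds precisely when $\PI_{\play}(\score_{\play}) = \PI_{\play}(\scorealt_{\play})$ for every $\play\in\players$, and it suffices to establish $\mirror_{\play}(\score_{\play}) = \mirror_{\play}(\scorealt_{\play})$ under this hypothesis for each $\play$ separately. By the characterization of $\ker\diffquot_{\play}$ noted just above the statement, $\PI_{\play}(\score_{\play}) = \PI_{\play}(\scorealt_{\play})$ if and only if $\score_{\play} - \scorealt_{\play}$ is proportional to $\ones_{\play} = (1,\dotsc,1)$, i.e.\ $\score_{\play} = \scorealt_{\play} + \const_{\play}\ones_{\play}$ for some scalar $\const_{\play}\in\R$. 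Hence the whole lemma comes down to showing that each $\mirror_{\play}$ is invariant under translating its argument by a multiple of $\ones_{\play}$.

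For this, I would fix $\play\in\players$, $\score_{\play}\in\scores_{\play}$, and $\const\in\R$, and note that because $\strats_{\play} = \simplex(\pures_{\play})$ every feasible point pairs to a constant against $\ones_{\play}$, namely $\braket{\ones_{\play}}{\strat_{\play}} = \sum_{\pure_{\play}\in\pures_{\play}}\strat_{\play\pure_{\play}} = 1$. Consequently,
\begin{equation*}
\braket{\score_{\play} + \const\ones_{\play}}{\strat_{\play}} - \hreg_{\play}(\strat_{\play})
	= \braket{\score_{\play}}{\strat_{\play}} - \hreg_{\play}(\strat_{\play}) + \const
	\qquad\text{for all }\strat_{\play}\in\strats_{\play}.
\end{equation*}
The two objective functions appearing in the definition \eqref{eq:mirrori} of $\mirror_{\play}$ at $\score_{\play}$ and at $\score_{\play} + \const\ones_{\play}$ thus differ by the additive constant $\const$, which does not affect the (unique, by strong convexity) maximizer; therefore $\mirror_{\play}(\score_{\play} + \const\ones_{\play}) = \mirror_{\play}(\score_{\play})$.

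Combining the two steps, if $\PI(\score) = \PI(\scorealt)$ then $\score_{\play} = \scorealt_{\play} + \const_{\play}\ones_{\play}$ for suitable scalars $\const_{\play}$, so $\mirror_{\play}(\score_{\play}) = \mirror_{\play}(\scorealt_{\play})$ for every $\play\in\players$ and hence $\mirror(\score) = \mirror(\scorealt)$; that is, $\mirror$ is constant on each fiber of $\PI$. I do not anticipate any real obstacle here — the argument is elementary. The only point that deserves care is that it genuinely uses the \emph{simplex} structure of $\strats_{\play}$ (so that $\ones_{\play}$ contracts to a constant against every feasible strategy), not merely its convexity; once the single-player invariance is in place, the product structure of $\PI$ and $\mirror$ makes the passage to the full statement immediate.
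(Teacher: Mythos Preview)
Your proof is correct and follows essentially the same approach as the paper: reduce to a single player via the product structure, use that $\PI_{\play}$-fibers are translates by multiples of $\ones_{\play}$, and observe that on the simplex $\braket{\ones_{\play}}{\strat_{\play}}$ is constant so the regularized objective shifts by an additive constant, leaving the argmax unchanged. The paper's proof is terser but identical in substance.
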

\begin{proof}
Let $\score, \scorealt \in \scores$. By the discussion above, $\PI\of\score = \PI\of\scorealt$ iff $ \scorealt_{\play} - \score_{\play} = \lambda \ones_{\play}$ for some $\lambda \in \R$. Then for each $\play \in \players$,
\[
\mirror_{\play}\of{\scorealt_{\play}}
= \argmax_{\strat_{\play} \in \strats_{\play}} \setof{ \dualp{\scorealt_{\play}}{\strat_{\play}} - \hreg_{\play}(\strati) }
= \argmax_{\strat_{\play} \in \strats_{\play}} \setof{ \dualp{\score_{\play}}{\strat_{\play}} + \lambda\dualp{\ones_{\play}}{\strat_{\play}} - \hreg_{\play}(\strati) }
= \mirror_{\play}\of{\scori} \eqdot
\qedhere
\]
\end{proof}
\begin{propositionApp}
The dynamical system \eqref{eq:FTRL-ode-system} in the game's payoff space $\scores$ induces a dynamical system
\begin{equation}
\label{eq:Z-dynamical-system}
\dot{\scorediff} = \diffdyn\of\scorediff
\end{equation}
in the game's payoff-difference space $\scorediffs$ via the map \eqref{eq:Z-quotient}.
\end{propositionApp}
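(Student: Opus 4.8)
The plan is to invoke \cref{lem:related-vector-fields}: it suffices to produce a (continuous) vector field $\diffdyn$ on $\scorediffs$ that is $\PI$-related to $\dualdyn = \payfield \circ \mirror$, and the desired conclusion — that the $\PI$-image of every \acl{DO} of \eqref{eq:FTRL-ode-system} is an orbit of $\dot{\scorediff} = \diffdyn\of\scorediff$ — then follows verbatim. Since $\PI$ is linear, its Jacobian at every $\score$ is $\PI$ itself, so the relatedness requirement $(\jac\PI)_{\score} \cdot \dualdyn\of\score = \diffdyn(\PI\of\score)$ collapses to $\PI(\dualdyn\of\score) = \diffdyn(\PI\of\score)$. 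This dictates the definition
\begin{equation}
\diffdyn\of\scorediff \defeq \PI\parens{\dualdyn\of\score} \quad \text{for any } \score \in \PI^{-1}(\scorediff),
\end{equation}
which is meaningful because $\PI$ is surjective (as noted before \cref{lem:Q-invarianti-PI-fibers}).

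The only step requiring care is to check that this prescription is unambiguous and continuous. For well-definedness, I would take $\score,\scorealt\in\scores$ with $\PI\of\score = \PI\of\scorealt$; by \cref{lem:Q-invarianti-PI-fibers} the choice map is invariant on the fibers of $\PI$, whence $\mirror\of\score = \mirror\of\scorealt$ and therefore $\dualdyn\of\score = \payfield(\mirror\of\score) = \payfield(\mirror\of\scorealt) = \dualdyn\of\scorealt$; a fortiori $\PI(\dualdyn\of\score) = \PI(\dualdyn\of\scorealt)$. Thus $\dualdyn$ is in fact constant along the fibers of $\PI$ — more than is needed. For continuity, I would exploit that $\scorediffs_{\play}$ is, by construction, the linear hyperplane $\setdef{\scorediff_{\play} \in \R^{\nPures_{\play}}}{\scorediff_{\benchfindex} = 0}$ of $\scores_{\play} = \R^{\nPures_{\play}}$, and that its inclusion $\iota_{\play}\from\scorediffs_{\play}\hookrightarrow\scores_{\play}$ is a linear section of $\PI_{\play}$, since $\PI_{\play}(\iota_{\play}(\scorediff_{\play}))_{\findex} = \scorediff_{\findex} - \scorediff_{\benchfindex} = \scorediff_{\findex}$. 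Writing $\iota = \prod_{\play}\iota_{\play}$, one then has $\diffdyn = \PI \circ \dualdyn \circ \iota$, a composition of continuous (indeed Lipschitz) maps, because $\PI$ and $\iota$ are linear and $\dualdyn = \payfield \circ \mirror$ is Lipschitz by multilinearity of $\payfield$ and \cref{prop:mirror}. Hence $\diffdyn$ is a (Lipschitz) vector field on $\scorediffs$.

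Finally, with $\diffdyn$ in hand, \cref{lem:related-vector-fields} applies directly: given $\scorediff_{\tstart} \in \scorediffs$, choose any $\score_{\tstart} \in \PI^{-1}(\scorediff_{\tstart})$, let $\scoreof\time$ be the associated \acl{DO} of \eqref{eq:FTRL-ode-system}, and set $\scorediffof\time \defeq \PI(\scoreof\time)$; then $\scorediffof\time$ solves \eqref{eq:Z-dynamical-system} with $\scorediffof\tstart = \scorediff_{\tstart}$. Global existence and uniqueness of the orbit $\scorediffof\time$ transfer from the corresponding properties of $\scoreof\time$, and \cref{lem:Q-invarianti-PI-fibers} ensures that $\scorediffof\time$ is independent of the chosen preimage $\score_{\tstart}$. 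I do not expect any genuine obstacle: all the content resides in \cref{lem:Q-invarianti-PI-fibers} — equivalently, in the per-player shift-invariance $\mirror_{\play}\of{\score_{\play} + \lambda\ones_{\play}} = \mirror_{\play}\of{\score_{\play}}$ of the choice map — which is already established, so the proposition is essentially a repackaging of that fact into the language of induced dynamical systems.
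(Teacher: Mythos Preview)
Your proposal is correct and follows essentially the same approach as the paper: both invoke \cref{lem:related-vector-fields} and use \cref{lem:Q-invarianti-PI-fibers} to show that $\dualdyn$ descends to a well-defined vector field on $\scorediffs$. The only cosmetic difference is that the paper computes the Jacobian of $\PI$ in coordinates and packages the quotient via the effective mirror map $\effmirror$ of \eqref{eq:effective-mirror}, whereas you exploit the linearity of $\PI$ and a linear section $\iota$; your version additionally makes the continuity of $\diffdyn$ explicit, which the paper leaves implicit.
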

\begin{proof}
By the discussion in the previous section (and in particular \cref{lem:related-vector-fields}), if we exhibit a vector field $\diffdyn$ on $\scorediffs$ that is $\PI$-related to $\dualdyn$, then our proof is complete. Thus we look for a vector field $\diffdyn$ such that, for all $\score \in \scores$,
\begin{equation}
(\jac\PI)_{\score} \cdot \dualdyn\of{\score}
= \diffdyn\of\piscore,
\end{equation}
with $\piscore = \PI\of\score$. By \cref{eq:Z-quotienti}, $(\jac\Pii)_{ \puri \purialt } = \delta_{ \puri \purialt } - \delta_{ \purebench_{\play} \purialt }$. Since $\dualdyn = \payfield \circ \mirror$, the sought-after vector field $\diffdyn$ must fulfill, for all $\score \in \scores$ and all $\puri \in \puresi$,
\begin{equation}
\label{eq:sought-after-related-field}
 \parens{  \payfield_{\findex}  - \payfield_{\bindex} } \circ \mirrori\of\scori
 = \Zdyn_{\findex}(\diffi) \, ,
\end{equation}
with $\piscore = \PI\of\score$. For each $\play \in \players$ define now (\cf \cref{diag:FTRL-diagram})
\begin{equation}
\label{eq:effective-mirror}
\effmirrori\from\diffsi\to\stratsi, \quad \effmirrori\of\diffi = \mirror\of\scori
\end{equation}
for \textit{any} $\scori \in \Pii^{-1}(\diffi) $, and denote by $\effmirror\from\diffs\to\strats$ the induced product map. Such map exists since $\Pii$ is surjective, and is well-defined by \cref{lem:Q-invarianti-PI-fibers}. It follows that the vector field on $\diffs$ defined by
\begin{equation}
\label{eq:Zdyn}
\Zdyn_{\findex}(\diffi)
\defeq \parens{  \payfield_{\findex}  - \payfield_{\bindex} } \circ \effmirrori\of\diffi
\end{equation}
for all $ \play \in \players, \diffi \in \diffsi,  \puri \in \puresi$ fulfills \cref{eq:sought-after-related-field}, and is hence $\PI$-related to $\dualdyn$.
\end{proof}
This result shows that, for every dual orbit $\scoreof\time$ of \cref{eq:FTRL-ode-system} with initial condition $\score_{\tstart} \in \scores$, the curve $\scorediffof\time = \PI(\scoreof\time)$ is  an orbit of the dynamical system \eqref{eq:Z-dynamical-system} in $\diffs$ with initial condition $\PI(\score_{\tstart})$. To conclude this section we give a result implying that if the weighted Fenchel coupling \eqref{eq:constant-fenchel} is a constant of motion constant then the solution orbits of \eqref{eq:Z-dynamical-system} in $\diffs$ are bounded.

\begin{lemmaApp}
\label{lem:bounded-score-differences}
For any $\play \in \players$, let $\scorin$ be a sequence in $\scores_{\play}$, and let $\base_{\play}$ be a point in the relative interior of $\stratsi$. If $\sup_{\run}\abs{\hconj_{\play}(\scorin)-\dualp{\scorin}{\base_{\play}}} < \infty$, then also the score differences remain bounded, \ie $\abs{ \score_{\findex, \run} - \score_{\play \purealt_{\play}, \run} } < \infty$ for all $\puri, \purialt \in \puresi$ and all $\run$.
\end{lemmaApp}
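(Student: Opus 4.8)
The plan is to show that the functional $\psi_{\play}(\score_{\play}) \defeq \hconj_{\play}(\score_{\play}) - \dualp{\score_{\play}}{\base_{\play}}$ on $\scores_{\play}$ is invariant under translations of $\score_{\play}$ along the all-ones direction $\ones_{\play} = (1,\dotsc,1)$ and, because $\base_{\play}$ lies in the relative interior of $\stratsi$, grows at least linearly in the directions transverse to $\ones_{\play}$; boundedness of $\psi_{\play}$ along $(\scorin)_{\run}$ will then force the transverse component of $\scorin$ to stay bounded, which is exactly the claimed bound on score differences. Observe first that, by the definition \eqref{eq:Fench} of the Fenchel coupling, $\psi_{\play}(\score_{\play}) = \fench_{\play}(\base_{\play},\score_{\play}) - \hreg_{\play}(\base_{\play})$, so the hypothesis says precisely that $\fench_{\play}(\base_{\play},\scorin)$ stays bounded. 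For the shift invariance, note that $\dualp{\ones_{\play}}{\strat_{\play}} = 1$ for every $\strat_{\play}\in\stratsi$ (in particular $\dualp{\ones_{\play}}{\base_{\play}} = 1$), so the definition \eqref{eq:mirrori} of $\hconj_{\play}$ gives $\hconj_{\play}(\score_{\play} + t\ones_{\play}) = \hconj_{\play}(\score_{\play}) + t$, hence $\psi_{\play}(\score_{\play} + t\ones_{\play}) = \psi_{\play}(\score_{\play})$ for all $t\in\R$. Writing $\score_{\play} = \score_{\play}^{0} + t_{\play}\ones_{\play}$ with $\score_{\play}^{0}$ orthogonal to $\ones_{\play}$ (i.e. $\sum_{\puri}\score_{\findex}^{0} = 0$), we get $\psi_{\play}(\score_{\play}) = \psi_{\play}(\score_{\play}^{0})$; here $\score_{\play}^{0}$ is the lift to $\ones_{\play}^{\perp}$ of $\PI_{\play}(\score_{\play})\in\scorediffs_{\play}$, and $\abs{\score_{\findex} - \score_{\play\purealt_{\play}}} = \abs{\score_{\findex}^{0} - \score_{\play\purealt_{\play}}^{0}} \leq 2\twonorm{\score_{\play}^{0}}$ for all $\puri,\purialt\in\puresi$.

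The heart of the argument is the coercivity of $\psi_{\play}$ in $\ones_{\play}^{\perp}$. Since $\base_{\play}\in\relint\stratsi$ and the affine hull of $\stratsi$ has direction space $\ones_{\play}^{\perp}$, there exists $\eps > 0$ such that $\base_{\play} + \eps\unitvec\in\stratsi$ for every $\unitvec\in\ones_{\play}^{\perp}$ with $\twonorm{\unitvec} = 1$. Given $\score_{\play}$ with $\score_{\play}^{0}\neq 0$, I would use $\strat_{\play} = \base_{\play} + \eps\,\score_{\play}^{0}/\twonorm{\score_{\play}^{0}}\in\stratsi$ as a competitor in the maximum \eqref{eq:mirrori} defining $\hconj_{\play}$; since $\score_{\play} - \score_{\play}^{0}\in\R\ones_{\play}$ pairs to $0$ with $\strat_{\play} - \base_{\play}$, this gives $\dualp{\score_{\play}}{\strat_{\play} - \base_{\play}} = \dualp{\score_{\play}^{0}}{\strat_{\play} - \base_{\play}} = \eps\twonorm{\score_{\play}^{0}}$, so
\begin{align*}
\psi_{\play}(\score_{\play})
	&= \psi_{\play}(\score_{\play}^{0})
	= \max_{\stratalt_{\play}\in\stratsi}\setof{\dualp{\score_{\play}^{0}}{\stratalt_{\play} - \base_{\play}} - \hreg_{\play}(\stratalt_{\play})}
	\\
	&\geq \eps\,\twonorm{\score_{\play}^{0}} - \max\hreg_{\play},
\end{align*}
where $\max\hreg_{\play} \defeq \max_{\stratsi}\hreg_{\play} < \infty$ since $\hreg_{\play}$ is continuous on the compact set $\stratsi$ (the bound is trivial when $\score_{\play}^{0} = 0$).

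To conclude, apply the last display with $\score_{\play} = \scorin$: the hypothesis $\sup_{\run}\abs{\psi_{\play}(\scorin)}\leq M < \infty$ yields $\twonorm{\scorin^{0}}\leq(M + \max\hreg_{\play})/\eps$ for all $\run$, and combining with the estimate from the first paragraph gives $\abs{\score_{\findex,\run} - \score_{\play\purealt_{\play},\run}}\leq 2(M + \max\hreg_{\play})/\eps$ uniformly in $\run$, $\puri$ and $\purialt$, which is the assertion. I expect the one genuinely non-routine step to be extracting the linear lower bound on $\psi_{\play}$ from the relative interiority of $\base_{\play}$ (choosing a feasible strategy that ``points along'' $\score_{\play}^{0}$); the remaining ingredients are just the translation covariance of $\hconj_{\play}$ along $\ones_{\play}$ and the equivalence of norms on the finite-dimensional space $\ones_{\play}^{\perp}$.
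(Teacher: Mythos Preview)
Your argument is correct. The paper itself does not prove this lemma but defers to \cite[Appendix~D]{MPP18}, so there is no in-paper proof to compare against; that said, the approach you take---exploiting the translation invariance of $\hconj_{\play}$ along $\ones_{\play}$ together with the coercivity of $\score_{\play}\mapsto\hconj_{\play}(\score_{\play})-\dualp{\score_{\play}}{\base_{\play}}$ in $\ones_{\play}^{\perp}$ coming from $\base_{\play}\in\relint\strats_{\play}$---is the standard one and is essentially what the cited reference does. One cosmetic slip: you point to \eqref{eq:mirrori} for the definition of $\hconj_{\play}$, but that labels the choice map $\mirror_{\play}$; the convex conjugate is \eqref{eq:conj}.
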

\begin{proof}
See \citep[Appendix D]{MPP18}.
\end{proof}
\begin{lemmaApp}
\label{lem:bounded-score-differences-1}
If the weighted Fenchel coupling \eqref{eq:constant-fenchel} is a constant of motion under \eqref{eq:FTRL-cont} for some fully mixed $\base \in \strats$ then the orbits of $\dot{\scorediff} = \diffdyn\of\scorediff$ as in \cref{eq:Z-dynamical-system} are bounded in $\diffs$.
\end{lemmaApp}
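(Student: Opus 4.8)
The plan is to combine the conservation law in the hypothesis with the nonnegativity of the Fenchel coupling. Write $\heq = \base$ for the given fully mixed profile and let $\wei \in \R^{\nPlayers}_{++}$ be the associated weights, so the conserved quantity is $\henergy(\score) = \insum_{\play} \wei_{\play} \fench_{\play}(\heq_{\play},\score_{\play})$. By \cref{prop:Fench}, every summand $\wei_{\play}\fench_{\play}(\heq_{\play},\score_{\play})$ is nonnegative. Hence, along any dual orbit $\scoreof\time$ of \eqref{eq:FTRL-cont}, the identity $\henergy(\scoreof\time) \equiv \henergy(\scoreof\tstart)$ pins each individual term between $0$ and $\henergy(\scoreof\tstart)$, uniformly in $\time$. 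Expanding the Fenchel coupling as in \eqref{eq:constant-fenchel} and discarding the time-independent constant $\hreg_{\play}(\heq_{\play})$, this says precisely that
\[
\sup_{\time} \abs{\hconj_{\play}(\scoreofi\time) - \dualp{\heq_{\play}}{\scoreofi\time}} < \infty
\quad
\text{for every } \play \in \players.
\]

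Next I would feed this bound into \cref{lem:bounded-score-differences}. Since $\heq$ is fully mixed, each $\heq_{\play}$ lies in the relative interior of $\stratsi$, so the hypothesis of \cref{lem:bounded-score-differences} is met with $\base_{\play} = \heq_{\play}$; its argument applies verbatim to the continuous orbit $\scoreofi\time$ (the estimate above being uniform in $\time$) and yields that all payoff differences $\score_{\findex}(\time) - \score_{\play\purealt_{\play}}(\time)$ remain bounded in $\time$, for every $\pure_{\play},\purealt_{\play} \in \puresi$ and every $\play \in \players$.

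Finally I would transfer this to the payoff-difference space. By the defining formula \eqref{eq:Z-quotienti} for $\Pii$, the coordinates of $\scorediffof\time = \PI(\scoreof\time)$ are exactly the differences $\score_{\findex}(\time) - \score_{\benchfindex}(\time)$, so the curve $\scorediffof\time$ stays in a bounded subset of $\scorediffs$. Since $\PI$ is surjective, the proposition establishing that \eqref{eq:FTRL-ode-system} induces \eqref{eq:Z-dynamical-system} via \eqref{eq:Z-quotient}, together with uniqueness of solutions, guarantees that \emph{every} orbit of $\dot{\scorediff} = \diffdyn\of\scorediff$ is of the form $\PI(\scoreof\time)$ for some dual orbit $\scoreof\time$ of \eqref{eq:FTRL-cont}; hence all such orbits are bounded in $\scorediffs$. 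I do not expect a serious obstacle here: the proof is essentially bookkeeping, and the only points that merit a word of care are that ``constant of motion'' should be read as constancy along \emph{every} orbit (which legitimizes the uniform-in-$\time$ bound) and that \cref{lem:bounded-score-differences}, though stated for sequences, applies unchanged to the continuous orbit at hand.
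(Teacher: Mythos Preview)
Your proof is correct and follows essentially the same route as the paper's: bound each player's Fenchel coupling along the dual orbit (you make the nonnegativity step explicit, the paper leaves it implicit), invoke \cref{lem:bounded-score-differences} with the fully mixed point, and conclude boundedness of the score differences. Your final remark that every $\scorediffs$-orbit lifts to a dual orbit via surjectivity of $\PI$ and uniqueness of solutions is a detail the paper glosses over, but it is the right justification.
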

\begin{proof}
Assume that 
$\henergy\of\score
=\insum_{\play} \wei_{\play} \fench_{\play}(\base_{\play}, \score_{\play})
=  \insum_{\play} \wei_{\play} \left( \hreg_{\play}(\base_{\play}) + \hconj_{\play}\of\scori - \dualp{\base_{\play}}{\scori} \right)$
is a constant of motion for \eqref{eq:FTRL-cont} for some fully mixed $\base \in \strats$ and some $\mass \in \R^{\nPlayers}_{++} $.
Let $\score\of\time$ be an orbit of \eqref{eq:FTRL-cont} in $\scores$, and let $\score_{\play, \run} \defeq \score_{\play}\of{\time_{\run}}$ for any sequence of time $\time_{\run}$. Let furthermore $\fench_{\play, \run} \defeq  \hconj_{\play}(\score_{\play, \run}) - \dualp{\base_{\play}}{\score_{\play, \run}}  $. Then $\sup_{\run}\abs{\fench_{\play,\run}} < \infty$. By \cref{lem:bounded-score-differences}, this implies that $\abs{\scorediff_{\findex, \run}} < \infty$ for all $\pure_{\play} \in \pures_{\play}$, all $\play \in \players$, and all $\run$.
\end{proof}
\subsection{Recurrence of \ac{FTRL} in harmonic games}
\label{sec:ftrl-recurrence-app}
We now have all the ingredients to prove that almost every \acl{TP} $\stratof\time$ of \eqref{eq:FTRL-cont} in harmonic games returns arbitrarily close to its starting point infinitely often.

\ThRecurrence*

\begin{proof}[Proof of \cref{thm:recurrence}]
The proof relies on the following steps:
\begin{enumerate}

\item  the vector field $\diffdyn$ defined in \cref{eq:Zdyn} has vanishing divergence, so its induced flow is volume-preserving in $\diffs$ by Liouville's theorem;

\item the orbits of the dynamical system $\dot{\scorediff} = \diffdyn\of\scorediff$ of \cref{eq:Z-dynamical-system} are bounded in $\diffs$ since the weighted Fenchel coupling \eqref{eq:constant-fenchel} is a constant of motion for \ac{FTRL} in harmonic games;

\item the dynamical system $\dot{\scorediff} = \diffdyn\of\scorediff$ is recurrent in $\diffs$ by Poincaré theorem;

\item by continuity of \cref{eq:effective-mirror}, almost every \acl{TP} $\stratof\time$ of \eqref{eq:FTRL-cont} with initial condition in the image of $\effmirror$ returns arbitrarily close to its starting point infinitely often.

\end{enumerate}
Indeed, 
$
\diver \diffdyn \of \scorediff =
\insum_{\play} \insum_{\puri} \frac{\pd}{\pd\scorediff_{\findex}} ( \parens{  \payfield_{\findex}  - \payfield_{\bindex} } \circ \effmirrori\of\diffi)
$. For the first term, by chain rule,
\[
\begin{split}
\diver \diffdyn \of \scorediff & =
\insum_{\play} \insum_{\puri} \frac{\pd \payfield_{\findex} }{\pd\scorediff_{\findex}}    \parens{ \effmirrori\of\diffi}
= \insum_{\play} \insum_{\puri} \insum_{\playalt} \insum_{\purealt_{\playalt}} \frac{\pd \payfield_{\findex}}{\pd \strat_{\playalt \purealt_{\playalt}}} (\effmirror\of\scorediff)\frac{\pd \effmirror_{\playalt \purealt_{\playalt}}}{\pd \scorediff_{\findex}}\of\scorediff \\
& = \insum_{\play} \insum_{\puri}  \insum_{\purealt_{\play}} \frac{\pd \payfield_{\findex}}{\pd \strat_{\play \purealt_{\play}}} (\effmirror\of\scorediff)\frac{\pd \effmirror_{\play \purealt_{\play}}}{\pd \scorediff_{\findex}}\of\scorediff \equiv 0
\end{split}
\]
since $\frac{\pd \payfield_{\findex}}{\pd \strat_{\play \purealt_{\play}}} \equiv 0$ by multilinearity of the payoff functions. The second term yields identical result with $\purebench_{\play} \leftarrow \puri$, so we conclude that $\diver{\diffdyn} = 0$. By \cref{lem:bounded-score-differences-1}, the invariance of the weighted Fenchel coupling under \eqref{eq:FTRL-cont} implies that the payoff differences $\scorediff_{\findex}\of\time = \score_{\findex}\of\time - \scorediff_{\bindex}\of\time $ remain bounded for all $\time \in \postime$. So, by Poincaré theorem, the dynamical system $\dot{\scorediff} = \diffdyn\of\scorediff$ is Poincaré recurrent, \ie there exists a sequence of time $\time_{\run} \uparrow \infty$ such that $\lim_{\run\to\infty}\scorediffof{\time_{\run}} = \scorediff_{\tstart}$ for almost every $\scorediff_{\tstart} \in \scorediffs$. By continuity of \eqref{eq:effective-mirror}, almost every trajectory of play $\stratof\time = \mirror(\scoreof\time) = \effmirror(\scorediff\of\time)$ with $\strat_{\tstart} \in \im{\effmirror}$ fulfills $ \lim_{\run\to\infty} \stratof{\time_{\run}} = \strat_{\tstart}$,  which concludes our proof by noting that the image of $\effmirror$ is the same as the image of $\mirror$.
\end{proof}


\section{Discrete-time analysis}
\label{app:discrete}

In this appendix, our aim is to provide the full proofs for the discrete-time guarantees of \eqref{eq:FTRL+}, as presented in \cref{sec:algorithms}.
Our analysis hinges on a series of energy functions and associated template inequalities, which we introduce in the next section.

\subsection{Lyapunov functions and template inequalities for \eqref{eq:FTRL+}}

The main building block of our analysis is a suitable Lyapunov function for the discrete-time algorithmic template \eqref{eq:FTRL+}.
Motivated by the continuous-time analysis of \cref{app:continuous}, we begin by considering the player-specific Fenchel couplings
\begin{equation}
\label{eq:Fenchi}
\fench_{\play}(\base_{\play},\score_{\play})
	= \hreg_{\play}(\base_{\play}) + \hconj_{\play}(\score_{\play}) - \braket{\score_{\play}}{\base_{\play}}
	\quad
	\text{for $\base_{\play}\in\strats_{\play}$, $\score_{\play} \in \scores_{\play}$}
\end{equation}
induced by the regularizer $\hreg_{\play}$ of player $\play\in\players$.

Suppose now that the game is harmonic relative to some measure $\meas = (\meas_{1},\dotsc,\meas_{\nPlayers})$, let $\mass_{\play} = \sum_{\pure_{\play}\in\pures_{\play}} \meas_{\play\pure_{\play}}$ denote the mass of $\meas_{\play}$, and assume further that each player is running \eqref{eq:FTRL+} with learning rate $\learn_{\play} > 0$.
Our analysis will hinge on the energy function
\begin{align}
\tag{\ref{eq:energy}}
\energy(\base,\score)
	&= \sum_{\play\in\players} \frac{\mass_{\play}}{\learn_{\play}} \fench_{\play}(\base_{\play},\score_{\play})
	\qquad
	\base\in\strats,
	\score\in\scores,
\end{align}
which, as we show below, satisfies the following template inequality:

\begin{proposition}
\label{prop:template}
Suppose that each player is running \eqref{eq:FTRL+} with learning rate $\learn_{\play}>0$ in a harmonic game as above.
Then, for all $\base_{\play}\in\strats_{\play}$, $\play\in\players$, and all $\run=\running$, the algorithm's energy $\curr[\energy] \defeq \energy(\base,\curr[\statealt])$ enjoys the iterative bound:
\begin{align}
\label{eq:template}
\next[\energy]
	\leq \curr[\energy]
		&+ \sum_{\play\in\players} \mass_{\play}
			\braket{\payv_{\play}(\lead)}{\leadi - \base_{\play}}
	\notag\\
		&+ \sum_{\play\in\players} \mass_{\play}
			\braket{\payv_{\play}(\lead) - \payfield_{\play}(\curr)}{\nexti - \leadi}
	\notag\\
		&+ \sum_{\play\in\players} \mass_{\play} (1 - \coef_{\play})
			\braket{\payfield_{\play}(\curr) - \payfield_{\play}(\beforelead)}{\nexti - \leadi}
	\notag\\
		&- \sum_{\play\in\players} \frac{\mass_{\play}}{\learn_{\play}} \fench_{\play}(\nexti,\leadi[\statealt])
	\notag\\
		&- \sum_{\play\in\players} \frac{\mass_{\play}}{\learn_{\play}} \fench_{\play}(\leadi,\curri[\statealt])
		\eqdot
\end{align}
\end{proposition}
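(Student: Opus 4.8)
The plan is to prove \eqref{eq:template} player by player, in fact as an \emph{exact identity}, by recognizing the two phases of \eqref{eq:FTRL+} as a single instance of the four-point lemma (\cref{lem:twostep}). Fix $\play\in\players$ and $\run=\running$. The extrapolation phase produces $\leadi[\statealt]=\curri[\statealt]+\learn_{\play}\curri[\signal]$ and $\leadi=\mirror_{\play}(\leadi[\statealt])$, while the update phase produces $\nexti[\statealt]=\curri[\statealt]+\learn_{\play}\leadi[\signal]=\curri[\statealt]+\learn_{\play}\payfield_{\play}(\lead)$ and $\nexti=\mirror_{\play}(\nexti[\statealt])$, using \eqref{eq:signal-lead} for the last equality. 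The point to notice is that \emph{both} increments are issued from the common dual base $\curri[\statealt]$, so this is exactly the setup of \cref{lem:twostep} under the substitutions $\dpoint\gets\curri[\statealt]$ (so $\point=\mirror_{\play}(\dpoint)=\curri$), $\dvec_{1}\gets\learn_{\play}\curri[\signal]$ (so $\new[\dpoint_{1}]=\leadi[\statealt]$ and $\new[\point_{1}]=\leadi$), $\dvec_{2}\gets\learn_{\play}\payfield_{\play}(\lead)$ (so $\new[\dpoint_{2}]=\nexti[\statealt]$ and $\new[\point_{2}]=\nexti$), and benchmark $\base\gets\base_{\play}$.

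With this dictionary, the exact identity \eqref{eq:twostep2} for the regularizer $\hreg_{\play}$ of player $\play$ reads
\[
\fench_{\play}(\base_{\play},\nexti[\statealt])
= \fench_{\play}(\base_{\play},\curri[\statealt])
+ \braket{\dvec_{2}}{\leadi-\base_{\play}}
+ \braket{\dvec_{2}-\dvec_{1}}{\nexti-\leadi}
- \fench_{\play}(\nexti,\leadi[\statealt])
- \fench_{\play}(\leadi,\curri[\statealt]).
\]
It then remains to unpack the two inner-product terms. Since $\dvec_{2}=\learn_{\play}\payfield_{\play}(\lead)$, the first is $\learn_{\play}\braket{\payfield_{\play}(\lead)}{\leadi-\base_{\play}}$. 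For the second, substitute the convex extrapolation model \eqref{eq:signal-curr}, $\curri[\signal]=\coef_{\play}\payfield_{\play}(\curr)+(1-\coef_{\play})\payfield_{\play}(\beforelead)$, and regroup
\[
\dvec_{2}-\dvec_{1}
= \learn_{\play}(\payfield_{\play}(\lead)-\curri[\signal])
= \learn_{\play}(\payfield_{\play}(\lead)-\payfield_{\play}(\curr))
+ \learn_{\play}(1-\coef_{\play})(\payfield_{\play}(\curr)-\payfield_{\play}(\beforelead)),
\]
so that $\braket{\dvec_{2}-\dvec_{1}}{\nexti-\leadi}$ splits into the two ``mismatch'' inner products on the second and third lines of \eqref{eq:template}, each still carrying a factor $\learn_{\play}$.

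To finish, I would multiply the per-player identity by $\mass_{\play}/\learn_{\play}>0$ and sum over $\play\in\players$. By the definition \eqref{eq:energy} of $\energy$ and of $\curr[\energy]=\energy(\base,\curr[\statealt])$, the left-hand side becomes $\next[\energy]$ and the first right-hand term becomes $\curr[\energy]$; the factor $\learn_{\play}$ cancels in the two mismatch terms, leaving exactly the coefficients $\mass_{\play}$ and $\mass_{\play}(1-\coef_{\play})$; and the two remaining Fenchel-coupling sums inherit the weights $\mass_{\play}/\learn_{\play}$ with a minus sign. This is \eqref{eq:template} with equality, hence a fortiori the stated inequality.

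I do not expect a genuine obstacle here: the argument is essentially the bookkeeping of picking the right instantiation of \cref{lem:twostep} and the right regrouping of $\payfield_{\play}(\lead)-\curri[\signal]$ through \eqref{eq:signal-curr}. The one subtlety — and the reason \cref{lem:twostep} rather than two chained applications of the one-step \cref{lem:onestep} is the correct tool — is that the update producing $\nexti[\statealt]$ departs from the base $\curri[\statealt]$, not from the leading point $\leadi[\statealt]$; this is precisely what generates the extra term $-\sum_{\play}(\mass_{\play}/\learn_{\play})\fench_{\play}(\leadi,\curri[\statealt])$ which, together with $-\sum_{\play}(\mass_{\play}/\learn_{\play})\fench_{\play}(\nexti,\leadi[\statealt])$ and the strong-convexity bound of \cref{prop:Fench}, will later be exploited to get summability of $\norm{\nexti-\leadi}^{2}$ and $\norm{\leadi-\curri}^{2}$. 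Finally, I would note that neither harmonicity nor the specific choice $\mass_{\play}=\lnorm{\meas_{\play}}$ is used in this proposition: any positive weights $\mass_{\play}$ would do, and the harmonic structure only enters once \eqref{eq:template} is instantiated at the strategic center $\base=\heq$.
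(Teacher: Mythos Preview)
Your proposal is correct and follows essentially the same approach as the paper: instantiate the four-point identity \eqref{eq:twostep2} of \cref{lem:twostep} with the same dictionary of substitutions, split $\dvec_{2}-\dvec_{1}$ via \eqref{eq:signal-curr}, then multiply by $\mass_{\play}/\learn_{\play}$ and sum. Your remark that \eqref{eq:template} in fact holds with equality and that harmonicity is not used until $\base$ is specialized to $\heq$ is accurate and slightly sharper than the paper's presentation, which simply records the bound as an inequality.
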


\begin{proof}
We begin by applying the bound \eqref{eq:twostep2} of \cref{lem:twostep} with the array of substitutions
\begin{enumerate}
\setlength{\itemsep}{0pt}
\item
$\base \gets \base_{\play}$
\item
$\dvec_{1} \gets \learn_{\play} \curri[\signal] = \learn_{\play} \coef_{\play} \, \payfield_{\play}(\curr) + \learn_{\play} (1-\coef_{\play}) \, \payfield_{\play}(\beforelead)$
\item
$\dvec_{2} \gets \learn_{\play} \leadi[\signal] = \learn_{\play} \payfield_{\play}(\lead)$
\item
$\dpoint \gets \curri[\statealt]$
	\tabto{6em}
	so
	\tabto{8em}
	$\point \gets \mirror_{\play}(\curri[\statealt]) = \curri$
\item
$\new[\dpoint_{1}] \gets \leadi[\statealt]$
	\tabto{6em}
	so
	\tabto{8em}
	$\new[\point_{1}] \gets \leadi$
\item
$\new[\dpoint_{2}] \gets \nexti[\statealt]$
		\tabto{6em}
	so
	\tabto{8em}
	$\new[\point_{2}] \gets \nexti$
\end{enumerate}
We then get
\begin{align}
\braket{\dvec_{2} - \dvec_{1}}{\new[\point_{2}] - \new[\point_{1}]}
	&= \learn_{\play} \braket
		{\payfield_{\play}(\lead) - \coef_{\play} \, \payfield_{\play}(\curr) - (1-\coef_{\play}) \, \payfield_{\play}(\beforelead)}{\nexti - \leadi}
	\notag\\
	&= \learn_{\play} \braket{\payfield_{\play}(\lead) - \payfield_{\play}(\curr)}{\nexti - \leadi}
	\notag\\
	&\quad
		+ \learn_{\play} (1-\coef_{\play})
			\braket{\payfield_{\play}(\curr) - \payfield_{\play}(\beforelead)}{\nexti - \leadi}
\end{align}
and hence, by \eqref{eq:twostep2}:
\begin{align}
\label{eq:Fenchi-bound}
\fench_{\play}(\base_{\play},\nexti[\statealt])
	\leq \fench_{\play}(\base_{\play},\curri[\statealt])
		&+ \learn_{\play} \braket{\payfield_{\play}(\lead)}{\leadi - \base_{\play}}
		\notag\\
		&+ \learn_{\play} \braket{\payfield_{\play}(\lead) - \payfield_{\play}(\curr)}{\nexti - \leadi}
		\notag\\
		&+ \learn_{\play} (1-\coef_{\play})
			\braket{\payfield_{\play}(\curr) - \payfield_{\play}(\beforelead)}{\nexti - \leadi}
		\notag\\
		&- \fench_{\play}(\nexti,\leadi[\statealt])
		\notag\\
		&- \fench_{\play}(\leadi,\curri[\statealt])
		\eqdot
\end{align}
Accordingly, with $\curr[\energy] = \energy(\base,\curr[\statealt])$, the bound \eqref{eq:template} follows by multiplying both sides by $\mass_{\play} / \learn_{\play}$ and summing over $\play\in\players$.
\end{proof}

Thanks to \cref{prop:template}, we are now in a position to state and prove the following summability guarantee for \eqref{eq:FTRL+}.

\begin{proposition}
\label{prop:summable}
Suppose that each player in a harmonic game $\fingame$ with harmonic measure $\meas$ is following \eqref{eq:FTRL+} with learning rate $\learn_{\play} \leq \mass_{\play} \hstr_{\play} \bracks{2(\nPlayers+2) \max_{\playalt} \mass_{\playalt} \lips_{\playalt}}^{-1}$.
Then, for all $\nRuns$, we have:
\begin{equation}
\label{eq:summable}
\sum_{\run=\start}^{\nRuns} \norm{\lead - \curr}^{2}
	+ \sum_{\run=\afterstart}^{\nRuns} \norm{\curr - \beforelead}^{2}
	\leq \frac{2\init[\energy]}{(\nPlayers+2) \max_{\play} \mass_{\play} \lips_{\play}}
	\eqdot
\end{equation}
In particular, the sequences $\curr[A] \defeq \norm{\lead-\curr}^{2}$ and $\curr[B] \defeq \norm{\next-\lead}^{2}$ are both summable.
\end{proposition}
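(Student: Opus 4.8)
The plan is to run the template inequality \eqref{eq:template} of \cref{prop:template} from the vantage point of the game's \emph{strategic center}. Since $\fingame$ is harmonic, \cref{prop:harmonic-center} supplies a fully mixed profile $\heq\in\strats$ together with weights that are precisely $\wei_\play=\mass_\play$, such that $\sum_{\play\in\players}\mass_\play\dualp{\payfield_\play(\strat)}{\strat_\play-\heq_\play}=0$ for \emph{every} $\strat\in\strats$; this is the mixed characterization \eqref{eq:harmonic-center} of harmonic games. Taking $\base=\heq$ in \eqref{eq:template} and reading this identity at $\strat=\lead$ makes the first sum on the right-hand side of \eqref{eq:template} vanish identically. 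This is the only place where harmonicity enters, and it is the conceptual heart of the argument; everything that follows is a careful accounting of the residual terms.

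With the leading sum gone, the next step is to bound the two remaining bilinear sums termwise. Cauchy--Schwarz for the duality pairing, combined with the Lipschitz estimate \eqref{eq:Lips}, gives $\braket{\payfield_\play(\lead)-\payfield_\play(\curr)}{\nexti-\leadi}\le\lips_\play\,\norm{\lead-\curr}\,\norm{\nexti-\leadi}$ and, likewise, $\braket{\payfield_\play(\curr)-\payfield_\play(\beforelead)}{\nexti-\leadi}\le\lips_\play\,\norm{\curr-\beforelead}\,\norm{\nexti-\leadi}$. For the two negative Fenchel terms I would invoke the quadratic lower bound \eqref{eq:Fench-norm}, which (since $\mirror_\play(\leadi[\statealt])=\leadi$ and $\mirror_\play(\curri[\statealt])=\curri$) yields $\tfrac{\mass_\play}{\learn_\play}\fench_\play(\nexti,\leadi[\statealt])\ge\tfrac{\mass_\play\hstr_\play}{2\learn_\play}\norm{\nexti-\leadi}^2$ and $\tfrac{\mass_\play}{\learn_\play}\fench_\play(\leadi,\curri[\statealt])\ge\tfrac{\mass_\play\hstr_\play}{2\learn_\play}\norm{\leadi-\curri}^2$; the stepsize bound $\learn_\play\le\mass_\play\hstr_\play\bracks{2(\nPlayers+2)M}^{-1}$, with $M\defeq\max_\playalt\mass_\playalt\lips_\playalt$, then makes each such coefficient at least $(\nPlayers+2)M$. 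Applying the weighted Young inequality $ab\le\tfrac{\rho}{2}a^2+\tfrac{1}{2\rho}b^2$ to the bilinear terms with the tuned parameters $\rho=(\nPlayers+2)/\nPlayers$ for the first sum and $\rho=1$ for the second, and using $\mass_\play\lips_\play\le M$, $\sum_\play\mass_\play\lips_\play\le\nPlayers M$, $1-\coef_\play\le 1$, and the product structure $\norm{\strat-\stratalt}^2=\sum_\play\norm{\strat_\play-\stratalt_\play}^2$ of the ambient norm, a routine computation collapses \eqref{eq:template} (with $\base=\heq$) into a one-step inequality
\[
\next[\energy]\;\le\;\curr[\energy]-a\,\curr[A]-b\,\curr[B]+b'\,\norm{\curr-\beforelead}^{2},
\]
in which, under this tuning, $a\ge\tfrac{(\nPlayers+2)M}{2}$ and $b\ge b'+\tfrac{(\nPlayers+2)M}{2}$, the coefficient $b'=\tfrac{\nPlayers M}{2}$ being the one generated by the ``past'' extrapolation gap.

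The final ingredient is the telescope. The crucial observation is that the past extrapolation gap at stage $\run$ coincides with the update gap one stage earlier: since $\beforelead=\strat_{\run-1/2}$, $\curr=\strat_\run$, $\lead=\strat_{\run+1/2}$, and $\next=\strat_{\run+1}$, we have $\norm{\curr-\beforelead}^2=\norm{\strat_\run-\strat_{\run-1/2}}^2=B_{\run-1}$. Summing the displayed recursion over $\run=1,\dots,\nRuns$, using $\next[\energy]\ge 0$ from \cref{prop:Fench}, and telescoping the $B_{\run-1}$ contributions against the $\curr[B]$ ones, which is legitimate because $B_0=\norm{\strat_1-\strat_{1/2}}^2=0$ (the initialization $\init[\statealt]=\statealt_{1/2}=0$ forces $\strat_1=\strat_{1/2}=\mirror(0)$), yields $\tfrac{(\nPlayers+2)M}{2}\sum_{\run=1}^{\nRuns}\parens*{\curr[A]+\curr[B]}\le\init[\energy]=\energy(\heq,0)$. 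Since $\sum_{\run=2}^{\nRuns}\norm{\curr-\beforelead}^2=\sum_{\run=1}^{\nRuns-1}\curr[B]\le\sum_{\run=1}^{\nRuns}\curr[B]$, this is exactly \eqref{eq:summable}; and as the summands are nonnegative with uniformly bounded partial sums, both $\curr[A]$ and $\curr[B]$ are summable.

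I expect the main obstacle to be this last step: identifying the index shift $\norm{\curr-\beforelead}^2=B_{\run-1}$ that lets the telescope close, and checking that the two Young parameters can be tuned so that the residual coefficients $a$ and $b-b'$ both survive at the threshold $\tfrac{(\nPlayers+2)M}{2}$ against the exact stepsize bound (this is where the factor $\nPlayers+2$ in the hypothesis is consumed: roughly $\nPlayers$ for the sum over players, plus an extra $2$ of margin for the telescope). By contrast, each individual estimate, Cauchy--Schwarz, the Lipschitz bound \eqref{eq:Lips}, Young's inequality, and the Fenchel lower bound \eqref{eq:Fench-norm}, is an off-the-shelf application of the lemmas already assembled in \cref{app:mirror}, so the real work is in the bookkeeping rather than in any new inequality.
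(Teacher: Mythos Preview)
Your proposal is correct and follows essentially the same route as the paper: instantiate the template inequality \eqref{eq:template} at the strategic center $\heq$ so that the harmonic identity \eqref{eq:harmonic-center} kills the leading sum, bound the two residual bilinear terms via Cauchy--Schwarz/Young plus the Lipschitz estimate \eqref{eq:Lips}, absorb them with the Fenchel lower bound \eqref{eq:Fench-norm}, and telescope using $\norm{\curr-\beforelead}^2=B_{\run-1}$ together with $B_0=0$. The only cosmetic difference is that the paper uses the symmetric Fenchel--Young split (equivalently $\rho=1$ throughout) and keeps $\min_\play \mass_\play\hstr_\play/\learn_\play$ explicit until after the telescoping, whereas you tune $\rho_1=(\nPlayers+2)/\nPlayers$ and invoke the step-size bound earlier; both bookkeepings land on the same constant $2\init[\energy]/[(\nPlayers+2)\max_\play\mass_\play\lips_\play]$.
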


\begin{proof}
By reshuffling the terms of the template inequality \eqref{eq:template}, we get
\begin{subequations}
\label{eq:template-terms}
\begin{align}
\sum_{\play\in\players} \mass_{\play}
	&\braket{\payv_{\play}(\lead)}{\base_{\play} - \leadi}
	\notag\\
	&\leq \curr[\energy] - \next[\energy]
	\notag\\
	&+ \sum_{\play\in\players} \mass_{\play}
		\braket{\payv_{\play}(\lead) - \payfield_{\play}(\curr)}{\nexti - \leadi}
	\label{eq:term2}
	\\
	&+ \sum_{\play\in\players} \mass_{\play} (1 - \coef_{\play})
		\braket{\payfield_{\play}(\curr) - \payfield_{\play}(\beforelead)}{\nexti - \leadi}
	\label{eq:term3}
	\\
	&- \sum_{\play\in\players} \frac{\mass_{\play}}{\learn_{\play}} \fench_{\play}(\nexti,\leadi[\statealt])
		- \sum_{\play\in\players} \frac{\mass_{\play}}{\learn_{\play}} \fench_{\play}(\leadi,\curri[\statealt])
	\eqdot
	\label{eq:term4}
\end{align}
\end{subequations}
We now proceed to bound each term of \eqref{eq:template-terms} individually, paying no heed to make the resulting bounds as tight as possible.

\para{Bounding \eqref{eq:term2}}
By the Fenchel-Young inequality, we have:
\begin{align}
\eqref{eq:term2}
	&\leq \sum_{\play\in\players}
		\frac{\mass_{\play}}{2\lips_{\play}} \dnorm{\payfield_{\play}(\lead) - \payfield_{\play}(\curr)}^{2}
	+ \sum_{\play\in\players}
		\frac{\mass_{\play}\lips_{\play}}{2} \norm{\nexti - \leadi}^{2}
	\notag\\
	&\leq \sum_{\play\in\players}
		\frac{\mass_{\play}\lips_{\play}}{2} \norm{\lead - \curr}^{2}
	+ \sum_{\play\in\players}
		\frac{\mass_{\play}\lips_{\play}}{2} \norm{\nexti - \leadi}^{2}
	\explain{$\payfield_{\play}(\strat)$ is $\lips_{\play}$-Lipschitz}
	\\
	&\leq \tfrac{1}{2} \nPlayers \max\nolimits_{\play} \mass_{\play} \lips_{\play}
		\cdot \norm{\lead - \curr}^{2}
	+ \tfrac{1}{2} \max\nolimits_{\play} \mass_{\play} \lips_{\play}
		\cdot \norm{\next - \lead}^{2}
	\label{eq:term2+}
\end{align}

\para{Bounding \eqref{eq:term3}}
Again, by the Fenchel-Young inequality, we obtain:
\begin{align}
\eqref{eq:term3}
	&\leq \sum_{\play\in\players}
		\frac{\mass_{\play}(1-\coef_{\play})}{2\lips_{\play}} \dnorm{\payfield_{\play}(\curr) - \payfield_{\play}(\beforelead)}^{2}
	+ \sum_{\play\in\players}
		\frac{\mass_{\play}(1-\coef_{\play})\lips_{\play}}{2} \norm{\nexti - \leadi}^{2}
	\notag\\
	&\leq \sum_{\play\in\players}
		\frac{\mass_{\play}(1-\coef_{\play})\lips_{\play}}{2} \norm{\curr - \beforelead}^{2}
	+ \sum_{\play\in\players}
		\frac{\mass_{\play}(1-\coef_{\play})\lips_{\play}}{2} \norm{\nexti - \leadi}^{2}
	\explain{$\payfield_{\play}(\strat)$ is $\lips_{\play}$-Lipschitz}
	\notag\\
	&\leq \tfrac{1}{2} \nPlayers \max\nolimits_{\play} \mass_{\play} \lips_{\play}
		\cdot \norm{\curr - \beforelead}^{2}
	+ \tfrac{1}{2} \max\nolimits_{\play} \mass_{\play} \lips_{\play}
		\cdot \norm{\next - \lead}^{2}
	\label{eq:term3+}
\end{align}

\para{Bounding \eqref{eq:term4}}
Finally, by the lower bound on the Fenchel coupling of \cref{prop:Fench}, we get:
\begin{align}
&- \sum_{\play\in\players} \frac{\mass_{\play}}{\learn_{\play}} \fench_{\play}(\nexti,\leadi[\statealt])
	- \sum_{\play\in\players} \frac{\mass_{\play}}{\learn_{\play}} \fench_{\play}(\leadi,\curri[\statealt])
	\notag\\
	&\qquad
		\leq -\sum_{\play\in\players} \frac{\mass_{\play}\hstr_{\play}}{2\learn_{\play}} \norm{\nexti - \leadi}^{2}
		- \sum_{\play\in\players} \frac{\mass_{\play}\hstr_{\play}}{2\learn_{\play}} \norm{\leadi - \curri}^{2}
	\explain{by \eqref{eq:Fench-norm}}
	\\
	&\qquad
	\leq -\min\nolimits_{\play} \frac{\mass_{\play}\hstr_{\play}}{2\learn_{\play}}
		\cdot \bracks{ \norm{\next - \lead}^{2} + \norm{\lead - \curr}^{2}}
	\label{eq:term4+}
\end{align}
Thus, by folding \cref{eq:term2+,eq:term3+,eq:term4+} back into \eqref{eq:template-terms}, we obtain the bound
\begin{align}
\sum_{\play\in\players} \mass_{\play}
	&\braket{\payv_{\play}(\lead)}{\base_{\play} - \leadi}
	\notag\\
	&\leq \curr[\energy] - \next[\energy]
	\notag\\
	&+ \frac{1}{2} \parens*{
		\nPlayers \max\nolimits_{\play} \mass_{\play} \lips_{\play}
		- \min\nolimits_{\play} \frac{\mass_{\play}\hstr_{\play}}{\learn_{\play}}}
		\, \norm{\lead - \curr}^{2}
	\notag\\
	&+ \frac{1}{2} \parens*{
		2 \max\nolimits_{\play} \mass_{\play} \lips_{\play}
		- \min\nolimits_{\play} \frac{\mass_{\play}\hstr_{\play}}{\learn_{\play}}}
		\, \norm{\next - \lead}^{2}
	\notag\\
	&+ \frac{1}{2} \nPlayers \max\nolimits_{\play} \mass_{\play} \lips_{\play}
		\cdot \norm{\curr - \beforelead}^{2}
	\eqdot
	\label{eq:terms+}
\end{align}
Now, if we instantiate \eqref{eq:terms+} to $\base \gets \heq$ where $\heq$ is the strategic center of $\fingame$, its \ac{LHS} will vanish by \eqref{eq:harmonic-center}.
Hence, summing over all $\run=\running,\nRuns$, \eqref{eq:terms+} ultimately yields
\begin{align}
\label{eq:presum1}
0
	\leq \init[\energy]
	&+ \frac{1}{2} \parens*{
		\nPlayers \max\nolimits_{\play} \mass_{\play} \lips_{\play}
		- \min\nolimits_{\play} \frac{\mass_{\play}\hstr_{\play}}{\learn_{\play}}}
		\sum_{\run=\start}^{\nRuns} \norm{\lead - \curr}^{2}
	\notag\\
	&+ \frac{1}{2} \parens*{
		(\nPlayers+2) \max\nolimits_{\play} \mass_{\play} \lips_{\play}
		- \min\nolimits_{\play} \frac{\mass_{\play}\hstr_{\play}}{\learn_{\play}}}
		\, \sum_{\run=\afterstart}^{\nRuns} \norm{\curr - \beforelead}^{2}
	\notag\\
	&+ \frac{1}{2} \parens*{
		2 \max\nolimits_{\play} \mass_{\play} \lips_{\play}
		- \min\nolimits_{\play} \frac{\mass_{\play}\hstr_{\play}}{\learn_{\play}}}
		\, \norm{\afterlast - \state_{\nRuns+1/2}}^{2}
	\notag\\
	&+ \frac{1}{2} \nPlayers \max\nolimits_{\play} \mass_{\play} \lips_{\play}
		\cdot \norm{\init - \state_{1/2}}^{2}
	\eqdot
\end{align}
Now, by our step-size assumption, we readily obtain
\begin{equation}
(\nPlayers + 2) \max\nolimits_{\play} \mass_{\play} \lips_{\play}
	\leq \frac{1}{2} \min\nolimits_{\play} \frac{\mass_{\play}\hstr_{\play}}{\learn_{\play}} 
\end{equation}
so \eqref{eq:presum1} becomes
\begin{equation}
\label{eq:presum2}
0
	\leq \init[\energy]
	- \frac{1}{4} \min\nolimits_{\play} \frac{\mass_{\play}\hstr_{\play}}{\learn_{\play}}
		\sum_{\run=\start}^{\nRuns} \norm{\lead - \curr}^{2}
	- \frac{1}{4} \min\nolimits_{\play} \frac{\mass_{\play}\hstr_{\play}}{\learn_{\play}}
		\sum_{\run=\afterstart}^{\nRuns} \norm{\curr - \beforelead}^{2}
\end{equation}
where we used our initialization convention $\init = \state_{1/2}$ and the fact that the third line of \eqref{eq:presum1} is negative.
We thus get
\begin{equation}
\sum_{\run=\start}^{\nRuns} \norm{\lead - \curr}^{2}
	+ \sum_{\run=\afterstart}^{\nRuns} \norm{\curr - \beforelead}^{2}
	\leq \frac{4\init[\energy]}{\min_{\play} \mass_{\play}\hstr_{\play}/\learn_{\play}}
\end{equation}
from which our assertion follows immediately.
\end{proof}

\subsection{Proof of \cref{thm:regret}}

We are now in a position to prove the regret guarantees of \eqref{eq:FTRL+}, which we restate below for convenience.

\regret*

\begin{proof}
By a minor reshuffling of terms in \eqref{eq:Fenchi-bound}, we readily get
\begin{align}
\braket{\payfield_{\play}(\lead)}{\bench_{\play} - \leadi}
	&\leq \frac{1}{\learn_{\play}}
		\bracks{\fench_{\play}(\bench_{\play},\curri[\statealt]) - \fench_{\play}(\bench_{\play},\nexti[\statealt])}
	\notag\\
	&+ \vphantom{\frac{1}{\learn_{\play}}} \braket{\payfield_{\play}(\lead) - \payfield_{\play}(\curr)}{\nexti - \leadi}
	\notag\\
	&+ (1-\coef_{\play})
		\braket{\payfield_{\play}(\curr) - \payfield_{\play}(\beforelead)}{\nexti - \leadi}
	\notag\\
	&- \frac{1}{\learn_{\play}} \fench_{\play}(\nexti,\leadi[\statealt])
		- \frac{1}{\learn_{\play}} \fench_{\play}(\leadi,\curri[\statealt])
\end{align}
and hence, by a repeated application of the Fenchel-Young inequality in its Peter-Paul form:
\begin{align}
\braket{\payfield_{\play}(\lead)}{\bench_{\play} - \leadi}
	&\leq \frac{1}{\learn_{\play}}
		\bracks{\fench_{\play}(\bench_{\play},\curri[\statealt]) - \fench_{\play}(\bench_{\play},\nexti[\statealt])}
	\notag\\
	&+ \frac{1}{2\lips_{\play}} \dnorm{\payfield_{\play}(\lead) - \payfield_{\play}(\curr)}^{2}
		+ \frac{\lips_{\play}}{2} \norm{\nexti - \leadi}^{2}
	\notag\\
	&+ \frac{1-\coef_{\play}}{2\lips_{\play}} \dnorm{\payfield_{\play}(\curr) - \payfield_{\play}(\beforelead)}^{2}
		+ \frac{(1-\coef_{\play})\lips_{\play}}{2} \norm{\nexti - \leadi}^{2}
	\notag\\
	&- \frac{\hstr_{\play}}{2\learn_{\play}}
		\bracks*{\norm{\nexti - \leadi}^{2} + \norm{\leadi - \curri}^{2}}
	\eqdot
\end{align}
Hence, by using the Lipschitz continuity of $\payfield_{\play}$, we finally get
\begin{align}
	\braket{\payfield_{\play}(\lead)}{\bench_{\play} - \leadi}
	&\leq \frac{1}{\learn_{\play}}
		\bracks{\fench_{\play}(\bench_{\play},\curri[\statealt]) - \fench_{\play}(\bench_{\play},\nexti[\statealt])}
	\notag\\
	&+ \frac{\lips_{\play}}{2} \norm{\lead - \curr}^{2}
		+ \frac{\lips_{\play}}{2} \norm{\nexti - \leadi}^{2}
	\notag\\
	&+ \frac{\lips_{\play}}{2} \norm{\curr - \beforelead}^{2}
		+ \frac{\lips_{\play}}{2} \norm{\nexti - \leadi}^{2}
	\notag\\
	&- \frac{\hstr_{\play}}{2\learn_{\play}}
		\bracks*{\norm{\nexti - \leadi}^{2} + \norm{\leadi - \curri}^{2}}
\end{align}
Thus, summing over $\run=\running,\nRuns$, and keeping in mind that our assumptions for $\learn_{\play}$ also give $\lips_{\play} < \hstr_{\play}/(2\learn_{\play})$, we finally get
\begin{equation}
\label{eq:reg-bound1}
\sum_{\run=\start}^{\nRuns} \braket{\payfield_{\play}(\lead)}{\bench_{\play} - \leadi}
	\leq \frac{\hrange_{\play}}{\learn_{\play}}
		+ \frac{\lips_{\play}}{2} \bracks*{\sum_{\run=\start}^{\nRuns} \norm{\lead - \curr}^{2}
	+ \sum_{\run=\afterstart}^{\nRuns} \norm{\curr - \beforelead}^{2}}
\end{equation}
where we used the fact that $\fench_{\play}(\bench_{\play},0) = \hreg(\bench) - \min\hreg_{\play} \leq \max\hreg_{\play} - \min\hreg_{\play} \eqdef \hrange_{\play}$.
Our result then follows by invoking \eqref{eq:summable} and using the fact that $\mass_{\play}\lips_{\play} \leq \max_{\playalt} \mass_{\playalt} \lips_{\playalt}$ for all $\play\in\players$.
\end{proof}

\subsection{Proof of \cref{thm:convergence}}

With all this in hand, we are finally in a position to prove our main equilibrium convergence result for \eqref{eq:FTRL+}.
For convenience, we restate the relevant theorem below.

\convergence*

\begin{proof}
Our proof proceeds in a series of steps, as detailed below.

\begin{proofstep}{Convergence of energy levels}
We begin by showing that the energy $\curr[\energy] \equiv \energy(\heq,\curr[\statealt])$ of \eqref{eq:FTRL+} relative to the game's harmonic center converges to some finite value $\energy_{\infty}$.
That this is so follows from a well-known property of quasi-Fejér sequences \citep[Lemma~3.1]{Com01}, whose proof we reproduce below for completeness. 

Indeed, by \cref{eq:terms+,prop:summable}, we have
\begin{equation}
\label{eq:energy-Fejer}
\next[\energy]
	\leq \curr[\energy] + \curr[\eps]
\end{equation}
with $\curr[\eps]$, $\run=\running$ summable.
Letting $\curr[\alt\energy] = \curr[\energy] + \sum_{\runalt=\run}^{\infty} \iter[\eps]$, we further get
\begin{equation}
\next[\alt\energy]
	= \next[\energy] + \sum_{\runalt=\run+1}^{\infty} \iter[\eps]
	\leq \curr[\energy] + \sum_{\runalt=\run}^{\infty} \iter[\eps]
	= \curr[\alt\energy]
\end{equation}
by \eqref{eq:energy-Fejer}, so $\curr[\alt\energy]$ converges.
Since $\curr[\eps]$ is summable, it follows that $\curr[\energy]$ also converges, as claimed.
\endenv
\end{proofstep}

\begin{proofstep}{Boundedness of score differences}
We now proceed to show that the normalized score differences $\curr[\scorediff] = \PI(\curr[\statealt])$ where $\PI$ is the normalization operator \eqref{eq:Z-quotient} are bounded.
Indeed, by the definition of $\curr[\energy] = \energy(\heq,\curr[\statealt]) = \sum_{\play\in\players} (\mass_{\play}/\learn_{\play}) \fench_{\play}(\heq_{\play},\curri[\statealt])$, it follows that $\sup_{\run} \fench_{\play}(\heq_{\play},\curri[\statealt]) < \infty$ for all $\play\in\players$.
Thus, by \cref{lem:bounded-score-differences}, we conclude that each component of $\curr[\statealtalt]$ is bounded, so $\statealtalt$ is itself bounded.
\endenv
\end{proofstep}

\begin{proofstep}{Convergent subsequences of \eqref{eq:FTRL+}}
\label{step:subseq}
We now observe that \eqref{eq:FTRL+} enjoys the following series of properties:
\begin{enumerate}
\item
The sequence $\curr[\statealtalt] = \PI(\curr[\statealt])$ admits a subsequence $\statealtalt_{\iter[\run]}$ that converges to some limit point $\scorediff_{\infty} \in \scorediffs$ (a consequence of the fact that $\curr[\statealtalt]$ is bounded, see above).
\item
In turn, this implies that the subsequence $\state_{\iter[\run]} = \mirror(\statealt_{\iter[\run]}) = \effmirror(\statealtalt_{\iter[\run]})$ converges to some $\strat_{\infty} \in \strats$.
\item
Since the sequences $\curr[A] = \norm{\lead-\curr}^{2}$ and $\curr[B] = \norm{\next-\lead}^{2}$ are both summable (by \cref{prop:summable}), we further have $\lim_{\runalt\to\infty} \state_{\iter[\run] + 1/2} = \strat_{\infty}$ and, more generally, by a straightforward induction:
\begin{equation}
\label{eq:limitpoints}
\lim_{\runalt\to\infty} \state_{\iter[\run]+\offset}
	= \strat_{\infty}
	\quad
	\text{for any (fixed) $\offset = \halfrunning$}
\end{equation}
\item
Likewise, for the sequence of payoff signals $\curr[\signal]$, we have
\begin{equation}
\signal_{\play,\iter[\run]}
	= \coef_{\play} \,  \payfield_{\play}(\state_{\iter[\run]})
		+ (1 - \coef_{\play}) \, \payfield_{\play}(\state_{\iter[\run]-1/2})
	\xrightarrow[\runalt\to\infty]{} \coef_{\play} \payfield_{\play}(\strat_{\infty})
		+ (1 - \coef_{\play}) \payfield_{\play}(\strat_{\infty})
	= \payfield_{\play}(\strat_{\infty})
\end{equation}
so $\lim_{\runalt\to\infty} \payfield(\state_{\iter[\run]}) = \payfield(\strat_{\infty})$.
\endenv
\end{enumerate}
\end{proofstep}

\begin{proofstep}{Variational characterization of limit points}
We now proceed to show that $\payfield_{\infty} \defeq \payfield(\strat_{\infty})$ belongs to the polar cone
\(
\pcone(\strat_{\infty})
	= \setdef{\dvec\in\dpoints}{\braket{\dvec}{\strat - \strat_{\infty}} \leq 0 \; \text{ for all $\strat\in\strats$}}
\)
to $\strats$ at $\strat_{\infty}$.
To do so, suppose that \eqref{eq:FTRL+} performs $\offset$ steps from $\iter[\run]$ so
\begin{equation}
\statealt_{\iter[\run] + \offset}
	= \statealt_{\iter[\run]} + \learn \sum_{j=\start}^{\offset} \signal_{\iter[\run] + 1/2}
\end{equation}
where, to ease notation, we have made the simplifying assumption that $\learn_{\play} = \learn$ for all $\play\in\players$.%
\footnote{This assumption does not affect the core of our arguments, but it greatly streamlines the presentation.}
Then, by invoking \cref{lem:subsel-var-new} with $\dpoint \gets \statealt_{\iter[\run]}$ and $\new[\dpoint] \gets \statealt_{\iter[\run] + \offset}$, we obtain
\begin{align}
\braket*{\learn \sum_{j=\start}^{\offset} \signal_{\iter[\run] + 1/2}}{\base - \state_{\iter[\run] + \offset}}
	&\leq \braket{\subsel\hreg(\state_{\iter[\run] + \offset}) - \statealt_{\iter[\run]}}{\base - \state_{\iter[\run] + \offset}}
	\notag\\
	&= \braket{\subsel\hreg(\state_{\iter[\run] + \offset}) - \statealtalt_{\iter[\run]}}{\base - \state_{\iter[\run] + \offset}}
\end{align}
where, in the second line, we have used the fact that $\braket{\dpoint}{\stratalt - \strat} = \braket{\PI(\dpoint)}{\stratalt - \strat}$ for all $\strat,\stratalt\in\strats$ and all $\dpoint\in\dpoints$.
Thus, letting $\runalt\to\infty$, we get from \cref{step:subseq} and the continuity of $\subsel\hreg$ that
\begin{equation}
\learn\offset\braket{\payfield(\strat_{\infty})}{\strat - \strat_{\infty}}
	\leq \braket{\subsel\hreg(\strat_{\infty}) - \scorediff_{\infty}}{\strat - \strat_{\infty}}
\end{equation}
for all $\offset = \running$ and all $\strat\in\strats$.%
\footnote{The fact that $\strat_{\infty}\in\dom\subd\hreg$ is a consequence of \cref{lem:bounded-score-differences} and the convergence of $\curr[\energy]$ to $\energy_{\infty} < \infty$.}
Since $\offset$ can be chosen arbitrarily, we must have $\braket{\payfield(\strat_{\infty})}{\strat - \strat_{\infty}} \leq 0$ for all $\strat\in\strats$.
Hence, by the variational characterization \eqref{eq:VI} of \aclp{NE}, we conclude that $\strat_{\infty}$ must be itself a \acl{NE} of $\fingame$, and our proof is complete.
\end{proofstep}
\end{proof}

\bibliographystyle{icml}
\bibliography{bibtex/IEEEabrv,bibtex/Bibliography-PM,bibtex/Bibliography-DL}

\end{document}